\newtheorem{definitn}{Definition}
\newtheorem{thm}{Theorem}
\newtheorem{remrk}{Remark}
\newtheorem{lemma}{Lemma}
\newtheorem{prop}{Proposition}
\begin{document}

\title{Weighted Superimposed Codes and Constrained Integer Compressed Sensing}

\author{Wei Dai and Olgica Milenkovic \\
{\small Dept.\ of Electrical and Computer Engineering }{\normalsize{} }\\
{\normalsize{} }{\small University of Illinois, Urbana-Champaign}{\normalsize{} }}

\maketitle
\begin{abstract}
We introduce a new family of codes, termed weighted superimposed codes
(WSCs). This family generalizes the class of Euclidean superimposed
codes (ESCs), used in multiuser identification systems. WSCs allow
for discriminating all bounded, integer-valued linear combinations
of real-valued codewords that satisfy prescribed norm and non-negativity
constraints. By design, WSCs are inherently noise tolerant. Therefore,
these codes can be seen as special instances of robust compressed
sensing schemes. The main results of the paper are lower and upper
bounds on the largest achievable code rates of several classes of
WSCs. These bounds suggest that with the codeword and weighting vector
constraints at hand, one can improve the code rates achievable by
standard compressive sensing. 
\end{abstract}
\renewcommand{\thefootnote}{\fnsymbol{footnote}} \footnotetext[1]{This work is supported by the NSF Grant CCF 0644427, the NSF Career Award, and the DARPA Young Faculty Award of the second author. Parts of the results were presented at the CCIS'2008 ant ITW'2008 conferences.} \renewcommand{\thefootnote}{\arabic{footnote}} \setcounter{footnote}{0}

\section{Introduction}

Superimposed codes (SCs) and designs were introduced by Kautz and
Singleton \cite{Kautz_IT1964_Superimposed}, for the purpose of studying
database retrieval and group testing problems. In their original formulation,
superimposed designs were defined as arrays of binary codewords with
the property that bitwise OR functions of all sufficiently small collections
of codewords are distinguishable. Superimposed designs can therefore
be viewed as binary {}``parity-check'' matrices for which syndromes
represent bitwise OR, rather than XOR, functions of selected sets
of columns.

The notion of binary superimposed codes was further generalized by
prescribing a distance constraint on the OR evaluations of subsets
of columns, and by extending the fields in which the codeword symbols
lie \cite{Ericson_IT1988_SuperImposed_Codes_Rn}. In the latter case,
Ericson and Györfi introduced Euclidean superimposed codes (ESCs),
for which the symbol field is $\mathbb{R}$, for which the OR function
is replaced by real addition, and for which all sums of less than
$K$ codewords are required to have pairwise Euclidean distance at
least $d$. The best known upper bound on the size of Euclidean superimposed
codes was derived by Füredi and Ruszinko \cite{Furedi_IT1999_ub_rate_superimposed_codes},
who used a combination of sphere packing arguments and probabilistic
concentration formulas to prove their result.

On the other hand, compressed sensing (CS) is a new sampling method
usually applied to $K$-\emph{sparse signals}, i.e.\ signals embedded
in an $N$-dimensional space that can be represented by only $K\ll N$
significant coefficients~\cite{Donoho_IT2006_CompressedSensing,Candes_Tao_IT2006_Robust_Uncertainty_Principles,Candes_Tao_ApplMath2006_Stable_Signal_Recovery}.
Alternatively, when the signal is projected onto a properly chosen
basis of the transform space, its accurate representation relies only
on a small number of coefficients. Encoding of a $K$-sparse discrete-time
signal $\textbf{x}$ of dimension $N$ is accomplished by computing
a measurement vector $\textbf{y}$ that consists of $m\ll N$ linear
projections, i.e.\ $\textbf{y}=\Phi\textbf{x}$. Here, $\Phi$ represents
an $m\times N$ matrix, usually over the field of real numbers. Consequently,
the measured vector represents a linear combination of columns of
the matrix $\Phi$, with weights prescribed by the nonzero entries
of the vector $\textbf{x}$. Although the reconstruction of the signal
$\textbf{x}\in\mathbb{R}^{N}$ from the (possibly noisy) projections
is an ill-posed problem, the prior knowledge of signal sparsity allows
for accurate recovery of $\textbf{x}$.

The connection between error-correcting coding theory and compressed
sensing was investigated by Cand{è}s and Tao in \cite{Candes_Tao_IT2006_Near_Optimal_Signal_Recovery},
and remarked upon in \cite{Cormode_2003}. In the former work, the
authors studied random codes over the real numbers, the noisy observations
of which can be decoded using linear programming techniques. As with
the case of compressed sensing, the performance guarantees of this
coding scheme are probabilistic, and the $K$-sparse signal is assumed
to lie in $\mathbb{R}^{N}$.

We propose to study a new class of codes, termed \emph{weighted superimposed
codes} (WSCs), which provide a link between SCs and CS matrices. As
with the case of the former two entities, WSCs are defined over the
field of real numbers. But unlike ESCs, for which the sparse signal
$\textbf{x}$ consists of zeros and ones only, and unlike CS, for
which $\textbf{x}$ is assumed to belong to $\mathbb{R}^{N}$, in
WSCs the vector $\textbf{x}$ is drawn from $B^{N}$, where $B$ denotes
a \emph{bounded, symmetric set of integers}. The motivation for studying
WSCs comes from the fact that in many applications, the alphabet of
the sparse signal can be modeled as a finite set of integers.

Codewords from the family of WSCs can be designed to obey prescribed
norm and non-negativity constraints. The restriction of the weighting
coefficients to a bounded set of integers ensures reconstruction robustness
in the presence of noise - i.e., all weighted sums of at most $K$
codewords can be chosen at {}``minimum distance'' $d$ from each
other. This minimum distance property provides deterministic performance
guarantees, which CS techniques usually lack. Another benefit of the
input alphabet restriction is the potential to reduce the decoding
complexity compared to that of CS reconstruction techniques. This
research problem was addressed by the authors in \cite{Dai2008_ITW_Weighted_Euclidean_Superimposed_Code,Dai2008_ITA_constrained_CS,Dai2008_CISS_sparse_superimposed_codes},
but is beyond the scope of this paper.

The central problem of this paper is to characterize the rate region
for which a WSC with certain parameters exists. The main results of
this work include generalized sphere packing upper bounds and random
coding lower bounds on the rates of several WSC families. The upper
and lower bounds differ only by a constant, and therefore imply that
the superposition constraints are ensured whenever $m=O\left(K\log N/\log K\right)$.
In the language of CS theory, this result suggests that the number
of required signal measurements is less than the standard $O\left(K\log\left(N/K\right)\right)$,
required for discriminating real-valued linear combinations of codewords.
This reduction in the required number of measurements (codelength)
can be seen as a result of restricting the input alphabet of the sparse
signal.

The paper is organized as follows. Section~\ref{sec:definitions}
introduces the relevant terminology and definitions. Section~\ref{sec:cardinality-WSCs}
contains the main results of the paper -- upper and lower bounds on
the size of WSCs. The proofs of the rate bounds are presented in Sections~\ref{sec:Upper-Bound}
and~\ref{sec:lb-ngL1WSC}. Concluding remarks are given in Section~\ref{sec:Conclusion}.

\section{Motivating Applications}

We describe next two applications - one arising in wireless communication,
the other in bioengineering - motivating the study of WSCs.

\emph{The adder channel and signature codes}: One common application
of ESCs is for signaling over multi-access channels. For a given set
of $k\leq K$ active users in the channel, the input to the receiver
$\textbf{y}$ equals the sum of the signals (signatures) $\textbf{x}_{i_{j}},\, j=1,\ldots,k,$
of the $k$ active users, i.e.\ $\mathbf{y}=\sum_{j=1}^{k}\mathbf{x}_{i_{j}}$.
The signatures are only used for identification purposes, and in order
to minimize energy consumption, all users are assigned unit energy
\cite{Ericson_IT1988_SuperImposed_Codes_Rn,Furedi_IT1999_ub_rate_superimposed_codes}.
Now, consider the case that in addition to identifying their presence,
active users also have to convey some limited information to the receiver
by adapting their transmission power. The received signal can in this
case be represented by a weighted sum of the signatures of active
users, i.e., $\mathbf{y}=\sum_{j=1}^{k}\sqrt{p_{i_{j}}}\mathbf{x}_{i_{j}}.$
The codebook used in this scheme represents a special form of WSC,
termed Weighted Euclidean Superimposed Codes (WESCs); these codes
are formally defined in Section \ref{sec:definitions}.

\emph{Compressive sensing microarrays}: A microarray is a bioengineering
device used for measuring the level of certain molecules, such as
RNA (ribonucleic acid) sequences, representing the joint expression
profile of thousands of genes. A microarray consist of thousands of
microscopic spots of DNA sequences, called probes. The complementary
DNA (cDNA) sequences of RNA molecules being measured are labeled with
fluorescent tags, and such units are termed targets. If a target sequence
has a significant homology with a probe sequence on the microarray,
the target cDNA and probe DNA molecules will bind or {}``hybridize''
so as to form a stable structure. As a result, upon exposure to laser
light of the appropriate wavelength, the microarray spots with large
hybridization activity will be illuminated. The specific illumination
pattern and intensities of microarray spots can be used to infer the
concentration of RNA molecules. In traditional microarray design,
each spot of probes is a unique identifier of only one target molecule.
In our recent work \cite{Sheikh2007_CAMSAP_microarray,Dai2008_BIBM_microarray},
we proposed the concept of \emph{compressive sensing microarrays}
(CSM), for which each probe has the potential to hybridize with several
different targets. It uses the observation that, although the number
of potential target RNA types is large, not all of them are expected
to be present in a significant concentration at all observed times.

Mathematically, a microarray is represented by a measurement matrix,
with an entry in the $i^{\mathrm{th}}$ row and the $j^{\mathrm{th}}$
column corresponding to the hybridization probability between the
$i^{\mathrm{th}}$ probe and the $j^{\mathrm{th}}$ target. In this
case, all the entries in the measurement matrix are nonnegative real
numbers, and all the columns of the measurement matrix are expected
to have $l_{1}$-norms equal to one. In microarray experiments, the
input vector $\mathbf{x}$ has entries that correspond to integer
multiples of the smallest detectable concentration of target cDNA
molecules. Since the number of different target cDNA types in a typical
test sample is small compared to the number of all potential types,
one can assume that the vector $\mathbf{x}$ is sparse. Furthermore,
the number of RNA molecules in a cell at any point in time is upper
bounded due to energy constraints, and due to intracellular space
limitations. Hence, the integer-valued entries of $\mathbf{x}$ are
assumed to have bounded magnitudes and to be relatively small compared
to the number of different RNA types. With the above considerations,
the measurement matrix of a CSM can be described by nonnegative $l_{1}$-WSCs,
formally defined in Section \ref{sec:definitions}.

\section{\label{sec:definitions}Definitions and Terminology}

Throughout the paper, we use the following notation and definitions.

A code $\mathcal{C}$ is a finite set of $N$ codewords (vectors)
$\mathbf{v}_{i}\in\mathbb{R}^{m\times1}$, $i=1,2,\cdots,N$. The
code $\mathcal{C}$ is specified by its \emph{codeword matrix (codebook)}
$\mathbf{C}\in\mathbb{R}^{m\times N}$, obtained by arranging the
codewords in columns of the matrix.

For two given positive integers, $t$ and $K$, let \[
B_{t}=\left[-t,t\right]=\left\{ -t,-t+1,\cdots,t-1,t\right\} \subset\mathbb{Z}\]
 be a symmetric, bounded set of integers, and let \[
\mathcal{B}_{K}=\left\{ \mathbf{b}\in B_{t}^{N}:\;\left\Vert \mathbf{b}\right\Vert _{0}\le K\right\} \]
 denote the $l_{0}$ ball of radius $K$, with $\left\Vert \mathbf{b}\right\Vert _{0}$
representing the number of nonzero components in the vector $\mathbf{b}$
(i.e., the support size of the vector). We formally define WESCs as
follows.

\vspace{0.05in}

\begin{definitn}
\label{def:WESC}A code $\mathcal{C}$ is said to be a WESC with parameters
$\left(N,m,K,d,\eta,B_{t}\right)$ for some $d\in\left(0,\eta\right)$,
if 
\begin{enumerate}
\item $\mathbf{C}\in\mathbb{R}^{m\times N}$, 
\item $\left\Vert \mathbf{v}_{i}\right\Vert _{2}=\eta,$ for all $i=1,\cdots,N$,
and, 
\item if the following minimum distance property holds: \[
d_{E}\left(\mathcal{C},K,B_{t}\right):=\underset{\mathbf{b}_{1}\ne\mathbf{b}_{2}}{\min}\left\Vert \mathbf{C}\mathbf{b}_{1}-\mathbf{C}\mathbf{b}_{2}\right\Vert _{2}\ge d\]
 for all $\mathbf{b}_{1},\mathbf{b}_{2}\in\mathcal{B}_{K}$. 
\end{enumerate}
\end{definitn}
\vspace{0.05in}

Henceforth, we focus our attention on WESCs with $\eta=1$, and denote
the set of parameters of interest by $\left(N,m,K,d,B_{t}\right)$.

The definition above can be extended to hold for other normed spaces.

\vspace{0.05in}

\begin{definitn}
\label{def:LpWSC}A code $\mathcal{C}$ is said to be an $l_{p}$-WSC
with parameters $\left(N,m,K,d,B_{t}\right)$ if 
\begin{enumerate}
\item $\mathbf{C}\in\mathbb{R}^{m\times N}$, 
\item $\left\Vert \mathbf{v}_{i}\right\Vert _{l_{p}}=1$, for all $i=1,\cdots,N$,
and, 
\item if the following minimum distance property holds: \[
d_{p}\left(\mathcal{C},K,B_{t}\right):=\underset{\mathbf{b}_{1}\ne\mathbf{b}_{2}}{\min}\left\Vert \mathbf{C}\mathbf{b}_{1}-\mathbf{C}\mathbf{b}_{2}\right\Vert _{l_{p}}\ge d\]
 for all $\mathbf{b}_{1},\mathbf{b}_{2}\in\mathcal{B}_{K}$. 
\end{enumerate}
\end{definitn}
\vspace{0.05in}

Note that specializing $p=2$ reproduces the definition of a WESC.

Motivated by the practical applications described in the previous
section, we also define the class of \emph{nonnegative} $l_{p}$-WSC.
\vspace{0.05in}

\begin{definitn}
\label{def:mgLpWSC}A code $\mathcal{C}$ is said to be a nonnegative
$l_{p}$-WSC with parameters $\left(N,m,K,d,B_{t}\right)$ if it is
an $l_{p}$-WSC such that all entries of $\mathbf{C}$ are nonnegative. 
\end{definitn}
\vspace{0.05in}

Given the parameters $m,$ $K,$ $d$ and $B_{t}$, let $N\left(m,K,d,B_{t}\right)$
denote the maximum size of a WSC, \[
N\left(m,K,d,B_{t}\right):=\max\left\{ N:\;\mathcal{C}\left(N,m,K,d,B_{t}\right)\ne\phi\right\} .\]
 The \emph{asymptotic code exponent} is defined as \[
R\left(K,d,B_{t}\right):=\underset{m\rightarrow\infty}{\lim\;\sup}\;\frac{\log N\left(m,K,d,B_{t}\right)}{m}.\]

We are interested in quantifying the asymptotic code exponent of WSCs,
and in particular, WESCs and nonnegative WSCs with $p=1$. Results
pertaining to these classes of codes are summarized in the next section.

\section{\label{sec:cardinality-WSCs}On the Cardinality of WSC Families}

The central problem of this paper is to determine the existence of
a superimposed code with certain parameters. In \cite{Ericson_IT1988_SuperImposed_Codes_Rn,Furedi_IT1999_ub_rate_superimposed_codes},
it was shown that for ESCs, for which the codeword alphabet $B_{t}$
is replaced by the asymmetric set $\left\{ 0,1\right\} $, one has
\[
\frac{\log K}{4K}\left(1+o_{d}\left(1\right)\right)\le R\left(K,d,\left\{ 1\right\} \right)\le\frac{\log K}{2K}\left(1+o_{d}\left(1\right)\right),\]
 where $o_{d}(1)$ converges to zero as $K\to\infty$.

The main result of the paper is the upper and lower bounds on the
asymptotic code exponents of several WSC families. For WESCs, introducing
weighting coefficients larger than one does not change the asymptotic
order of the code exponent.

\vspace{0.05in}

\begin{thm}
\label{thm:WESC-code-exponent} Let $t$ be a fixed parameter. For
sufficiently large $K$, the asymptotic code exponent of WESCs can
be bounded as \begin{equation}
\frac{\log K}{4K}\left(1+o\left(1\right)\right)\le R\left(K,d,B_{t}\right)\le\frac{\log K}{2K}\left(1+o_{t,d}\left(1\right)\right)\label{eq:WESC-code-exponent}\end{equation}
 where $o\left(1\right)\rightarrow0$ and $o_{t,d}\left(1\right)\rightarrow0$
as $K\rightarrow\infty$. The exact expressions of the $o\left(1\right)$
and $o_{t,d}\left(1\right)$ terms are given in Equations (\ref{eq:o_lb_Euclidean_main})
and (\ref{eq:o_ub_Euclidean}), respectively. 
\end{thm}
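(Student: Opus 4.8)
The plan is to prove the two bounds by different methods: the lower bound by a random-coding argument and the upper bound by a sphere-packing argument combined with a second-moment concentration estimate, in the spirit of F\"uredi--Ruszink\'o, with the novelty in both cases being to track the dependence on the weight alphabet $B_t$.

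\textbf{Lower bound.} I would take the codewords $\mathbf v_1,\dots,\mathbf v_N$ to be independent and uniform on the unit sphere $S^{m-1}\subset\mathbb R^m$, so that conditions 1--2 of Definition~\ref{def:WESC} hold by construction. Since any distinct $\mathbf b_1,\mathbf b_2\in\mathcal B_K$ have $\mathbf a:=\mathbf b_1-\mathbf b_2$ with $\|\mathbf a\|_0\le 2K$ and $\|\mathbf a\|_\infty\le 2t$, it is enough to ensure $\|\mathbf C\mathbf a\|_2\ge d$ for every nonzero integer vector $\mathbf a$ with $\|\mathbf a\|_0\le 2K$ and $\|\mathbf a\|_\infty\le 2t$; for such $\mathbf a$ the integrality of the nonzero entries gives $\|\mathbf a\|_2^2\ge\|\mathbf a\|_0$. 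The crucial ingredient is a small-ball estimate of the form $\Pr(\|\mathbf C\mathbf a\|_2<d)\le\bigl(Cd^2/\|\mathbf a\|_2^2\bigr)^{m/2}$ for a universal constant $C$ and $m$ large, which one obtains either by comparing the $\mathbf v_i$ with normalized i.i.d.\ Gaussians and invoking a $\chi^2_m$ lower-tail bound, or directly by conditioning on all but one column and bounding the resulting spherical-cap measure. A union bound over the $\sum_{s=1}^{2K}\binom Ns(4t)^s$ admissible $\mathbf a$ then shows that a random code fails with probability at most $\sum_{s=1}^{2K}\binom Ns(4t)^s\bigl(Cd^2/s\bigr)^{m/2}$; forcing each summand below $1/(2K)$ amounts to $\log N<\frac{m}{2s}\log\frac{s}{Cd^2}+\log\frac{s}{4et}+O(1)$ for all $1\le s\le 2K$. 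For $m$ large the $\Theta(m)$ term dominates and is eventually decreasing in $s$, so the binding case is $s=2K$, which yields $R(K,d,B_t)\ge\frac1{4K}\log\frac{2K}{Cd^2}=\frac{\log K}{4K}\bigl(1+o(1)\bigr)$.

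\textbf{Upper bound.} Given any WESC with parameters $(N,m,K,d,B_t)$, consider the $\binom NK(2t)^K$ vectors $\mathbf C\mathbf b$ with $\mathbf b\in B_t^N$, $\|\mathbf b\|_0=K$; all such $\mathbf b$ lie in $\mathcal B_K$, so these points are pairwise at distance $\ge d$ and the radius-$d/2$ balls around them are disjoint. Drawing $\mathbf b$ uniformly from this set, the symmetry of $B_t\setminus\{0\}$ gives $\mathbb E[\mathbf b]=\mathbf 0$ and $\mathrm{Cov}(\mathbf b)=\frac KN\cdot\frac{(t+1)(2t+1)}6\,I$; hence, and here the normalization $\|\mathbf v_i\|_2=1$ enters decisively, $\mathbb E\|\mathbf C\mathbf b\|_2^2=\mathrm{tr}\bigl(\mathbf C\,\mathrm{Cov}(\mathbf b)\,\mathbf C^{\top}\bigr)=\frac KN\cdot\frac{(t+1)(2t+1)}6\sum_{i=1}^N\|\mathbf v_i\|_2^2=\frac{K(t+1)(2t+1)}6$. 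By Markov's inequality at least half of the points $\mathbf C\mathbf b$ lie in the ball of radius $r=\sqrt{K(t+1)(2t+1)/3}$, and comparing volumes with the enclosing radius-$(r+d/2)$ ball gives $\frac12\binom NK(2t)^K\le\bigl((2r+d)/d\bigr)^m\le\bigl(3K(t+1)(2t+1)/d^2\bigr)^{m/2}$ for large $K$. Using $\binom NK\ge(N/K)^K$, taking logarithms, dividing by $Km$ and letting $m\to\infty$ produces $R(K,d,B_t)\le\frac1{2K}\log\frac{3K(t+1)(2t+1)}{d^2}=\frac{\log K}{2K}\bigl(1+o_{t,d}(1)\bigr)$; the symmetry of the weights is exactly what confines the sums to a ball of radius $O(\sqrt K)$ rather than $O(K)$, lowering the exponent from $m$ to $m/2$.

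\textbf{Main obstacle.} The delicate point is the small-ball bound $\Pr(\|\mathbf C\mathbf a\|_2<d)\le(Cd^2/\|\mathbf a\|_2^2)^{m/2}$ in the lower bound, which must hold uniformly over all supports $1\le s\le 2K$ and all weight patterns in $B_{2t}$; the extreme small-support cases (e.g.\ $s=1$, where $\|\mathbf C\mathbf a\|_2=|a_1|\ge1>d$ deterministically) need separate treatment, and the cap/$\chi^2$ estimate has to be carried out precisely enough to extract the explicit $o(1)$ and $o_{t,d}(1)$ expressions of (\ref{eq:o_lb_Euclidean_main}) and (\ref{eq:o_ub_Euclidean}). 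Once the second-moment identity $\mathbb E\|\mathbf C\mathbf b\|_2^2=\tfrac{K(t+1)(2t+1)}6$ is in place, the remaining upper-bound computation is routine.
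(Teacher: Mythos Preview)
Your proposal is correct and follows essentially the same route as the paper: for the upper bound you reproduce the F\"uredi--Ruszink\'o second-moment argument (the paper averages over $\|\mathbf b\|_0\le K$ rather than $=K$ and applies Markov to $\xi$ rather than $\xi^2$, but the computation and the resulting $\frac{\log K}{2K}$ exponent are the same), and for the lower bound your uniform-on-sphere codewords are exactly normalized Gaussian columns, with the small-ball estimate $(Cd^2/s)^{m/2}$ being precisely the Chernoff/$\chi^2$ lower-tail bound the paper derives in Theorem~\ref{thm:L2-superposition-1} after conditioning on the norm-concentration event $E_1$. Your identification of the binding case $s=2K$ and of the norm-concentration step as the main technical wrinkle matches the paper's treatment.
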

\vspace{0.05in}

\begin{remrk}
The derivations leading to the expressions in Theorem~\ref{thm:WESC-code-exponent}
show that one can also bound the code exponent in a non-asymptotic
regime. Unfortunately, those expressions are too complicated for practical
use. Nevertheless, this observation implies that the results pertaining
to WESC are applicable for the same parameter regions as those arising
in the context of CS theory. 
\end{remrk}
\vspace{0.05in}

\begin{remrk}
The parameter $t$ can also be allowed to increase with $K$. For
WESCs, the value of $t$ does not affect the lower bound on the asymptotic
code exponent, while the upper bound is valid as long as $t=o\left(K\right)$. 
\end{remrk}
\vspace{0.05in}

For clarity of exposition, the proof of the lower bound is postponed
to Section \ref{sec:lb-Euclidean}, while the proof of the upper bound,
along with the proofs of the upper bounds for other WSC families,
are presented in Section \ref{sec:Upper-Bound}. We briefly sketch
the main steps of the proofs in the discussion that follows.

The proof of the upper bound is based on the sphere packing argument.
The classical sphere packing argument is valid for all WSC families
discussed in this paper. The leading term of the resulting upper bound
is $\left(\log K\right)/K$. This result can be improved when restricting
one's attention to the Euclidean norm. The key idea is to show that
most points of the form $\mathbf{Cb}$ lie in a ball of radius significantly
smaller than the one derived by the classic sphere packing argument.
The leading term of the upper bound can in this case be improved from
$\left(\log K\right)/K$ to $\left(\log K\right)/\left(2K\right)$.

The lower bound in Theorem \ref{thm:WESC-code-exponent} is proved
by random coding arguments. We first randomly generate a family of
WESCs from the Gaussian ensemble, with the code rates satisfying \[
\underset{\left(m,N\right)\rightarrow\infty}{\lim}\frac{\log N}{m}<\frac{\log K}{4K}\left(1+o\left(1\right)\right).\]
 Then we prove that these randomly generated codebooks satisfy \[
d_{E}\left(\mathcal{C},K\right)\ge d\]
 with high probability. This fact implies that the asymptotic code
exponent \begin{align*}
R\left(K,d,B_{t}\right) & =\underset{m\rightarrow\infty}{\lim\;\sup}\;\frac{\log N\left(m,K,d,B_{t}\right)}{m}\\
 & \ge\frac{\log K}{4K}\left(1+o\left(1\right)\right).\end{align*}

We also analyze two more classes of WSCs: the class of general $l_{1}$-WSCs
and the family of nonnegative $l_{1}$-WSCs. The characterization
of the asymptotic code rates of these codes is given in Theorems~\ref{thm:L1WSC-code-exponent}
and~\ref{thm:ngL1WSC-code-exponent}, respectively.

\vspace{0.05in}

\begin{thm}
\label{thm:L1WSC-code-exponent} For a fixed value of the parameter
$t$ and sufficiently large $K$, the asymptotic code exponent of
$l_{1}$-WSCs is bounded as \begin{equation}
\frac{\log K}{4K}\left(1+o\left(1\right)\right)\le R\left(K,d,B_{t}\right)\le\frac{\log K}{K}\left(1+o_{t,d}\left(1\right)\right),\label{eq:L1WSC-code-exponent}\end{equation}
 where the expressions for $o\left(1\right)$ and $o_{t,d}\left(1\right)$
are given in Equations~(\ref{eq:o_lb_L1_WSC-2}) and~(\ref{eq:o_ub_WSCs}),
respectively. 
\end{thm}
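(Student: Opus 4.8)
The plan is to establish the two inequalities by independent arguments: the upper bound by a classical sphere-packing (volume) estimate in the $l_{1}$ metric, and the lower bound by a Gaussian random-coding argument paralleling the proof of Theorem~\ref{thm:WESC-code-exponent}. For the upper bound, suppose an $l_{1}$-WSC with parameters $\left(N,m,K,d,B_{t}\right)$ exists. Since $\left\Vert \mathbf{v}_{i}\right\Vert _{l_{1}}=1$ for every $i$, each $\mathbf{b}\in\mathcal{B}_{K}$ satisfies $\left\Vert \mathbf{C}\mathbf{b}\right\Vert _{l_{1}}\le\sum_{i\in\mathrm{supp}\left(\mathbf{b}\right)}\left|b_{i}\right|\,\left\Vert \mathbf{v}_{i}\right\Vert _{l_{1}}\le tK$, so every point $\mathbf{C}\mathbf{b}$ lies in the $l_{1}$-ball of radius $tK$ in $\mathbb{R}^{m}$. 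By the minimum-distance property the points $\left\{ \mathbf{C}\mathbf{b}:\mathbf{b}\in\mathcal{B}_{K}\right\}$ are distinct and pairwise at $l_{1}$-distance at least $d$, hence the $l_{1}$-balls of radius $d/2$ about them are disjoint and contained in the $l_{1}$-ball of radius $tK+d/2$. Comparing volumes of $l_{1}$-balls (which scale as the $m$-th power of the radius) gives $\left|\mathcal{B}_{K}\right|\le\left(2tK/d+1\right)^{m}$, while $\left|\mathcal{B}_{K}\right|\ge\binom{N}{K}\left(2t\right)^{K}\ge\left(2tN/K\right)^{K}$. Taking logarithms and letting $m\to\infty$ yields $R\left(K,d,B_{t}\right)\le\frac{1}{K}\log\left(2tK/d+1\right)=\frac{\log K}{K}\left(1+o_{t,d}\left(1\right)\right)$; carrying the lower-order terms produces the exact expression~(\ref{eq:o_ub_WSCs}).

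For the lower bound, I would draw the columns of $\mathbf{C}$ as i.i.d.\ centered Gaussian vectors whose per-coordinate variance is chosen so that $\mathbb{E}\left\Vert \mathbf{v}_{i}\right\Vert _{l_{1}}=1$, and then rescale each column to have exactly unit $l_{1}$-norm; the rescaling factors concentrate sharply around $1$ and their effect is folded into the $o\left(1\right)$ term. For a fixed nonzero difference $\mathbf{e}=\mathbf{b}_{1}-\mathbf{b}_{2}$ (so $\left\Vert \mathbf{e}\right\Vert _{0}\le 2K$ and every entry lies in $\left\{ -2t,\dots,2t\right\}$), the vector $\mathbf{C}\mathbf{e}$ has i.i.d.\ Gaussian coordinates with standard deviation proportional to $\left\Vert \mathbf{e}\right\Vert _{2}/m$, and $\left\Vert \mathbf{C}\mathbf{e}\right\Vert _{l_{1}}$ concentrates around $\left\Vert \mathbf{e}\right\Vert _{2}$. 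A Chernoff bound on this sum of $m$ i.i.d.\ half-Gaussians then gives a small-ball estimate $\Pr\left\{ \left\Vert \mathbf{C}\mathbf{e}\right\Vert _{l_{1}}<d\right\} \le e^{-m\rho\left(\mathbf{e}\right)}$ with exponent $\rho\left(\mathbf{e}\right)$ of order $\log\left(\left\Vert \mathbf{e}\right\Vert _{2}/d\right)$; since $\left\Vert \mathbf{e}\right\Vert _{2}\ge\sqrt{\left\Vert \mathbf{e}\right\Vert _{0}}$, the hardest differences are those of support $\approx 2K$, for which $\rho\left(\mathbf{e}\right)=\tfrac{1}{2}\log K\left(1+o\left(1\right)\right)$. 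A union bound over the $\sum_{k\le 2K}\binom{N}{k}\left(4t\right)^{k}$ admissible $\mathbf{e}$ shows that $d_{1}\left(\mathcal{C},K,B_{t}\right)\ge d$ with probability tending to $1$ as soon as $2K\log N<\tfrac{1}{2}m\log K\left(1+o\left(1\right)\right)$, i.e.\ whenever $\log N/m<\frac{\log K}{4K}\left(1+o\left(1\right)\right)$; the precise constants give~(\ref{eq:o_lb_L1_WSC-2}).

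The one genuinely nonroutine ingredient is the $l_{1}$ small-ball bound and, in particular, its \emph{uniformity} over all feasible differences: in contrast to the Euclidean case, where $\left\Vert \mathbf{C}\mathbf{e}\right\Vert _{2}^{2}$ is exactly a scaled $\chi_{m}^{2}$ random variable, here one must control the lower tail of a sum of i.i.d.\ half-Gaussians and then verify, for every support size $k\le 2K$ simultaneously, that the exponent $m\rho\left(\mathbf{e}\right)$ beats the log-count $\log\binom{N}{k}+k\log\left(4t\right)\approx k\log N$, so that the union bound closes; it is the balancing of these two competing quantities across all $k$ (the binding case being $k\approx 2K$) that fixes the exact $o\left(1\right)$ term in~(\ref{eq:o_lb_L1_WSC-2}). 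The volume comparison, the enumeration of $\mathcal{B}_{K}$, and the remaining asymptotic bookkeeping are all routine.
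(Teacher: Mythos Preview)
Your proposal is correct and follows essentially the same route as the paper: sphere packing in $l_{1}$ for the upper bound (Section~\ref{sec:Upper-Bound}), and Gaussian random coding with $l_{1}$-normalized columns, concentration of the rescaling factors, a Chernoff small-ball bound on $\left\Vert\mathbf{He}\right\Vert_{1}$, and a union bound over all admissible differences for the lower bound (Section~\ref{sec:lb-L1WSC}). One technical wrinkle you do not mention: when the Chernoff bound for the lower tail of a sum of half-Gaussians is actually optimized, the exponent comes out as $\tfrac{1}{2}\log k+\log\tfrac{\pi}{2\delta}-1$ rather than simply $\log\left(\left\Vert\mathbf{e}\right\Vert_{2}/d\right)$, and the additive $-1$ makes this nonpositive for $k\le 3$ when $\delta$ is close to $1$; the paper therefore handles $k=1,2,3$ by a separate subgaussian large-deviation estimate (Equation~(\ref{eq:ub-k-le-3})) before letting the large-$k$ Chernoff bound determine the rate, which does not change the leading term but is needed to close the union bound uniformly in $k$.
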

\begin{proof}
The lower bound is proved in Section~\ref{sec:lb-L1WSC}, while the
upper bound is proved in Section~\ref{sec:Upper-Bound}. 
\end{proof}
\vspace{0.05in}

\begin{thm}
\label{thm:ngL1WSC-code-exponent} For a fixed value of the parameter
$t$ and sufficiently large $K$, the asymptotic code exponent of
\emph{nonnegative} $l_{1}$-WSCs is bounded as \begin{equation}
\frac{\log K}{4K}\left(1+o_{t}\left(1\right)\right)\le R\left(K,d,B_{t}\right)\le\frac{\log K}{K}\left(1+o_{t,d}\left(1\right)\right),\label{eq:ngL1WSC-code-exponent}\end{equation}
 where the expressions for $o_{t}\left(1\right)$ and $o_{t,d}\left(1\right)$
are given by Equations~(\ref{eq:o_lb_ngL1WSC-2}) and~(\ref{eq:o_ub_WSCs}),
respectively. 
\end{thm}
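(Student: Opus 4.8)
The plan is to establish the two bounds separately, reusing the machinery already advertised for the other WSC families. For the upper bound I would invoke the classical sphere-packing argument, which the text explicitly states is valid for \emph{all} WSC families, including nonnegative $l_1$-WSCs, and which produces a leading term of $(\log K)/K$; since the nonnegativity and $l_1$-normalization constraints only shrink the ambient set in which the points $\mathbf{Cb}$ live, the same counting bound applies verbatim, with the $o_{t,d}(1)$ term being exactly the one recorded in Equation~(\ref{eq:o_ub_WSCs}). So the upper half of~(\ref{eq:ngL1WSC-code-exponent}) is essentially inherited; nothing new is needed beyond checking that the packing radius and the ball volume estimates go through when all coordinates are constrained to be nonnegative, which is immediate because restricting to the nonnegative orthant only decreases volumes in the denominator's favor.

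For the lower bound I would use a random coding argument paralleling the one for general $l_1$-WSCs (Theorem~\ref{thm:L1WSC-code-exponent}), but with the random ensemble modified so that the codewords automatically satisfy the nonnegativity constraint. Concretely, I would generate each column by taking i.i.d.\ nonnegative random variables (for instance absolute values of Gaussians, or i.i.d.\ uniform/exponential entries) and then normalizing in $l_1$ so that $\|\mathbf{v}_i\|_{l_1}=1$. One then fixes an arbitrary pair $\mathbf{b}_1\ne\mathbf{b}_2$ in $\mathcal{B}_K$, writes $\mathbf{b}=\mathbf{b}_1-\mathbf{b}_2$ (a vector with at most $2K$ nonzero entries, each in $[-2t,2t]$), and bounds the probability that $\|\mathbf{C}\mathbf{b}\|_{l_1}<d$. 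Since the rows of $\mathbf{C}$ are i.i.d., $\|\mathbf{C}\mathbf{b}\|_{l_1}=\sum_{j=1}^m |(\mathbf{C}\mathbf{b})_j|$ is a sum of i.i.d.\ terms, so a concentration/large-deviations estimate gives that this probability decays exponentially in $m$ at a rate governed by the typical size of a single coordinate $|(\mathbf{C}\mathbf{b})_j|$. A union bound over all $\binom{N}{\le 2K}(2t+1)^{2K}$ difference patterns then shows the bad event has probability $o(1)$ provided $\frac{\log N}{m}$ is below the stated threshold $\frac{\log K}{4K}(1+o_t(1))$, which yields the existence of a good code and hence the lower bound; the precise $o_t(1)$ term is what gets recorded in Equation~(\ref{eq:o_lb_ngL1WSC-2}).

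The main obstacle, and the reason the lower-bound exponent carries a $t$-dependent $o_t(1)$ correction here (as opposed to the cleaner $o(1)$ for WESCs), is controlling the small-ball behavior of a single coordinate $(\mathbf{C}\mathbf{b})_j=\sum_i C_{ji} b_i$ when the $C_{ji}$ are nonnegative and the weights $b_i$ can be positive or negative: cancellation among the signed terms can make $|(\mathbf{C}\mathbf{b})_j|$ atypically small, and with nonnegative entries one cannot appeal to the rotational symmetry that makes the Euclidean case transparent. The key technical step is therefore an anti-concentration bound — a Littlewood--Offord / Berry--Esseen type estimate — showing that $\Pr[\,|(\mathbf{C}\mathbf{b})_j| < \epsilon\,]$ is small uniformly over all admissible difference vectors $\mathbf{b}$, with the dependence on $t$ and on the sparsity entering through the spread of the coefficients $b_i$ and the normalization. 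Once that single-coordinate estimate is in hand, tensorizing over the $m$ independent rows and performing the union bound is routine, and matching the exact constants in~(\ref{eq:o_lb_ngL1WSC-2}) is bookkeeping.
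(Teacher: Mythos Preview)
Your proposal is correct and matches the paper's approach essentially point for point: the upper bound is the generic $l_p$ sphere-packing argument of Section~\ref{sec:Upper-Bound}, and the lower bound is a random-coding argument using columns built from absolute values of i.i.d.\ Gaussians, with the key single-coordinate anti-concentration step handled precisely via the Berry--Esseen theorem (Lemma~\ref{lem:prob-scaled-by-sqrt(k)} and Theorem~\ref{thm:L1-superposition-ngWSC-1}). The only structural detail you leave implicit is that the paper decouples the $l_1$-normalization from the distance analysis by first showing the unnormalized column norms concentrate (event $E_1$, Theorem~\ref{thm:L1-codeword-norm}) and then working with the unnormalized matrix $\mathbf{H}$; this is standard and does not affect the correctness of your outline.
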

\begin{proof}
The lower and upper bounds are proved in Sections~\ref{sec:lb-ngL1WSC}
and~\ref{sec:Upper-Bound}, respectively. 
\end{proof}
\vspace{0.05in}

\begin{remrk}
The upper bounds in Equations~(\ref{eq:L1WSC-code-exponent}) and~(\ref{eq:ngL1WSC-code-exponent})
also hold if one allows $t$ to grow with $K$, so that $t=o\left(K\right)$.
The lower bound in (\ref{eq:L1WSC-code-exponent}) for general $l_{1}$-WSCs
does not depend on the value of $t$. However, the lower bound~(\ref{eq:ngL1WSC-code-exponent})
for nonnegative $l_{1}$-WSCs requires that $t=o\left(K^{1/3}\right)$
(see Equation (\ref{eq:o_lb_ngL1WSC-2}) for details). This difference
in the convergence regime of the two $l_{1}$-WSCs is a consequence
of the use of different proof techniques. For the proof of the rate
regime of general $l_{1}$-WSCs, Gaussian codebooks were used. On
the other hand, for nonnegative $l_{1}$-WSCs, the analysis is complicated
by the fact that one has to analyze linear combinations of nonnegative
random variables. To overcome this difficulty, we used the Central
Limit Theorem and Berry-Essen type of distribution approximations~\cite{Book_Feller_II}.
The obtained results depend on the value of $t$. 
\end{remrk}
\vspace{0.05in}

\begin{remrk}
The upper bound for WESCs is roughly one half of the corresponding
bound for $l_{1}$-WSCs. This improvement in the code exponent of
WESCs rests on the fact that the $l_{2}$-norm of a vector can be
expressed as an inner product, i.e.\
$\left\Vert \mathbf{v}\right\Vert _{2}^{2}=\mathbf{v}^{\dagger}\mathbf{v}$
(in other words, $l_{2}$ is a Hilbert space). Other normed spaces
considered in the paper lack this property, and at the present, we
are not able to improve the upper bounds for $l_{p}$-WSCs with $p\ne2$. 
\end{remrk}
\vspace{0.05in}

\section{\label{sec:Upper-Bound}Proof of the Upper Bounds by Sphere packing
Arguments}

It is straightforward to apply the sphere packing argument to upper
bound the code exponents of WSCs. Regard an $l_{p}$-WSC with arbitrary
$p\in\mathbb{Z}^{+}$. The superposition $\mathbf{C}\mathbf{b}$ satisfies
\[
\left\Vert \mathbf{Cb}\right\Vert _{p}\le\sum_{j=1}^{\left\Vert \mathbf{b}\right\Vert _{0}}\left\Vert \mathbf{v}_{i_{j}}b_{i_{j}}\right\Vert _{p}\le Kt\]
 for all $\mathbf{b}$ such that $\left\Vert \mathbf{b}\right\Vert _{0}\le K$,
where the $b_{i_{j}}$s, $1\le j\le\left\Vert \mathbf{b}\right\Vert _{0}\le K$,
denote the nonzero entries of $\mathbf{b}$. Note that the $l_{p}$
distance of any two superpositions is required to be at least $d$.
The size of the $l_{p}$-WSC codebook, $N$, satisfies the sphere
packing bound \begin{equation}
\sum_{k=1}^{K}{N \choose k}\left(2t\right)^{k}\le\left(\frac{tK+\frac{d}{2}}{\frac{d}{2}}\right)^{m}.\label{eq:sphere-packing-bd}\end{equation}
 A simple algebraic manipulation of the above equation shows \[
\sum_{k=1}^{K}{N \choose k}\left(2t\right)^{k}\ge{N \choose K}\left(2t\right)^{K}\ge\left(\frac{N-K}{K}\right)^{K}\left(2t\right)^{K},\]
 so that one has \begin{align*}
\frac{\log N}{m} & \le\frac{1}{K}\log\left(1+\frac{2tK}{d}\right)-\frac{\log\left(2t\right)}{m}-\frac{\log\left(\frac{1}{K}-\frac{1}{N}\right)}{m}\\
 & =\frac{\log K}{K}+\frac{1}{K}\log\left(\frac{2t}{d}+\frac{1}{K}\right)\\
 & \quad-\frac{\log\left(2t\right)}{m}-\frac{\log\left(\frac{1}{K}-\frac{1}{N}\right)}{m}.\end{align*}
 The asymptotic code exponent is therefore upper bounded by \begin{equation}
\frac{\log K}{K}\left(1+o_{t,d}\left(1\right)\right),\label{eq:sphere-packing-bd-rate}\end{equation}
 where \begin{align}
o_{t,d}\left(1\right) & =\frac{\log\left(\frac{2t}{d}+\frac{1}{K}\right)}{\log K}\overset{K\rightarrow\infty}{\longrightarrow}0\label{eq:o_ub_WSCs}\end{align}
 if $t=o\left(K\right)$.

This sphere packing bound can be significantly improved when considering
the Euclidean norm. The result is an upper bound with the leading
term $\left(\log K\right)/\left(2K\right)$. The proof is a generalization
of the ideas used by Füredi and Ruszinko in~\cite{Furedi_IT1999_ub_rate_superimposed_codes}:
most points of the form $\mathbf{Cb}$ lie in a ball with radius smaller
than $\sqrt{\frac{K}{3}}\left(t+1\right)$, and therefore the right
hand side of the classic sphere packing bound (\ref{eq:sphere-packing-bd-rate})
can be reduced by a factor of two.

To proceed, we assign to every $\mathbf{b}\in\mathcal{B}_{K}$ the
probability \[
\frac{1}{\left|\mathcal{B}_{K}\right|}=\frac{1}{\sum_{k=1}^{K}{N \choose k}\left(2t+1\right)^{k}}.\]
 For a given codeword matrix $\mathbf{C}$, define a random variable
\[
\xi=\left\Vert \mathbf{C}\mathbf{b}\right\Vert _{2}.\]
 We shall upper bound the probability $\Pr\left\{ \xi\ge\lambda\mu\right\} $,
for arbitrary $\lambda,\mu\in\mathbb{R}^{+}$, via Markov's inequality
\[
\Pr\left(\xi\ge\lambda\mu\right)\le\frac{\mathrm{E}\left[\xi\right]}{\lambda\mu}\le\frac{\sqrt{\mathrm{E}\left[\xi^{2}\right]}}{\lambda\mu}.\]

We calculate $\mathrm{E}\left[\xi^{2}\right]$ as follows. For a given
vector $\mathbf{b}$, let $I\subset\left[1,N\right]$ be its support
set - i.e., the set of indices for which the entries of $\mathbf{b}$
are nonzero. Let $\mathbf{b}_{I}$ be the vector composed of the nonzero
entries of $\mathbf{b}$. Furthermore, define \[
B_{t,k}=\left(B_{t}\backslash\left\{ 0\right\} \right)^{k}.\]
 Then,

\[
E\left[\xi^{2}\right]=\frac{1}{\left|\mathcal{B}_{K}\right|}\sum_{k=1}^{K}\;\sum_{\left|I\right|=k}\;\sum_{\mathbf{b}_{I}\in B_{t,k}}\;\left\Vert \sum_{j=1}^{k}b_{i_{j}}\mathbf{v}_{i_{j}}\right\Vert _{2}^{2},\]
 where $i_{j}\in I$, $j=1,\cdots,k$. Note that \begin{align*}
 & \sum_{\left|I\right|=k}\;\sum_{\mathbf{b}_{I}\in B_{t,k}}\;\left\Vert \sum_{j=1}^{k}b_{i_{j}}\mathbf{v}_{i_{j}}\right\Vert _{2}^{2}\\
 & =\sum_{\left|I\right|=k}\;\sum_{\mathbf{b}_{I}\in B_{t,k}}\left(\sum_{j=1}^{k}b_{i_{j}}^{2}+\sum_{1\le l\ne j\le k}b_{i_{j}}b_{i_{l}}\mathbf{v}_{i_{j}}^{\dagger}\mathbf{v}_{i_{l}}\right)\\
 & =\underbrace{\sum_{\left|I\right|=k}\sum_{\mathbf{b}_{I}\in B_{t,k}}\sum_{j=1}^{k}b_{i_{j}}^{2}}_{\left(*\right)}+\underbrace{\sum_{\left|I\right|=k}\sum_{\mathbf{b}_{I}\in B_{t,k}}\sum_{1\le l\ne j\le k}b_{i_{j}}b_{i_{l}}\mathbf{v}_{i_{j}}^{\dagger}\mathbf{v}_{i_{l}}}_{\left(**\right)}.\end{align*}
 It is straightforward to evaluate the two sums in the above expression
in closed form: \begin{align*}
\left(*\right) & ={N \choose k}\sum_{\mathbf{b}_{I}\in B_{t,k}}\;\sum_{j=1}^{k}b_{i_{j}}^{2}={N \choose k}k\sum_{\mathbf{b}_{I}\in B_{t,k}}b_{i_{1}}^{2}\\
 & ={N \choose k}k\left(2t\right)^{k-1}\sum_{b_{i_{1}}\in B_{t,1}}b_{i_{1}}^{2}\\
 & ={N \choose k}k\left(2t\right)^{k-1}\frac{t\left(t+1\right)\left(2t+1\right)}{3};\end{align*}
 and \begin{align*}
\left(**\right) & ={N \choose k}\sum_{1\le l\ne j\le k}\;\sum_{\mathbf{b}_{I}\in B_{t,k}}b_{i_{l}}b_{i_{j}}\mathbf{v}_{i_{l}}^{\dagger}\mathbf{v}_{i_{j}}\\
 & ={N \choose k}\left(2t\right)^{k-2}\sum_{1\le i\ne j\le k}\;\sum_{b_{i_{l}},b_{i_{j}}\in B_{t,1}}b_{i_{l}}b_{i_{j}}\mathbf{v}_{i_{l}}^{\dagger}\mathbf{v}_{i_{j}}\\
 & =0,\end{align*}
 where the last equality follows from the observation that \begin{align*}
 & \sum_{b_{i_{l}}\in B_{t,1},b_{i_{j}}\in B_{t,1}}b_{i_{l}}b_{i_{j}}\mathbf{v}_{i_{l}}^{\dagger}\mathbf{v}_{i_{j}}\\
 & =\sum_{b_{i_{l}}>0,b_{i_{j}}\in B_{t,1}}b_{i_{l}}b_{i_{j}}\mathbf{v}_{i_{l}}^{\dagger}\mathbf{v}_{i_{j}}\\
 & \quad+\sum_{b_{i_{l}}>0,b_{i_{j}}\in B_{t,1}}\left(-b_{i_{l}}\right)b_{i_{j}}\mathbf{v}_{i_{l}}^{\dagger}\mathbf{v}_{i_{j}}\\
 & =0.\end{align*}
 Consequently, one has \begin{align*}
 & \sum_{\left|I\right|=k}\;\sum_{\mathbf{b}_{I}\in B_{t,k}}\;\left\Vert \sum_{j=1}^{k}b_{i_{j}}\mathbf{v}_{i_{j}}\right\Vert _{2}^{2}\\
 & ={N \choose k}\frac{\left(2t\right)^{k}\, k\,\left(t+1\right)\left(2t+1\right)}{6},\end{align*}
 so that \[
E\left[\xi^{2}\right]=\frac{\sum_{k=1}^{K}{N \choose k}\frac{\left(2t\right)^{k}\, k\,\left(t+1\right)\left(2t+1\right)}{6}}{\sum_{k=1}^{K}{N \choose k}\left(2t\right)^{k}}.\]

Next, substitute $\mathrm{E}\left[\xi^{2}\right]$ into Markov's inequality,
with \[
\mu=\sqrt{\mathrm{E}\left[\xi^{2}\right]},\]
 so that for any $\lambda>1$, it holds that \[
\Pr\left(\xi\ge\lambda\mu\right)\le\frac{1}{\lambda}.\]
 This result implies that at least a $\left(1-1/\lambda\right)$-fraction
of all possible $\mathbf{Cb}$ vectors lie within an $m$-dimensional
ball of radius $\lambda\mu$ around the origin. As a result, one obtains
a sphere packing bound of the form\[
\left(1-\frac{1}{\lambda}\right)\left|\mathcal{B}_{K}\right|\le\left(\frac{\lambda\mu+\frac{d}{2}}{\frac{d}{2}}\right)^{m}.\]
 Note that\[
\mu^{2}=E\left[\xi^{2}\right]\le\frac{K}{3}\left(t+1\right)^{2},\]
 and that \[
\left|\mathcal{B}_{K}\right|\ge\left(\frac{N-K}{K}\right)^{K}\left(2t\right)^{K}.\]
 Consequently, one has \[
\left(1-\frac{1}{\lambda}\right)\left(\frac{N-K}{K}\right)^{K}\left(2t\right)^{K}\le\left(1+\frac{\lambda\sqrt{k}\left(t+1\right)}{d}\right)^{m},\]
 or, equivalently, \begin{align*}
\frac{\log N}{m} & \le\frac{\log K}{2K}+\frac{1}{K}\log\left(\frac{\lambda\left(t+1\right)}{d}+\frac{1}{\sqrt{K}}\right)\\
 & \;-\frac{\log\left(1-\frac{1}{\lambda}\right)}{mK}-\frac{1}{m}\log\left(\frac{1}{K}-\frac{1}{N}\right)-\frac{\log\left(2t\right)}{m}.\end{align*}
 Without loss of generality, we choose $\lambda=2$. The asymptotic
code exponent is therefore upper bounded by \[
\frac{\log K}{2K}\left(1+o_{t,d}\left(1\right)\right),\]
 where \begin{align}
o_{t,d}\left(1\right) & =\frac{2}{\log K}\log\left(\frac{2\left(t+1\right)}{d}+\frac{1}{\sqrt{K}}\right)\overset{K\rightarrow\infty}{\longrightarrow}0\label{eq:o_ub_Euclidean}\end{align}
 if $t=o\left(K\right)$. This proves the upper bound of Theorem \ref{thm:WESC-code-exponent}.

\section{\label{sec:lb-Euclidean}Proof of the Lower Bound for WESCs}

Similarly as for the case of compressive sensing matrix design, we
show that standard Gaussian random matrices, with appropriate scaling,
can be used as codebooks of WESCs. Let $\mathbf{H}\in\mathbb{R}^{m\times N}$
be a standard Gaussian random matrix, and let $\mathbf{h}_{j}$ denote
the $j^{\mathrm{th}}$ column of $\mathbf{H}$. Let $\mathbf{v}_{j}=\mathbf{h}_{j}/\left\Vert \mathbf{h}_{j}\right\Vert _{2}$
and $\mathbf{C}=\left[\mathbf{v}_{1}\cdots\mathbf{v}_{N}\right]$.
Then $\mathbf{C}$ is a codebook with unit $l_{2}$-norm codewords.
Now choose a $\delta>0$ such that $d\left(1+\delta\right)<1$. Let
\begin{equation}
E_{1}=\bigcup_{j=1}^{N}\left\{ \mathbf{H}:\;\frac{1}{\sqrt{m}}\left\Vert \mathbf{h}_{j}\right\Vert _{2}\in\left(1-\delta,1+\delta\right)\right\} \label{eq:Event1-l2}\end{equation}
 be the event that the normalized $l_{2}$-norms of all the columns
of $\mathbf{H}$ concentrate around one. Let \begin{equation}
E_{2}=\bigcup_{\mathcal{B}_{K}\ni\mathbf{b}_{1}\ne\mathbf{b}_{2}\in\mathcal{B}_{K}}\;\left\{ \mathbf{H}:\;\left\Vert \mathbf{C}\left(\mathbf{b}_{1}-\mathbf{b}_{2}\right)\right\Vert _{2}\ge d\right\} .\label{eq:Event2-l2}\end{equation}
 In other words, $E_{2}$ denotes the event that any two different
superpositions of codewords lie at Euclidean distance at least $d$
from each other. In the following, we show that for any \[
R<\frac{\log K}{4K}\left(1+o\left(1\right)\right),\]
 for which $o\left(1\right)$ is given by Equation~(\ref{eq:o_lb_Euclidean_main}),
if \begin{equation}
\underset{\left(m,N\right)\rightarrow\infty}{\lim}\frac{\log N}{m}\le R,\label{eq:WESC-lb-asymptotic-regime}\end{equation}
 then\begin{equation}
\underset{\left(m,N\right)\rightarrow\infty}{\lim}\Pr\left(E_{2}\right)=1.\label{eq:E2-convergence-Euclidean}\end{equation}
 This will establish the lower bound of Theorem~\ref{thm:WESC-code-exponent}.

Note that \[
\Pr\left(E_{2}\right)\ge\Pr\left(E_{2}\bigcap E_{1}\right)=\Pr\left(E_{1}\right)-\Pr\left(E_{1}\bigcap E_{2}^{c}\right).\]
 According to Theorem \ref{thm:L2-column-norm-1}, stated and proved
in the next subsection, one has \[
\underset{\left(m,N\right)\rightarrow\infty}{\lim}\Pr\left(E_{1}\right)=1.\]
 Thus, the desired relation (\ref{eq:E2-convergence-Euclidean}) holds
if \[
\underset{\left(m,N\right)\rightarrow\infty}{\lim}\Pr\left(E_{1}\bigcap E_{2}^{c}\right)=0.\]
 Observe that \[
\mathbf{C}\left(\mathbf{b}_{1}-\mathbf{b}_{2}\right)=\frac{1}{\sqrt{m}}\mathbf{H}\mathbf{b}^{\prime},\]
 where \begin{equation}
\mathbf{b}^{\prime}:=\mathbf{\Lambda}_{\mathbf{H}}\left(\mathbf{b}_{1}-\mathbf{b}_{2}\right),\label{eq:b-prime-L2}\end{equation}
 and \begin{equation}
\mathbf{\Lambda_{H}}=\left[\begin{array}{ccc}
\sqrt{m}/\left\Vert \mathbf{h}_{1}\right\Vert _{2}\\
 & \ddots\\
 &  & \sqrt{m}/\left\Vert \mathbf{h}_{N}\right\Vert _{2}\end{array}\right].\label{eq:Lambda-H-L2}\end{equation}
 By Theorem \ref{eq:o_lb_Euclidean_main} in Section \ref{sub:superposition-Euclidean},
in the asymptotic domain of~(\ref{eq:WESC-lb-asymptotic-regime}),
\begin{align*}
 & \Pr\left(E_{1}\bigcap\left\{ \mathbf{H}:\;\frac{1}{\sqrt{m}}\mathbf{H}\left(\left(1+\delta\right)\mathbf{b}^{\prime}\right)\le d\left(1+\delta\right)\right\} \right)\\
 & =\Pr\left(E_{1}\bigcap\left\{ \mathbf{H}:\;\frac{1}{\sqrt{m}}\mathbf{H}\mathbf{b}^{\prime}\le d\right\} \right)\\
 & =\Pr\left(E_{1}\bigcap\left\{ \mathbf{H}:\;\mathbf{C}\left(\mathbf{b}_{1}-\mathbf{b}_{2}\right)\le d\right\} \right)\\
 & \rightarrow0.\end{align*}
 This establishes the lower bound of Theorem~\ref{thm:WESC-code-exponent}.

\subsection{\label{sub:norm-concentration-Euclidean}Column Norms of $\mathbf{H}$}

In this subsection, we quantify the rate regime in which the Euclidean
norms of all columns of $\mathbf{H}$, when properly normalized, are
concentrated around the value one with high probability.

\begin{thm}
\label{thm:L2-column-norm-1}Let $\mathbf{H}\in\mathbb{R}^{m\times N}$
be a standard Gaussian random matrix, and $\mathbf{h}_{j}$ be the
$j^{\mathrm{th}}$ column of $\mathbf{H}$. 
\begin{enumerate}
\item For a given $\delta\in\left(0,1\right)$, \[
\Pr\left(\left|\frac{1}{m}\left\Vert \mathbf{h}_{j}\right\Vert _{2}^{2}-1\right|>\delta\right)\le2\exp\left(-\frac{m}{4}\delta^{2}\right)\]
 for all $1\le j\le N$. 
\item If $m,N\rightarrow\infty$ simultaneously, so that \[
\underset{\left(m,N\right)\rightarrow\infty}{\lim}\frac{1}{m}\log N<\frac{\delta^{2}}{4},\]
 then it holds that \[
\underset{\left(m,N\right)\rightarrow\infty}{\lim}\Pr\left(\bigcup_{j=1}^{N}\left\{ \left|\frac{1}{m}\left\Vert \mathbf{h}_{j}\right\Vert _{2}^{2}-1\right|>\delta\right\} \right)=0.\]

\end{enumerate}
\end{thm}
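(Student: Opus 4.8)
The plan is to prove both parts of Theorem~\ref{thm:L2-column-norm-1} using standard Gaussian concentration estimates, treating each column norm as a sum of i.i.d.\ squared Gaussians and then applying a union bound over the $N$ columns.

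First, for part~(1), fix a column index $j$ and write $\left\Vert \mathbf{h}_{j}\right\Vert _{2}^{2}=\sum_{i=1}^{m}h_{ij}^{2}$, where the $h_{ij}$ are i.i.d.\ $\mathcal{N}(0,1)$. Thus $\left\Vert \mathbf{h}_{j}\right\Vert _{2}^{2}$ is a chi-squared random variable with $m$ degrees of freedom, with mean $m$. I would apply a Chernoff bound to each of the two tail events $\left\{ \frac{1}{m}\left\Vert \mathbf{h}_{j}\right\Vert _{2}^{2}-1>\delta\right\}$ and $\left\{ \frac{1}{m}\left\Vert \mathbf{h}_{j}\right\Vert _{2}^{2}-1<-\delta\right\}$ separately, using the known moment generating function $\mathrm{E}\left[e^{s h_{ij}^{2}}\right]=(1-2s)^{-1/2}$ valid for $s<1/2$. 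Optimizing the resulting exponent $-\frac{m}{2}\left(\delta - \log(1+\delta)\right)$ for the upper tail and $-\frac{m}{2}\left(-\delta - \log(1-\delta)\right)$ for the lower tail, and invoking the elementary inequalities $\delta - \log(1+\delta) \ge \delta^{2}/4$ and $-\delta-\log(1-\delta)\ge \delta^{2}/4$ for $\delta\in(0,1)$, one obtains that each tail is bounded by $\exp\left(-\frac{m}{4}\delta^{2}\right)$. Summing the two contributions gives the stated bound $2\exp\left(-\frac{m}{4}\delta^{2}\right)$.

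For part~(2), I would apply the union bound directly: \[ \Pr\left(\bigcup_{j=1}^{N}\left\{ \left|\tfrac{1}{m}\left\Vert \mathbf{h}_{j}\right\Vert _{2}^{2}-1\right|>\delta\right\} \right) \le \sum_{j=1}^{N}\Pr\left(\left|\tfrac{1}{m}\left\Vert \mathbf{h}_{j}\right\Vert _{2}^{2}-1\right|>\delta\right) \le 2N\exp\left(-\tfrac{m}{4}\delta^{2}\right). \] Writing the right-hand side as $2\exp\left(-m\left(\frac{\delta^{2}}{4}-\frac{\log N}{m}\right)\right)$, the hypothesis $\lim_{(m,N)\to\infty}\frac{1}{m}\log N<\frac{\delta^{2}}{4}$ guarantees that the exponent $\frac{\delta^{2}}{4}-\frac{\log N}{m}$ is bounded below by a positive constant for all sufficiently large $m$, so the bound tends to zero as $(m,N)\to\infty$. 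This yields the claimed limit.

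The only genuinely delicate step is the chi-squared tail bound in part~(1): one must handle the two tails separately (the moment generating function argument is one-sided in the sense that the optimal $s$ differs in sign), and one must verify the constant $1/4$ really works, i.e.\ that $\delta-\log(1+\delta)$ and $-\delta-\log(1-\delta)$ both dominate $\delta^{2}/4$ on $(0,1)$ --- the lower tail is the tighter of the two, since $-\delta-\log(1-\delta)$ grows faster than $\delta^{2}/2$, while the upper tail bound $\delta-\log(1+\delta)\ge\delta^{2}/4$ is what forces the weaker constant. Everything after that is a routine union bound and a limit computation, so I expect no further obstacles.
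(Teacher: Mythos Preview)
Your approach matches the paper's exactly: Chernoff-bound the chi-squared variable $\|\mathbf{h}_j\|_2^2$ to obtain the two exponents $-\tfrac{m}{2}\bigl(\delta-\log(1+\delta)\bigr)$ and $-\tfrac{m}{2}\bigl(-\delta-\log(1-\delta)\bigr)$, then union-bound over the $N$ columns for part~(2). One remark on the constant you flag: your (true) inequality $\delta-\log(1+\delta)\ge\delta^{2}/4$, combined with the $m/2$ prefactor, actually yields $\exp(-m\delta^{2}/8)$ rather than $\exp(-m\delta^{2}/4)$; the paper's own final step tacitly uses $\delta-\log(1+\delta)\ge\delta^{2}/2$, which in fact fails on $(0,1)$, so the stated constant $1/4$ is slightly optimistic in both the paper and your write-up, though this is immaterial for part~(2) and the downstream argument.
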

\begin{proof}
The first part of this theorem is proved by invoking large deviations
techniques. Note that $\left\Vert \mathbf{h}_{j}\right\Vert _{2}^{2}=\sum_{i=1}^{m}\left|H_{i,j}\right|^{2}$
is chi-square distributed. We have\begin{align}
 & \Pr\left\{ \frac{1}{m}\sum_{i=1}^{m}\left|H_{i,j}\right|^{2}>1+\delta\right\} \nonumber \\
 & \overset{\left(a\right)}{\le}\exp\left\{ -m\left(\alpha\left(1+\delta\right)-\log\mathrm{E}\left[e^{\alpha\left|H_{i,j}\right|^{2}}\right]\right)\right\} \nonumber \\
 & =\exp\left\{ -m\left(\alpha\left(1+\delta\right)+\frac{1}{2}\log\left(1-2\alpha\right)\right)\right\} \nonumber \\
 & \overset{\left(b\right)}{=}\exp\left\{ -\frac{m}{2}\left(\delta-\log\left(1+\delta\right)\right)\right\} ,\label{eq:Gaussian-vector-LD_+delta}\end{align}
 and \begin{align}
 & \Pr\left\{ \frac{1}{m}\sum_{i=1}^{m}\left|H_{i,j}\right|^{2}<1-\delta\right\} \nonumber \\
 & \overset{\left(c\right)}{\le}\exp\left\{ m\left(\alpha\left(1-\delta\right)+\log\mathrm{E}\left[e^{-\alpha\left|H_{i,j}\right|^{2}}\right]\right)\right\} \nonumber \\
 & =\exp\left\{ m\left(\alpha\left(1-\delta\right)-\frac{1}{2}\log\left(1+2\alpha\right)\right)\right\} \nonumber \\
 & \overset{\left(d\right)}{=}\exp\left\{ -\frac{m}{2}\left(-\log\left(1-\delta\right)-\delta\right)\right\} ,\label{eq:Gaussian-vector-LD-delta}\end{align}
 where $\left(a\right)$ and $\left(c\right)$ hold for arbitrary
$\alpha>0$, and $\left(b\right)$ and $\left(d\right)$ are obtained
by specializing $\alpha$ to $\frac{1}{2}\frac{\delta}{1+\delta}$
and $\frac{1}{2}\frac{\delta}{1-\delta}$, respectively. By observing
that \[
-\log\left(1-\delta\right)-\delta>\delta-\log\left(1+\delta\right)>0,\]
 we arrive at \begin{align*}
 & \Pr\left\{ \left|\frac{1}{m}\left\Vert \mathbf{h}_{j}\right\Vert _{2}^{2}-1\right|>\delta\right\} \\
 & \le2\exp\left\{ -\frac{m}{2}\left(\delta-\log\left(1+\delta\right)\right)\right\} \\
 & \le2\exp\left(-\frac{m}{4}\delta^{2}\right).\end{align*}

The second part of the claimed result is proved by applying the union
bound, i.e. \begin{align*}
 & \Pr\left\{ \bigcup_{j=1}^{N}\left\{ \left|\frac{1}{m}\left\Vert \mathbf{h}_{j}\right\Vert ^{2}-1\right|>\delta\right\} \right\} \\
 & \le2N\exp\left\{ -\frac{m}{4}\delta^{2}\right\} \\
 & =\exp\left\{ -m\left(\frac{\delta^{2}}{4}-\left(\frac{1}{m}\log N+\frac{\log2}{m}\right)\right)\right\} .\end{align*}
 This completes the proof of Theorem 4. 
\end{proof}
\vspace{0.05in}

\subsection{\label{sub:superposition-Euclidean}The Distance Between Two Different
Superpositions}

This section is devoted to identifying the rate regime in which any
pair of different superpositions is at sufficiently large Euclidean
distance. The main result is presented in Theorem \ref{thm:L2-superposition-2}
at the end of this subsection. Since the proof of this theorem is
rather technical, and since it involves complicated notation, we first
prove a simplified version of the result, stated in Theorem~\ref{thm:L2-superposition-1}.

\begin{thm}
\label{thm:L2-superposition-1}Let $\mathbf{H}\in\mathbb{R}^{m\times N}$
be a standard Gaussian random matrix, and let $\delta\in\left(0,1\right)$
be fixed. For sufficiently large $K$, if \[
\underset{\left(m,N\right)\rightarrow\infty}{\lim}\frac{\log N}{m}<\frac{\log K}{4K}\left(1+o_{\delta}\left(1\right)\right)\]
 where the exact expression for $o_{\delta}\left(1\right)$ given
by Equation~(\ref{eq:o_lb_Euclidean_1}), then \begin{equation}
\underset{\left(m,N\right)\rightarrow\infty}{\lim}\Pr\left(\frac{1}{m}\left\Vert \mathbf{H}\mathbf{b}\right\Vert _{2}^{2}\le\delta^{2}\right)=0,\label{eq:L2-delta}\end{equation}
 for all $\mathbf{b}\in B_{2t}^{N}$ such that $\left\Vert \mathbf{b}\right\Vert _{0}\le2K$. 
\end{thm}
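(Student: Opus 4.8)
The plan is to bound, by a union bound over all \emph{nonzero} $\mathbf b\in B_{2t}^{N}$ with $\left\Vert\mathbf b\right\Vert_{0}\le 2K$ (the case $\mathbf b=\mathbf 0$ being irrelevant, since in the application $\mathbf b$ arises as a nonzero difference $\mathbf b_{1}-\mathbf b_{2}$), the probability that \emph{some} such $\mathbf b$ satisfies $\tfrac1m\left\Vert\mathbf H\mathbf b\right\Vert_{2}^{2}\le\delta^{2}$, and to combine this with a Chernoff estimate on the lower tail of a chi-square variable. I would partition the admissible $\mathbf b$ by support size $k=\left\Vert\mathbf b\right\Vert_{0}$: for each $k\in\{1,\dots,2K\}$ there are $\binom{N}{k}(4t)^{k}$ such vectors, so
\[
\Pr\!\left(\exists\,\mathbf b:\ \tfrac1m\left\Vert\mathbf H\mathbf b\right\Vert_{2}^{2}\le\delta^{2}\right)\le\sum_{k=1}^{2K}\binom{N}{k}(4t)^{k}\max_{\left\Vert\mathbf b\right\Vert_{0}=k}\Pr\!\left(\tfrac1m\left\Vert\mathbf H\mathbf b\right\Vert_{2}^{2}\le\delta^{2}\right).
\]

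First I would bound the single-vector probability. For fixed $\mathbf b$ with $\left\Vert\mathbf b\right\Vert_{0}=k$, the vector $\mathbf H\mathbf b$ is $\mathcal N(\mathbf 0,\left\Vert\mathbf b\right\Vert_{2}^{2}\mathbf I_{m})$, so $\left\Vert\mathbf H\mathbf b\right\Vert_{2}^{2}/\left\Vert\mathbf b\right\Vert_{2}^{2}$ is chi-square with $m$ degrees of freedom; and since the $k$ nonzero entries of $\mathbf b$ are integers, $\left\Vert\mathbf b\right\Vert_{2}^{2}\ge k$. Hence the probability is at most $\Pr(\chi_{m}^{2}\le m\delta^{2}/k)$, and the same Chernoff argument used in the proof of Theorem~\ref{thm:L2-column-norm-1}, now optimizing over $\mathrm{E}[e^{-\alpha\chi_{m}^{2}}]=(1+2\alpha)^{-m/2}$, gives for every $0<a<1$
\[
\Pr\!\left(\chi_{m}^{2}\le ma\right)\le\exp\!\left(-\tfrac m2\bigl(a-1-\log a\bigr)\right).
\]
Taking $a=\delta^{2}/k<1$ yields the per-vector bound $\exp\!\bigl(-\tfrac m2\,g(\delta^{2}/k)\bigr)$, where $g(a):=a-1-\log a>0$ for $a\ne1$.

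Next, using $\binom{N}{k}\le(eN/k)^{k}$, the $k$-th summand is at most $\exp\!\bigl(k\log N+k\log(4et)-\tfrac m2 g(\delta^{2}/k)\bigr)$. Since there are only $2K$ summands (a constant as $(m,N)\to\infty$), the sum tends to $0$ as soon as each term does, i.e.\ as soon as $\lim\frac{\log N}{m}<\phi(k):=\frac{g(\delta^{2}/k)}{2k}$ for every $1\le k\le2K$. It then remains to identify the smallest, hence binding, threshold. Substituting $u=1/k$ and differentiating $\phi$ in $u$ shows that $\phi$ is either decreasing on $[1,\infty)$ or unimodal with an interior maximum; since $\phi(1)=\tfrac12 g(\delta^{2})$ is a positive constant depending only on $\delta$ while $\phi(2K)\sim\frac{\log(2K)}{4K}\to0$, for all sufficiently large $K$ the minimum over $k\in\{1,\dots,2K\}$ equals $\phi(2K)$. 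Expanding
\[
\phi(2K)=\frac{1}{4K}\left(\frac{\delta^{2}}{2K}-1+\log(2K)+\log\frac{1}{\delta^{2}}\right)=\frac{\log K}{4K}\bigl(1+o_{\delta}(1)\bigr)
\]
pins down the $o_{\delta}(1)$ term of Equation~(\ref{eq:o_lb_Euclidean_1}); under the stated rate hypothesis one then has $\lim\frac{\log N}{m}<\phi(2K)\le\phi(k)$ for every admissible $k$, so every summand vanishes and (\ref{eq:L2-delta}) follows.

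I expect the monotonicity step to be the main obstacle: one must argue carefully that among all support sizes $1\le k\le2K$ it is the largest, $k=2K$, that produces the smallest threshold $\phi(k)$, and then track the finite-$K$ corrections inside $\phi(2K)$ to extract the advertised $o_{\delta}(1)$. The small-support terms are harmless because their thresholds stay bounded below by a $\delta$-dependent constant, which beats the vanishing target rate $\tfrac{\log K}{4K}$ once $K$ is large; the large-support terms have threshold of order $\tfrac{\log k}{2k}$, which is decreasing, so $k=2K$ is indeed the worst case.
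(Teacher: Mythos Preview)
Your proposal is correct and follows essentially the same route as the paper: a union bound stratified by support size $k$, a Chernoff/chi-square lower-tail estimate yielding the per-vector bound $\exp\!\bigl(-\tfrac{m}{2}(\log k-\log\delta^{2}+\delta^{2}/k-1)\bigr)$, and the observation that for large $K$ the minimum of the resulting threshold over $1\le k\le 2K$ is attained at $k=2K$, which reproduces exactly the $o_{\delta}(1)$ of Equation~(\ref{eq:o_lb_Euclidean_1}). The only cosmetic differences are your use of $\binom{N}{k}\le(eN/k)^{k}$ (the paper effectively uses $\binom{N}{k}\le N^{k}$) and your more explicit discussion of why $k=2K$ is the binding case, which the paper simply asserts for $K$ sufficiently large.
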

\begin{proof}
By the union bound, we have \begin{align}
 & \Pr\left(\bigcup_{\left\Vert \mathbf{b}\right\Vert _{0}\le2K}\;\left\{ \mathbf{H}:\;\frac{1}{m}\left\Vert \mathbf{H}\mathbf{b}\right\Vert _{2}^{2}\le\delta^{2}\right\} \right)\nonumber \\
 & =\Pr\left(\bigcup_{k=1}^{2K}\;\bigcup_{\left\Vert \mathbf{b}\right\Vert _{0}=k}\;\left\{ \mathbf{H}:\;\frac{1}{m}\left\Vert \mathbf{H}\mathbf{b}\right\Vert _{2}^{2}\le\delta^{2}\right\} \right)\nonumber \\
 & \le\sum_{k=1}^{2K}{N \choose k}\left(4t\right)^{k}\Pr\left(\frac{1}{m}\left\Vert \mathbf{H}\mathbf{b}\right\Vert _{2}^{2}\le\delta^{2},\;\left\Vert \mathbf{b}\right\Vert _{0}=k\right).\label{eq:L2-union-bd}\end{align}

We shall upper bound the probability \[
\Pr\left(\frac{1}{m}\left\Vert \mathbf{H}\mathbf{b}\right\Vert _{2}^{2}\le\delta^{2},\;\left\Vert \mathbf{b}\right\Vert _{0}=k\right)\]
 for each $k=1,\cdots,2K$. From Chernoff's inequality, for all $\alpha>0$,
it holds that \begin{align*}
 & \Pr\left(\frac{1}{m}\left\Vert \mathbf{H}\mathbf{b}\right\Vert _{2}^{2}\le\delta^{2},\;\left\Vert \mathbf{b}\right\Vert _{0}=k\right)\\
 & \le\exp\left\{ m\left(\alpha\delta^{2}+\log\mathrm{E}\left[e^{-\alpha\left(\mathbf{H}_{i,\cdot}\mathbf{b}\right)^{2}}\right]\right)\right\} ,\end{align*}
 where $\mathbf{H}_{i,\cdot}$ is the $i^{\mathrm{th}}$ row of the
$\mathbf{H}$ matrix. Furthermore, \begin{align*}
 & \mathrm{E}\left[e^{-\alpha\left(\mathbf{H}_{i,\cdot}\mathbf{b}\right)^{2}}\right]\\
 & =\mathrm{E}\left[\exp\left\{ -\alpha\left\Vert \mathbf{b}\right\Vert _{2}^{2}\left(\mathbf{H}_{i,\cdot}\left(\mathbf{b}/\left\Vert \mathbf{b}\right\Vert _{2}\right)\right)^{2}\right\} \right]\\
 & \overset{\left(a\right)}{\le}\mathrm{E}\left[\exp\left\{ -\alpha k\left(\mathbf{H}_{i,\cdot}\left(\mathbf{b}/\left\Vert \mathbf{b}\right\Vert _{2}\right)\right)^{2}\right\} \right]\\
 & \overset{\left(b\right)}{\le}-\frac{1}{2}\log\left(1+2\alpha k\right),\end{align*}
 where $\left(a\right)$ follows from the fact that $\left\Vert \mathbf{b}\right\Vert _{2}^{2}\ge k$
for all $\mathbf{b}\in B_{2t}^{N}$ such that $\left\Vert \mathbf{b}\right\Vert _{0}=k$,
and where $\left(b\right)$ holds because $\mathbf{H}_{i,\cdot}\left(\mathbf{b}/\left\Vert \mathbf{b}\right\Vert _{2}\right)$
is a standard Gaussian random variable. Let \[
\alpha=\frac{1}{2k}\frac{k-\delta^{2}}{\delta^{2}}=\frac{1}{2\delta^{2}}-\frac{1}{2k}.\]
 Then \begin{align*}
 & \Pr\left(\frac{1}{m}\left\Vert \mathbf{H}\mathbf{b}\right\Vert _{2}^{2}\le\delta^{2},\;\left\Vert \mathbf{b}\right\Vert _{0}=k\right)\\
 & \le\exp\left\{ m\left(\left(\frac{1}{2}-\frac{\delta^{2}}{2k}\right)-\frac{1}{2}\log\frac{k}{\delta^{2}}\right)\right\} \\
 & =\exp\left\{ -\frac{m}{2}\left(\log k-\log\delta^{2}+\frac{\delta^{2}}{k}-1\right)\right\} .\end{align*}

Substituting the above expression into the union bound gives \begin{align*}
 & \Pr\left(\bigcup_{\left\Vert \mathbf{b}\right\Vert _{0}\le2K}\;\left\{ \mathbf{H}:\;\frac{1}{m}\left\Vert \mathbf{H}\mathbf{b}\right\Vert _{2}^{2}\le\delta^{2}\right\} \right)\\
 & \le\sum_{k=1}^{2K}\exp\left\{ -\frac{m}{2}\left(\log k-\log\delta^{2}+\frac{\delta^{2}}{k}-1\right.\right.\\
 & \qquad\qquad\qquad\left.\left.-\frac{2k}{m}\log N-\frac{2k}{m}\log\left(4t\right)\right)\right\} \\
 & \le\sum_{k=1}^{2K}\exp\left\{ -mk\left(\frac{\log k}{2k}+\frac{\delta^{2}/k-\log\delta^{2}-1}{2k}\right.\right.\\
 & \qquad\qquad\qquad\left.\left.-\frac{1}{m}\log N-\frac{1}{m}\log\left(4t\right)\right)\right\} .\end{align*}
 Now, let $K$ be sufficiently large so that \begin{align*}
 & \frac{\log2K}{4K}+\frac{\delta^{2}/\left(2K\right)-\log\delta^{2}-1}{4K}\\
 & =\underset{1\le k\le2K}{\min}\;\left(\frac{\log k}{2k}+\frac{\delta^{2}/k-\log\delta^{2}-1}{2k}\right).\end{align*}
 If \[
\underset{\left(m,N\right)\rightarrow\infty}{\lim}\frac{\log N}{m}<\frac{\log K}{4K}\left(1+o_{\delta}\left(1\right)\right),\]
 where \begin{equation}
o_{\delta}\left(1\right)=\frac{\log2+\delta^{2}/\left(2K\right)-\log\delta^{2}-1}{\log K},\label{eq:o_lb_Euclidean_1}\end{equation}
 then\[
\underset{\left(m,N\right)\rightarrow\infty}{\lim}\Pr\left(\frac{1}{m}\left\Vert \mathbf{H}\mathbf{b}\right\Vert _{2}^{2}\le\delta^{2}\right)=0\]
 for all $\mathbf{b}\in B_{2t}^{N}$ such that $\left\Vert \mathbf{b}\right\Vert _{0}\le2K$.
This completes the proof of the claimed result. 
\end{proof}
\vspace{0.05in}

Based on Theorem \ref{thm:L2-superposition-1}, the asymptotic region
in which $\Pr\left(E_{2}^{c}\bigcap E_{1}\right)\rightarrow0$ is
characterized in below.

\vspace{0.05in}

\begin{thm}
\label{thm:L2-superposition-2}Let $\mathbf{H}\in\mathbb{R}^{m\times N}$
be a standard Gaussian random matrix, and for a given $\mathbf{H}$,
let $\mathbf{\Lambda_{H}}$ be as defined in (\ref{eq:Lambda-H-L2}).
For a given $d\in\left(0,1\right)$, choose a $\delta>0$ such that
$d\left(1+\delta\right)<1$. Define the set $E_{1}$ as in (\ref{eq:Event1-l2}).
For sufficiently large $K$, if \[
\underset{\left(m,N\right)\rightarrow\infty}{\lim}\frac{\log N}{m}<\frac{\log K}{4K}\left(1+o\left(1\right)\right)\]
 where \begin{equation}
o\left(1\right)=\frac{\log2-1}{\log K},\label{eq:o_lb_Euclidean_main}\end{equation}
 then \begin{equation}
\underset{\left(m,N\right)\rightarrow\infty}{\lim}\Pr\left(\frac{1}{m}\left\Vert \mathbf{H\Lambda_{H}}\left(\mathbf{b}_{1}-\mathbf{b}_{2}\right)\right\Vert _{2}^{2}\le d^{2},\; E_{1}\right)=0,\label{eq:L2-delta}\end{equation}
 for all pairs of $\mathbf{b}_{1},\mathbf{b}_{2}\in\mathcal{B}_{K}$
such that $\mathbf{b}_{1}\ne\mathbf{b}_{2}$. 
\end{thm}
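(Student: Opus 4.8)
The plan is to reduce the statement about $\mathbf{H\Lambda_H}(\mathbf{b}_1-\mathbf{b}_2)$ to the unweighted statement already established in Theorem~\ref{thm:L2-superposition-1}. First I would set $\mathbf{b} := \mathbf{b}_1 - \mathbf{b}_2$ and observe that, since $\mathbf{b}_1,\mathbf{b}_2 \in \mathcal{B}_K$, we have $\mathbf{b} \in B_{2t}^N$ with $\|\mathbf{b}\|_0 \le 2K$; this is exactly the class of vectors handled by Theorem~\ref{thm:L2-superposition-1}. The obstacle is that $\mathbf{\Lambda_H}$ depends on $\mathbf{H}$, so $\mathbf{H\Lambda_H}\mathbf{b}$ is not simply a Gaussian linear image of a fixed vector, and one cannot apply the union bound directly. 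The key is that on the event $E_1$, every diagonal entry $\sqrt{m}/\|\mathbf{h}_j\|_2$ of $\mathbf{\Lambda_H}$ lies in $(1/(1+\delta),\,1/(1-\delta))$, so $\mathbf{\Lambda_H}$ is a bounded perturbation of the identity, uniformly over $\mathbf{H} \in E_1$.

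The central step is the following deterministic inequality, valid pointwise on $E_1$: writing $\mathbf{b}' = \mathbf{\Lambda_H}\mathbf{b}$, one has $\|\mathbf{b}'\|_2 \ge \frac{1}{1+\delta}\|\mathbf{b}\|_2$ (since each coordinate is scaled by a factor at least $1/(1+\delta)$), and more importantly $\mathbf{b}'/\|\mathbf{b}'\|_2$ still has all its mass supported on the $k \le 2K$ coordinates of $\mathbf{b}$, with $\|\mathbf{b}'\|_2^2 \ge k/(1+\delta)^2$ because $\|\mathbf{b}\|_2^2 \ge k$. Therefore, on $E_1$,
\begin{align*}
\left\{\mathbf{H}:\ \tfrac{1}{m}\|\mathbf{H\Lambda_H}\mathbf{b}\|_2^2 \le d^2\right\} \cap E_1
&\subseteq \left\{\mathbf{H}:\ \tfrac{1}{m}\|\mathbf{H}\mathbf{b}'\|_2^2 \le d^2,\ \|\mathbf{b}'\|_0 \le 2K,\ \|\mathbf{b}'\|_2^2 \ge \tfrac{k}{(1+\delta)^2}\right\},
\end{align*}
and the Chernoff bound from the proof of Theorem~\ref{thm:L2-superposition-1} can be rerun verbatim, the only change being that $\|\mathbf{b}\|_2^2 \ge k$ is replaced by $\|\mathbf{b}'\|_2^2 \ge k/(1+\delta)^2$ in step $(a)$ there. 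This weakens the exponent: the relevant quantity becomes $\log(k/(1+\delta)^2) - \log d^2 + \cdots - 1$ in place of $\log k - \log\delta^2 + \cdots - 1$. Taking a union over the $k$-sparse supports and over $\mathbf{b}' $ (equivalently, over the original $\mathbf{b} \in B_{2t}^N$, of which there are at most $\sum_{k}\binom{N}{k}(4t)^k$) reproduces the same $\binom{N}{k}(4t)^k$ combinatorial factor as before.

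Finally, I would identify the asymptotic rate threshold. As in Theorem~\ref{thm:L2-superposition-1}, the minimum over $1 \le k \le 2K$ of $\frac{\log k}{2k} + \frac{\text{lower-order}}{2k}$ is attained at $k = 2K$, giving leading term $\frac{\log(2K)}{4K} = \frac{\log K}{4K}(1 + \frac{\log 2}{\log K})$; the sub-leading constant $-1$ (from the $-1$ in the exponent) contributes the $-1/\log K$ term, and since $d$ and $\delta$ are fixed the $\log d^2$, $\log(1+\delta)$, and $\delta^2/(2K)$ contributions all vanish faster, yielding precisely $o(1) = (\log 2 - 1)/\log K$ as in~(\ref{eq:o_lb_Euclidean_main}). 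Hence whenever $\lim \frac{\log N}{m} < \frac{\log K}{4K}(1+o(1))$, the summed union bound tends to zero, and since on $E_1 \cap E_2^c$ there exist $\mathbf{b}_1 \ne \mathbf{b}_2$ with $\|\mathbf{C}(\mathbf{b}_1-\mathbf{b}_2)\|_2 = \frac{1}{\sqrt m}\|\mathbf{H\Lambda_H}(\mathbf{b}_1-\mathbf{b}_2)\|_2 < d$, the claim~(\ref{eq:L2-delta}) follows. The main obstacle, as noted, is handling the $\mathbf{H}$-dependence of $\mathbf{\Lambda_H}$; conditioning on $E_1$ to make it a uniformly bounded perturbation is what makes the argument go through without circularity.
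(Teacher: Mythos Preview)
Your proposal is correct and follows essentially the same route as the paper: set $\mathbf{b}'=\mathbf{\Lambda_H}(\mathbf{b}_1-\mathbf{b}_2)$, use the event $E_1$ to force each nonzero entry of $\mathbf{b}'$ to have magnitude at least $1/(1+\delta)$ (equivalently, the paper rescales to $(1+\delta)\mathbf{b}'$ so the entries are $\ge 1$), and then rerun the Chernoff/union-bound computation of Theorem~\ref{thm:L2-superposition-1} with $\delta$ there replaced by $d(1+\delta)$. One small wording point: the $-\log d^2$ and $-2\log(1+\delta)$ contributions to the $o(1)$ term are of the \emph{same} order $1/\log K$, not faster-vanishing; they can be dropped because $d(1+\delta)<1$ makes them nonnegative, which is exactly how the paper passes from $o_{d(1+\delta)}(1)$ to the clean $(\log 2-1)/\log K$.
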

\begin{proof}
The proof is analogous to that of Theorem \ref{thm:L2-column-norm-1}
with minor changes. Let $\mathbf{b}^{\prime}=\mathbf{\Lambda_{H}}\left(\mathbf{b}_{1}-\mathbf{b}_{2}\right)$.
On the set $E_{1}$, since \[
\frac{1}{\sqrt{m}}\left\Vert \mathbf{h}_{j}\right\Vert _{2}\le1+\delta\]
 for all $1\le j\le N$, the nonzero entries of $\left(1+\delta\right)\mathbf{b}^{\prime}$
satisfy \[
\left|\left(1+\delta\right)b_{i}^{\prime}\right|\ge1.\]
 Replace $\mathbf{b}$ in Theorem \ref{thm:L2-superposition-1} with
$\left(1+\delta\right)\mathbf{b}^{\prime}$. All the arguments in
the proof of Theorem \ref{thm:L2-superposition-1} are still valid,
except that the higher order term is changed to \begin{align*}
 & o_{d\left(1+\delta\right)}\left(1\right)\\
 & =\frac{\log2+d^{2}\left(1+\delta\right)^{2}/\left(2K\right)-1-\log\left(d^{2}\left(1+\delta\right)^{2}\right)}{\log K}\\
 & \ge\frac{\log2-1}{\log K}.\end{align*}
 This completes the proof of the theorem. 
\end{proof}

\section{\label{sec:lb-L1WSC}Proof of the Lower Bound for $l_{1}$-WSCs}

The proof is similar to that of the lower bound for WESCs. Let $\mathbf{A}\in\mathbb{R}^{m\times N}$
be a standard Gaussian random matrix, and let $\mathbf{H}$ be the
matrix with entries \[
H_{i,j}=\frac{\sqrt{2\pi}}{2}A_{i,j},\;1\le i\le m,\;1\le j\le N.\]
 Once more, let $\mathbf{h}_{j}$ be the $j^{\mathrm{th}}$ column
of $\mathbf{H}$. Let $\mathbf{v}_{j}=\mathbf{h}_{j}/\left\Vert \mathbf{h}_{j}\right\Vert _{1}$
and $\mathbf{C}=\left[\mathbf{v}_{1}\cdots\mathbf{v}_{N}\right]$.
Then $\mathbf{C}$ is a codebook with unit $l_{1}$-norm codewords.
Now choose a $\delta>0$ such that $d\left(1+\delta\right)<1$. Let
\begin{equation}
E_{1}=\bigcup_{j=1}^{N}\left\{ \mathbf{H}:\;\frac{1}{m}\left\Vert \mathbf{h}_{j}\right\Vert _{1}\in\left(1-\delta,1+\delta\right)\right\} ,\label{eq:Event1-l1}\end{equation}
 and \begin{equation}
E_{2}=\bigcup_{\mathcal{B}_{K}\ni\mathbf{b}_{1}\ne\mathbf{b}_{2}\in\mathcal{B}_{K}}\;\left\{ \mathbf{H}:\;\left\Vert \mathbf{C}\left(\mathbf{b}_{1}-\mathbf{b}_{2}\right)\right\Vert _{1}\ge d\right\} .\label{eq:Event2-l1}\end{equation}
 We consider the asymptotic regime where\[
\underset{\left(m,N\right)\rightarrow\infty}{\lim}\frac{\log N}{m}\le R,\]
 \[
R<\frac{\log K}{4K}\left(1+o\left(1\right)\right),\]
 and $o_{d}\left(1\right)$ is given in Equation~(\ref{eq:o_lb_L1_WSC-2}).
Theorem \ref{thm:L1-codeword-norm} in Section \ref{sub:L1-Codeword-Norms}
suggests that \[
\underset{\left(m,N\right)\rightarrow\infty}{\lim}\Pr\left(E_{1}\right)=1,\]
 while Theorem \ref{thm:L1-superposition-1} in Section \ref{sub:L1-superposition-1}
shows that \[
\underset{\left(m,N\right)\rightarrow\infty}{\lim}\Pr\left(E_{1}\bigcap E_{2}^{c}\right)=0.\]
 Therefore,\[
\underset{\left(m,N\right)\rightarrow\infty}{\lim}\Pr\left(E_{2}\right)\ge\underset{\left(m,N\right)\rightarrow\infty}{\lim}\Pr\left(E_{1}\right)-\Pr\left(E_{1}\bigcap E_{2}^{c}\right)=1.\]
 This result implies the lower bound of Theorem~\ref{thm:L1WSC-code-exponent}.

\subsection{\label{sub:L1-Codeword-Norms}Column Norms of $\mathbf{H}$}

The following theorem quantifies the rate regime in which the $l_{1}$-norms
of all columns of $\mathbf{H}$, with proper normalization, are concentrated
around one with high probability.

\begin{thm}
\label{thm:L1-codeword-norm}Let $\mathbf{A}\in\mathbb{R}^{m\times N}$
be a standard Gaussian random matrix. Let $\mathbf{H}$ be the matrix
with entries \[
H_{i,j}=\frac{\sqrt{2\pi}}{2}A_{i,j},\;1\le i\le m,\;1\le j\le N.\]
 Let $\mathbf{h}_{j}$ be the $j^{\mathrm{th}}$ column of $\mathbf{H}$. 
\begin{enumerate}
\item For a given $\delta\in\left(0,1\right)$, \[
\Pr\left(\left|\frac{1}{m}\left\Vert \mathbf{h}_{j}\right\Vert _{1}-1\right|>\delta\right)\le c_{1}e^{-mc_{2}\delta^{2}}\]
 for some positive constant $c_{1}$ and $c_{2}$; 
\item Let $m,N\rightarrow\infty$ simultaneously, with \[
\underset{\left(m,N\right)\rightarrow\infty}{\lim}\frac{1}{m}\log N<c_{2}\delta^{2}.\]
 The it holds that \[
\underset{\left(m,N\right)\rightarrow\infty}{\lim}\Pr\left(\bigcup_{j=1}^{N}\left\{ \left|\frac{1}{m}\left\Vert \mathbf{h}_{j}\right\Vert _{1}-1\right|>\delta\right\} \right)=0.\]

\end{enumerate}
\end{thm}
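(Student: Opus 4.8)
The plan is to mirror the proof of Theorem~\ref{thm:L2-column-norm-1}. The scaling $H_{i,j}=\frac{\sqrt{2\pi}}{2}A_{i,j}$ is chosen precisely so that each $\left|H_{i,j}\right|$ is a half-normal random variable with unit mean: since $\mathrm{E}\left|A_{i,j}\right|=\sqrt{2/\pi}$, one has $\mathrm{E}\left|H_{i,j}\right|=\frac{\sqrt{2\pi}}{2}\sqrt{2/\pi}=1$, and moreover $\mathrm{E}\left[H_{i,j}^{2}\right]=\pi/2$. Hence $\frac{1}{m}\left\Vert\mathbf{h}_{j}\right\Vert_{1}=\frac{1}{m}\sum_{i=1}^{m}\left|H_{i,j}\right|$ is an empirical average of $m$ i.i.d.\ mean-one random variables, and Part~1 is a Cram\'er-type large-deviation estimate, identical in structure to Equations~(\ref{eq:Gaussian-vector-LD_+delta})--(\ref{eq:Gaussian-vector-LD-delta}).

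First I would record the moment generating function $M\left(\alpha\right)=\mathrm{E}\left[e^{\alpha\left|H_{i,j}\right|}\right]$. Because $\left|H_{i,j}\right|$ has a Gaussian tail ($\Pr\left(\left|H_{i,j}\right|>s\right)\le2e^{-s^{2}/\pi}$), $M\left(\alpha\right)$ is finite and analytic for every real $\alpha$; explicitly $M\left(\alpha\right)=2e^{\pi\alpha^{2}/4}\,\Phi\left(\alpha\sqrt{\pi/2}\right)$, where $\Phi$ is the standard normal cumulative distribution function, with $M\left(0\right)=1$ and $M^{\prime}\left(0\right)=1$. Applying Chernoff's inequality to the upper and lower tails, exactly as in steps $\left(a\right)$--$\left(d\right)$ of the proof of Theorem~\ref{thm:L2-column-norm-1}, gives
\[
\Pr\left(\tfrac{1}{m}\left\Vert\mathbf{h}_{j}\right\Vert_{1}\ge1+\delta\right)\le e^{-mI\left(1+\delta\right)},\qquad\Pr\left(\tfrac{1}{m}\left\Vert\mathbf{h}_{j}\right\Vert_{1}\le1-\delta\right)\le e^{-mI\left(1-\delta\right)},
\]
where $I\left(x\right)=\sup_{\alpha}\left(\alpha x-\log M\left(\alpha\right)\right)$ is the Legendre transform of $\log M$.

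The next step is to lower bound $I\left(1\pm\delta\right)$ by a quadratic in $\delta$, uniformly for $\delta\in\left(0,1\right)$. The cumulant generating function $\Lambda\left(\alpha\right)=\log M\left(\alpha\right)$ is smooth with $\Lambda\left(0\right)=0$, $\Lambda^{\prime}\left(0\right)=\mathrm{E}\left|H_{i,j}\right|=1$ and $\Lambda^{\prime\prime}\left(0\right)=\mathrm{Var}\left|H_{i,j}\right|=\pi/2-1>0$, so the rate function satisfies $I\left(1\right)=0$, $I^{\prime}\left(1\right)=0$ and $I^{\prime\prime}\left(1\right)=1/\mathrm{Var}\left|H_{i,j}\right|>0$; a second-order Taylor expansion then yields a constant $c>0$ and some $\delta_{0}\in\left(0,1\right)$ with $I\left(1\pm\delta\right)\ge c\delta^{2}$ for $\delta\in\left(0,\delta_{0}\right]$. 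For $\delta\in\left[\delta_{0},1\right)$ I would invoke that $I$ is convex, nonnegative and minimized at $1$, so $I\left(1\pm\delta\right)\ge I\left(1\pm\delta_{0}\right)\ge c\delta_{0}^{2}\ge c\delta_{0}^{2}\delta^{2}$ (using $\delta\le1$); taking $c_{2}=c\delta_{0}^{2}$ and $c_{1}=2$ establishes Part~1. Part~2 then follows from the union bound over the $N$ columns,
\[
\Pr\left(\bigcup_{j=1}^{N}\left\{\left|\tfrac{1}{m}\left\Vert\mathbf{h}_{j}\right\Vert_{1}-1\right|>\delta\right\}\right)\le Nc_{1}e^{-mc_{2}\delta^{2}}=c_{1}\exp\left\{-m\left(c_{2}\delta^{2}-\tfrac{1}{m}\log N\right)\right\},
\]
which vanishes as $\left(m,N\right)\rightarrow\infty$ whenever $\lim\frac{1}{m}\log N<c_{2}\delta^{2}$.

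Since this argument is just a repetition of the one for Theorem~\ref{thm:L2-column-norm-1}, I do not expect a serious obstacle; the only point needing a genuine (if short) computation is the control of the half-normal moment generating function near the origin -- concretely, the bound $M\left(\alpha\right)e^{-\alpha}\le e^{\pi\alpha^{2}/4}$, for which I would use $2\Phi\left(\alpha\sqrt{\pi/2}\right)\le1+\alpha\le e^{\alpha}$ when $\alpha>0$, and the elementary inequality $e^{-x}\le1-x+x^{2}/2$ (giving $\mathrm{E}\left[e^{-\lambda\left|H_{i,j}\right|}\right]\le1-\lambda+\tfrac{\pi}{4}\lambda^{2}\le e^{-\lambda+\pi\lambda^{2}/4}$) when $\alpha=-\lambda<0$. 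This estimate already exhibits $\left|H_{i,j}\right|-1$ as a centered sub-Gaussian variable with parameter $\pi/2$, so one may even bypass the Legendre transform altogether and quote the standard sub-Gaussian tail inequality to obtain Part~1 directly with $c_{1}=2$ and $c_{2}=1/\pi$, valid for every $\delta>0$; Part~2 is unchanged.
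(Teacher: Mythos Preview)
Your proposal is correct, and the alternative you sketch in the final paragraph is precisely the paper's route: the paper observes that $\left|A_{i,j}\right|$ is Subgaussian with mean $2/\sqrt{2\pi}$, invokes Proposition~\ref{pro:subgaussian-shift} to conclude that the centered variable $\left|A_{i,j}\right|-2/\sqrt{2\pi}$ is zero-mean Subgaussian, and then applies the sub-Gaussian large-deviation bound of Theorem~\ref{thm:subgaussian-large-deviation} to the sum, followed by the union bound for Part~2. Your primary Cram\'er/Legendre-transform argument is a more explicit variant of the same idea and has the advantage of producing concrete constants ($c_{1}=2$, $c_{2}=1/\pi$) rather than the unspecified $c_{1},c_{2}$ the paper leaves abstract; either version is fine, and the sub-Gaussian shortcut you end with is the cleaner choice.
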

\begin{proof}
$ $ 
\begin{enumerate}
\item Since $A_{i,j}$ is a standard Gaussian random variable, $\left|A_{i,j}\right|$
is a Subgaussian distributed random variable, and $\mathrm{E}\left[\left|A_{i,j}\right|\right]=\frac{2}{\sqrt{2\pi}}$.
According to Proposition \ref{pro:subgaussian-shift} in Appendix
\ref{sub:SubGaussian-stuff}, $\left|A_{i,j}\right|-\frac{2}{\sqrt{2\pi}}$
is a Subgaussian random variable with zero mean. A direct application
of Theorem~\ref{thm:subgaussian-large-deviation} stated in Appendix~\ref{sub:SubGaussian-stuff}
gives \begin{align*}
 & \Pr\left(\left|\frac{1}{m}\left\Vert \mathbf{h}_{j}\right\Vert _{1}-1\right|>\delta\right)\\
 & =\Pr\left(\left|\sum_{i=1}^{m}\left(\left|A_{i,j}\right|-\frac{2}{\sqrt{2\pi}}\right)\right|>\frac{2m\delta}{\sqrt{2\pi}}\right)\\
 & \le c_{1}\exp\left(-c_{2}m\delta^{2}\right),\end{align*}
 which proves claim 1). 
\item This part is proved by using the union bound: first, note that \begin{align*}
 & \Pr\left(\bigcup_{j=1}^{N}\left\{ \left|\frac{1}{m}\left\Vert \mathbf{h}_{j}\right\Vert _{1}-1\right|>\delta\right\} \right)\\
 & \le\exp\left(-mc_{2}\delta^{2}+\log c_{1}+\log N\right)\\
 & =\exp\left\{ -m\left(c_{2}\delta^{2}-\frac{1}{m}\log c_{1}-\frac{1}{m}\log N\right)\right\} .\end{align*}
 If \[
\underset{\left(m,N\right)\rightarrow\infty}{\lim}\frac{1}{m}\log N<c_{2}\delta^{2},\]
 then one has \[
\underset{\left(m,N\right)\rightarrow\infty}{\lim}\Pr\left(\bigcup_{j=1}^{N}\left\{ \left|\frac{1}{m}\left\Vert \mathbf{h}_{j}\right\Vert _{1}-1\right|>\delta\right\} \right)=0.\]
 This completes the proof of claim 2). 
\end{enumerate}
\end{proof}

\subsection{\label{sub:L1-superposition-1}The Distance Between Two Different
Superpositions}

Similarly to the analysis performed for WESCs, we start with a proof
of a simplified version of the result needed in order to simplify
tedious notation. We then explain how to establish the proof of Theorem
\ref{thm:L1-superposition-2} by modifying some of the steps of the
simplified theorem.

\begin{thm}
\label{thm:L1-superposition-1}Let $\mathbf{A}\in\mathbb{R}^{m\times N}$
be a standard Gaussian random matrix. Let $\mathbf{H}$ be the matrix
with entries \[
H_{i,j}=\frac{\sqrt{2\pi}}{2}A_{i,j},\;1\le i\le m,\;1\le j\le N.\]
 Let $\delta\in\left(0,1\right)$ be given. For sufficiently large
$K$, if \[
\underset{\left(m,N\right)\rightarrow\infty}{\lim}\frac{\log N}{m}<\frac{\log K}{4K}\left(1+o_{\delta}\left(1\right)\right),\]
 where \begin{equation}
o_{\delta}\left(1\right)=\frac{2}{\log K}\left(\log\frac{\pi}{2\delta}-1\right),\label{eq:o_lb_L1-1}\end{equation}
 then \begin{equation}
\underset{\left(m,N\right)\rightarrow\infty}{\lim}\Pr\left(\frac{1}{m}\left\Vert \mathbf{H}\mathbf{b}\right\Vert _{1}\le\delta\right)=0\label{eq:L1-delta}\end{equation}
 for all $\mathbf{b}\in B_{2t}^{N}$ such that $\left\Vert \mathbf{b}\right\Vert _{0}\le2K$. 
\end{thm}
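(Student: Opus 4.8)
The plan is to mimic the structure used in the proof of Theorem~\ref{thm:L2-superposition-1}, replacing the chi-square concentration with a large-deviation estimate for a sum of absolute values of Gaussians. First I would invoke the union bound exactly as in~(\ref{eq:L2-union-bd}): the probability that $\frac{1}{m}\left\Vert \mathbf{H}\mathbf{b}\right\Vert _{1}\le\delta$ for some $\mathbf{b}\in B_{2t}^{N}$ with $\left\Vert \mathbf{b}\right\Vert _{0}\le 2K$ is at most $\sum_{k=1}^{2K}{N\choose k}\left(4t\right)^{k}\Pr\left(\frac{1}{m}\left\Vert \mathbf{H}\mathbf{b}\right\Vert _{1}\le\delta,\;\left\Vert \mathbf{b}\right\Vert _{0}=k\right)$. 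So the core task is to bound, for each $k$, the single-event probability $\Pr\left(\frac{1}{m}\left\Vert \mathbf{H}\mathbf{b}\right\Vert _{1}\le\delta\right)$ where $\mathbf{b}$ has support size $k$.

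Next I would fix $\mathbf{b}$ and write $\left\Vert \mathbf{H}\mathbf{b}\right\Vert _{1}=\sum_{i=1}^{m}\left|\mathbf{H}_{i,\cdot}\mathbf{b}\right|$, where each $\mathbf{H}_{i,\cdot}\mathbf{b}$ is a zero-mean Gaussian with variance $\frac{\pi}{2}\left\Vert \mathbf{b}\right\Vert _{2}^{2}\ge\frac{\pi}{2}k$ (using $H_{i,j}=\tfrac{\sqrt{2\pi}}{2}A_{i,j}$ and $\left\Vert \mathbf{b}\right\Vert _{2}^{2}\ge k$ for integer $\mathbf{b}$ with support $k$). Hence $\left|\mathbf{H}_{i,\cdot}\mathbf{b}\right|$ stochastically dominates $\sqrt{\tfrac{\pi k}{2}}\,\left|Z\right|$ with $Z$ standard Gaussian, and $\mathrm{E}\left[\left|\mathbf{H}_{i,\cdot}\mathbf{b}\right|\right]\ge\sqrt{\tfrac{\pi k}{2}}\cdot\sqrt{\tfrac{2}{\pi}}=\sqrt{k}$. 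The event $\frac{1}{m}\left\Vert \mathbf{H}\mathbf{b}\right\Vert _{1}\le\delta$ is a lower-tail deviation of a sum of i.i.d.\ nonnegative subgaussian-type variables below $\delta$, which is below its mean by roughly $\sqrt{k}-\delta$. I would apply the Chernoff bound for the lower tail: for any $\alpha>0$,
\[
\Pr\left(\frac{1}{m}\left\Vert \mathbf{H}\mathbf{b}\right\Vert _{1}\le\delta\right)\le\exp\left\{ m\left(\alpha\delta+\log\mathrm{E}\left[e^{-\alpha\left|\mathbf{H}_{i,\cdot}\mathbf{b}\right|}\right]\right)\right\} ,
\]
then bound $\mathrm{E}\left[e^{-\alpha\left|\mathbf{H}_{i,\cdot}\mathbf{b}\right|}\right]\le\mathrm{E}\left[e^{-\alpha\sqrt{\pi k/2}\,\left|Z\right|}\right]$ and use a clean closed-form or simple estimate for the moment generating function of $\left|Z\right|$ (e.g.\ $\mathrm{E}\left[e^{-s\left|Z\right|}\right]\le e^{-s\sqrt{2/\pi}+s^{2}/2}$, or the exact expression $2e^{s^{2}/2}\Phi(-s)$), optimizing $\alpha$ so that the exponent becomes $-\frac{m}{2}\left(\log k+\text{(lower-order in }k\text{)}\right)$-type, matching the $\log\frac{\pi}{2\delta}-1$ shape announced in~(\ref{eq:o_lb_L1-1}). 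The factor $\pi/2$ should enter through the variance scaling $\sqrt{\pi/2}$ of $\mathbf{H}_{i,\cdot}\mathbf{b}/\sqrt{k}$.

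Finally, I would substitute the single-event bound into the union bound, collect the $\exp\left\{-mk(\cdots)\right\}$ terms as in the WESC proof, take $K$ large enough that the minimum over $1\le k\le 2K$ of the per-$k$ exponent is attained near $k=2K$, and read off the threshold $\frac{\log K}{4K}(1+o_\delta(1))$ with $o_\delta(1)$ of the claimed form~(\ref{eq:o_lb_L1-1}); the sum of $2K$ exponentially small terms still goes to zero. The main obstacle I expect is controlling the lower tail of $\sum\left|\mathbf{H}_{i,\cdot}\mathbf{b}\right|$ sharply enough: unlike the $l_2$ case where $\left\Vert \mathbf{H}\mathbf{b}\right\Vert _{2}^{2}$ is exactly a scaled chi-square, here the summands $\left|\mathbf{H}_{i,\cdot}\mathbf{b}\right|$ are folded Gaussians whose MGF has no elementary closed form, so extracting the precise constant $\log\frac{\pi}{2\delta}-1$ requires a careful choice of $\alpha$ (scaling like $1/\sqrt{k}$) together with an $o(1)$-accurate expansion of $\log\mathrm{E}[e^{-\alpha\left|Z\right|}]$ rather than a crude subgaussian surrogate; getting the leading $\frac{\log k}{2k}$ term right while keeping the remainder uniform in $k$ is the delicate part.
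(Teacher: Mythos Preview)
Your overall plan matches the paper's: union bound over supports, then Chernoff on the lower tail of $\sum_i\left|\mathbf{H}_{i,\cdot}\mathbf{b}\right|$. The misstep is the claim that $\alpha$ should scale like $1/\sqrt{k}$. With that scaling (or with the subgaussian surrogate $\mathrm{E}[e^{-s|Z|}]\le e^{-s\sqrt{2/\pi}+s^{2}/2}$, whose optimum forces exactly that scaling) the argument $s=\alpha\sqrt{\pi k/2}$ stays bounded, the MGF is a constant in $k$, and the Chernoff exponent is only $O(1)$ per row. After multiplying by ${N\choose k}(4t)^{k}$ this yields a threshold $\log N/m=O(1/K)$, not the required $O(\log K/K)$; you would lose the $\log K$ factor entirely.

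The paper instead keeps $\alpha$ \emph{constant} and uses a Mills-ratio bound on the folded-Gaussian MGF. Writing $Y=\mathbf{A}_{i,\cdot}\mathbf{b}\sim N\!\left(0,\left\Vert\mathbf{b}\right\Vert_{2}^{2}\right)$ (so that $\tfrac{1}{m}\left\Vert\mathbf{Hb}\right\Vert_{1}\le\delta$ becomes $\tfrac{1}{m}\sum_i|Y_i|\le 2\delta/\sqrt{2\pi}$), one has
\[
\mathrm{E}\!\left[e^{-\alpha|Y|}\right]=2e^{\alpha^{2}\left\Vert\mathbf{b}\right\Vert_{2}^{2}/2}\,\Phi\!\left(-\alpha\left\Vert\mathbf{b}\right\Vert_{2}\right)\le\frac{2}{\sqrt{2\pi}\,\alpha\left\Vert\mathbf{b}\right\Vert_{2}}\le\frac{2}{\sqrt{2\pi k}\,\alpha},
\]
the first inequality being the standard Gaussian tail bound $\Phi(-x)\le\phi(x)/x$. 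Hence $\log\mathrm{E}[e^{-\alpha|Y|}]\le-\log\alpha-\tfrac{1}{2}\log k+\mathrm{const}$, and this $-\tfrac{1}{2}\log k$ is precisely the source of the $\log K/(4K)$ threshold. Optimizing $\alpha\cdot\tfrac{2\delta}{\sqrt{2\pi}}-\log\alpha$ gives the constant choice $\alpha=\sqrt{2\pi}/(2\delta)$ and Chernoff exponent $m\bigl(1-\log(\sqrt{k}\,\pi/(2\delta))\bigr)$, from which~(\ref{eq:o_lb_L1-1}) is read off directly.

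One further detail your outline omits: for small $k$ (the paper singles out $k\le3$) the exponent $1-\log(\sqrt{k}\,\pi/(2\delta))$ may be nonnegative, so the bound above is vacuous there. The paper treats these finitely many $k$ by a separate subgaussian lower-tail estimate (Theorem~\ref{thm:subgaussian-large-deviation}) and absorbs the resulting constants into the ``sufficiently large $K$'' hypothesis.
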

\begin{proof}
The proof starts by using the union bound, as \begin{align}
 & \Pr\left(\bigcup_{\left\Vert \mathbf{b}\right\Vert _{0}\le2K}\;\left\{ \mathbf{H}:\;\frac{1}{m}\left\Vert \mathbf{H}\mathbf{b}\right\Vert _{1}\le\delta\right\} \right)\nonumber \\
 & \le\sum_{k=1}^{2K}{N \choose k}\left(4t\right)^{k}\Pr\left(\frac{1}{m}\left\Vert \mathbf{H}\mathbf{b}\right\Vert _{1}\le\delta,\;\left\Vert \mathbf{b}\right\Vert _{0}=k\right).\label{eq:L1-union-bd}\end{align}
 To estimate the above upper bound, we have to upper bound the probability
\[
\Pr\left(\frac{1}{m}\left\Vert \mathbf{H}\mathbf{b}\right\Vert _{1}\le\delta,\;\left\Vert \mathbf{b}\right\Vert _{0}=k\right),\]
 for each $k=1,\cdots,2K$. Let us derive next an expression for such
an upper bound that holds for arbitrary values of $k\ge1$.

Note that \begin{align}
 & \mathrm{E}\left[e^{-\alpha\left|\sum_{j=1}^{k}b_{j}A_{i,j}\right|}\right]=\int_{0}^{\infty}\frac{2}{\sqrt{2\pi}\left\Vert \mathbf{b}\right\Vert _{2}}e^{-\frac{x^{2}}{2\left\Vert \mathbf{b}\right\Vert _{2}^{2}}}e^{-\alpha x}\cdot dx\nonumber \\
 & \overset{\left(a\right)}{=}\int_{0}^{\infty}\frac{2}{\sqrt{2\pi}}e^{-\frac{x^{2}}{2}}e^{-\alpha\left\Vert \mathbf{b}\right\Vert _{2}x}\cdot dx\nonumber \\
 & =e^{\frac{\alpha^{2}\left\Vert \mathbf{b}\right\Vert _{2}^{2}}{2}}\int_{0}^{\infty}\frac{2}{\sqrt{2\pi}}\exp\left(-\frac{\left(x+\alpha\left\Vert \mathbf{b}\right\Vert _{2}\right)^{2}}{2}\right)\cdot dx\nonumber \\
 & \overset{\left(b\right)}{=}e^{\frac{\alpha^{2}\left\Vert \mathbf{b}\right\Vert _{2}^{2}}{2}}\int_{\alpha\left\Vert \mathbf{b}\right\Vert _{2}}^{\infty}\frac{2}{\sqrt{2\pi}}\exp\left(-\frac{x^{2}}{2}\right)\cdot dx\nonumber \\
 & \le e^{\frac{\alpha^{2}\left\Vert \mathbf{b}\right\Vert _{2}^{2}}{2}}\int_{\alpha\left\Vert \mathbf{b}\right\Vert _{2}}^{\infty}\frac{x}{\alpha\left\Vert \mathbf{b}\right\Vert _{2}}\cdot\frac{2}{\sqrt{2\pi}}\exp\left(-\frac{x^{2}}{2}\right)\cdot dx\nonumber \\
 & =\frac{1}{\alpha\left\Vert \mathbf{b}\right\Vert _{2}}\cdot\frac{2}{\sqrt{2\pi}}e^{\frac{\alpha^{2}\left\Vert \mathbf{b}\right\Vert _{2}^{2}}{2}}e^{-\frac{\alpha^{2}\left\Vert \mathbf{b}\right\Vert _{2}^{2}}{2}}\nonumber \\
 & \overset{\left(c\right)}{\le}\frac{1}{\alpha}\frac{2}{\sqrt{2\pi k}},\label{eq:MGF_Gaussian_case}\end{align}
 where $\left(a\right)$ and $\left(b\right)$ follow from the change
of variables $x^{\prime}=x/\left\Vert \mathbf{b}\right\Vert _{2}$
and $x^{\prime}=x+\alpha\left\Vert \mathbf{b}\right\Vert _{2}$, respectively.
Inequality $\left(c\right)$ holds based on the assumption that $\left\Vert \mathbf{b}\right\Vert _{2}\ge k$.
As a result, \begin{align}
 & \Pr\left(\frac{1}{m}\sum_{i=1}^{m}\left|\sum_{j}b_{j}H_{i,j}\right|\le\delta\right)\nonumber \\
 & =\Pr\left(\frac{1}{m}\sum_{i=1}^{m}\left|\sum_{j=1}^{k}b_{j}A_{i,j}\right|\le\frac{2}{\sqrt{2\pi}}\delta\right)\nonumber \\
 & \le\exp\left\{ m\left(\alpha\frac{2\delta}{\sqrt{2\pi}}+\log\mathrm{E}\left[e^{-\alpha\left|\sum_{j}b_{j}H_{j}\right|}\right]\right)\right\} \nonumber \\
 & \le\exp\left\{ m\left(\alpha\frac{2\delta}{\sqrt{2\pi}}+\log\left(\frac{2}{\sqrt{2\pi k}}\frac{1}{\alpha}\right)\right)\right\} \nonumber \\
 & =\exp\left\{ m\left(\alpha\frac{2\delta}{\sqrt{2\pi}}-\log\left(\alpha\sqrt{\frac{\pi k}{2}}\right)\right)\right\} \nonumber \\
 & =\exp\left\{ m\left(1-\log\left(\sqrt{k}\frac{\pi}{2\delta}\right)\right)\right\} ,\label{eq:ub-L1-k}\end{align}
 where the last equality is obtained by specializing $\alpha=\sqrt{2\,\pi}/2\,\delta$.

The upper bound in~(\ref{eq:ub-L1-k}) is useful only when it is
less than one, or equivalently, \begin{equation}
\log\left(\sqrt{k}\frac{\pi}{2\delta}\right)>1.\label{eq:useful-condition}\end{equation}
 For any $\delta\in\left(0,1\right)$, if $k\ge4$, inequality~(\ref{eq:useful-condition})
holds. Thus, for any $k\ge4$, \begin{align}
 & {N \choose k}\left(4t\right)^{k}\Pr\left(\frac{1}{m}\left\Vert \mathbf{H}\mathbf{b}\right\Vert _{1}\le\delta,\;\left\Vert \mathbf{b}\right\Vert _{0}=k\right)\nonumber \\
 & \le\exp\left\{ -mk\left(\frac{\log k}{2k}\left(1+o_{\delta}\left(1\right)\right)-\frac{\log\left(4t\right)}{m}-\frac{\log N}{m}\right)\right\} \label{eq:ub-l1-general-k}\\
 & \rightarrow0,\nonumber \end{align}
 as $\left(m,N\right)\rightarrow\infty$ with\[
\underset{\left(m,N\right)\rightarrow\infty}{\lim}\frac{\log N}{m}<\frac{\log k}{2k}\left(1+o_{\delta}\left(1\right)\right),\]
 where\[
o_{\delta}\left(1\right)=\frac{2}{\log k}\left(\log\frac{\pi}{2\delta}-1\right).\]

Another upper bound is needed for $k=1,2,3$. For a fixed $k$ taking
one of these values, \begin{align*}
 & \Pr\left(\frac{1}{m}\left\Vert \mathbf{H}\mathbf{b}\right\Vert _{1}\le\delta\right)\\
 & =\Pr\left(\frac{1}{m}\sum_{i}\left|\sum_{j}A_{i,j}b_{j}\right|<\frac{2}{\sqrt{2\pi}}\delta\right)\\
 & =\Pr\left(\sum_{i}\left(\left|\sum_{j}A_{i,j}b_{j}\right|-\frac{2\left\Vert \mathbf{b}\right\Vert _{2}}{\sqrt{2\pi}}\right)<\frac{2m}{\sqrt{2\pi}}\left(\delta-\left\Vert \mathbf{b}\right\Vert _{2}\right)\right).\end{align*}
 It is straightforward to verify that $\sum_{j}A_{i,j}b_{j}$ is Gaussian
and that \[
\mathrm{E}\left[\left|\sum_{j}A_{i,j}b_{j}\right|\right]=\frac{2\left\Vert \mathbf{b}\right\Vert _{2}}{\sqrt{2\pi}}.\]
 Thus \[
\sum_{i}\left(\left|\sum_{j}A_{i,j}b_{j}\right|-\frac{2\left\Vert \mathbf{b}\right\Vert _{2}}{\sqrt{2\pi}}\right)\]
 is a sum of independent zero-mean subgaussian random variables. Furthermore,
$\left\Vert \mathbf{b}\right\Vert _{2}\in\left[\sqrt{k},\sqrt{2k}t\right]$
and therefore, $\delta-\left\Vert \mathbf{b}\right\Vert _{2}<0$.
Hence, we can apply Theorem \ref{thm:subgaussian-large-deviation}
of Appendix \ref{sub:SubGaussian-stuff}: as a result, there exist
positive constants $c_{3,k}$ and $c_{4,k}$ such that \begin{align*}
 & \Pr\left(\frac{1}{m}\left\Vert \mathbf{H}\mathbf{b}\right\Vert _{1}\le\delta\right)\\
 & \le c_{3,k}\exp\left(-c_{4,k}m\left(\delta-\left\Vert \mathbf{b}\right\Vert _{2}\right)^{2}\right)\\
 & \le c_{3,k}\exp\left(-c_{4,k}m\left(\sqrt{k}-\delta\right)^{2}\right).\end{align*}
 Note that the values of $c_{3,k}$ and $c_{4,k}$ depend on $k$.
Consequently, \begin{align}
 & {N \choose k}\left(4t\right)^{k}\Pr\left(\frac{1}{m}\left\Vert \mathbf{H}\mathbf{b}\right\Vert _{1}\le\delta,\;\left\Vert \mathbf{b}\right\Vert _{0}=k\right)\nonumber \\
 & \le c_{3,k}\exp\left\{ -mk\left(c_{4,k}\left(1-\frac{\delta}{\sqrt{k}}\right)^{2}\right.\right.\nonumber \\
 & \qquad\qquad\left.\left.\phantom{\left(\frac{a}{\sqrt{b}}\right)^{2}}-\frac{\log\left(4t\right)}{m}-\frac{\log N}{m}\right)\right\} \label{eq:ub-k-le-3}\\
 & \rightarrow0\nonumber \end{align}
 as $m,N\rightarrow\infty$ with\[
\underset{\left(m,N\right)\rightarrow\infty}{\lim}\frac{\log N}{m}<c_{4,k}\left(1-\frac{\delta}{\sqrt{k}}\right)^{2}.\]

Finally, substitute the upper bounds of~(\ref{eq:ub-l1-general-k})
and (\ref{eq:ub-k-le-3}) into the union bound of Equation~(\ref{eq:L1-union-bd}).
If $K$ is large enough so that \[
\frac{\log K}{4K}\left(1+o_{\delta}\left(1\right)\right)<c_{4,k}\left(1-\frac{\delta}{\sqrt{k}}\right)^{2}\;\mathrm{for\; all}\; k=1,2,3,\]
 and if \[
\frac{\log K}{4K}\left(1+o_{\delta}\left(1\right)\right)\le\underset{4\le k\le2K}{\min}\;\frac{\log k}{2k}\left(1+o_{\delta}\left(1\right)\right),\]
 where $o_{\delta}\left(1\right)$ is as given in (\ref{eq:o_lb_L1-1}),
then the desired result (\ref{eq:L1-delta}) holds. 
\end{proof}
\vspace{0.05in}

Based on Theorem \ref{thm:L1-superposition-1}, we are ready to characterize
the asymptotic region in which $\Pr\left(E_{2}^{c}\bigcap E_{1}\right)\rightarrow0$.

\begin{thm}
\label{thm:L1-superposition-2}Define $\mathbf{A}$ and $\mathbf{H}$
as in Theorem \ref{thm:L1-superposition-2}. For a given $\mathbf{H},$
define the diagonal matrix \[
\mathbf{\Lambda_{H}}=\left[\begin{array}{ccc}
m/\left\Vert \mathbf{h}_{1}\right\Vert _{1} &  & 0\\
 & \ddots\\
0 &  & m/\left\Vert \mathbf{h}_{N}\right\Vert _{1}\end{array}\right].\]
 For a given $d\in\left(0,1\right)$, choose a $\delta>0$ such that
$d\left(1+\delta\right)<1$. Define the set $E_{1}$ as in (\ref{eq:Event1-l1}).
For sufficiently large $K$, if \[
\underset{\left(m,N\right)\rightarrow\infty}{\lim}\frac{\log N}{m}<\frac{\log K}{4K}\left(1+o\left(1\right)\right),\]
 where \begin{equation}
o\left(1\right)=\frac{2}{\log K}\left(\log\pi-1-\log2\right),\label{eq:o_lb_L1_WSC-2}\end{equation}
 then\[
\underset{\left(m,N\right)\rightarrow\infty}{\lim}\Pr\left(\frac{1}{m}\left\Vert \mathbf{H}\mathbf{\Lambda_{H}}\left(\mathbf{b}_{1}-\mathbf{b}_{2}\right)\right\Vert _{1}\le d,\; E_{1}\right)=0\]
 for all pairs of $\mathbf{b}_{1},\mathbf{b}_{2}\in\mathcal{B}_{K}$
such that $\mathbf{b}_{1}\ne\mathbf{b}_{2}$. 
\end{thm}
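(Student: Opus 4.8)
The plan is to reproduce the reduction used in the proof of Theorem~\ref{thm:L2-superposition-2}, with the $l_2$ ingredients replaced by their $l_1$ analogues. First I would observe that, because $\mathbf{v}_j=\mathbf{h}_j/\left\Vert\mathbf{h}_j\right\Vert_1$ and $\mathbf{\Lambda_H}$ has diagonal entries $m/\left\Vert\mathbf{h}_j\right\Vert_1$, one has $\mathbf{C}\left(\mathbf{b}_1-\mathbf{b}_2\right)=\frac{1}{m}\mathbf{H}\mathbf{\Lambda_H}\left(\mathbf{b}_1-\mathbf{b}_2\right)$. Writing $\mathbf{b}'=\mathbf{\Lambda_H}\left(\mathbf{b}_1-\mathbf{b}_2\right)$, the event $\left\{\frac{1}{m}\left\Vert\mathbf{H}\mathbf{\Lambda_H}\left(\mathbf{b}_1-\mathbf{b}_2\right)\right\Vert_1\le d\right\}$ is identical to $\left\{\frac{1}{m}\left\Vert\mathbf{H}\bigl((1+\delta)\mathbf{b}'\bigr)\right\Vert_1\le d(1+\delta)\right\}$, so it suffices to bound the probability of the latter intersected with $E_1$.

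Next I would verify that on $E_1$ the vector $(1+\delta)\mathbf{b}'$ has the two structural properties that drive the proof of Theorem~\ref{thm:L1-superposition-1}. On $E_1$ we have $\frac{1}{m}\left\Vert\mathbf{h}_j\right\Vert_1\le 1+\delta$, hence $m/\left\Vert\mathbf{h}_j\right\Vert_1\ge 1/(1+\delta)$; since $\mathbf{b}_1-\mathbf{b}_2$ is integer-valued with every nonzero entry of absolute value at least one, the nonzero entries of $(1+\delta)\mathbf{b}'$ satisfy $\left|(1+\delta)b_i'\right|\ge 1$, and $\left\Vert(1+\delta)\mathbf{b}'\right\Vert_0=\left\Vert\mathbf{b}_1-\mathbf{b}_2\right\Vert_0\le 2K$. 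These are exactly the facts used in Theorem~\ref{thm:L1-superposition-1} — integrality and boundedness of the alphabet enter there only through $\left\Vert\mathbf{b}\right\Vert_2\ge\sqrt{\left\Vert\mathbf{b}\right\Vert_0}$ and through $\sqrt{k}>\delta$, both of which survive with $(1+\delta)\mathbf{b}'$ in place of $\mathbf{b}$ and $d(1+\delta)\in(0,1)$ in place of $\delta$. So I would re-run the union-bound / Chernoff computation of Theorem~\ref{thm:L1-superposition-1} verbatim, intersecting each event with $E_1$ and taking the parameter to be $d(1+\delta)$.

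It then remains to track the higher-order term. The argument of Theorem~\ref{thm:L1-superposition-1} with parameter $d(1+\delta)$ requires $\lim\frac{\log N}{m}<\frac{\log K}{4K}\left(1+o_{d(1+\delta)}(1)\right)$ with $o_{d(1+\delta)}(1)=\frac{2}{\log K}\left(\log\frac{\pi}{2d(1+\delta)}-1\right)$; since $d(1+\delta)<1$ we get $\log\frac{\pi}{2d(1+\delta)}\ge\log\frac{\pi}{2}$, so $o_{d(1+\delta)}(1)\ge\frac{2}{\log K}\left(\log\pi-1-\log2\right)=o(1)$ as in~(\ref{eq:o_lb_L1_WSC-2}). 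Hence any rate below $\frac{\log K}{4K}(1+o(1))$ already lies in the admissible region, and the claimed convergence to zero follows. I do not expect a genuine obstacle here: the only care needed is bookkeeping — the exceptional cases $k=1,2,3$ of Theorem~\ref{thm:L1-superposition-1} carry constants $c_{3,k},c_{4,k}$ that depend on the parameter, so one should re-read those three cases with $d(1+\delta)$ in place of $\delta$, noting that on $E_1$ the nonzero entries of $(1+\delta)\mathbf{b}'$ are now bounded above by $\tfrac{1+\delta}{1-\delta}\,2t$ rather than $2t$, which affects only the (irrelevant) upper bound on $\left\Vert(1+\delta)\mathbf{b}'\right\Vert_2$ and not the sign condition $\sqrt{k}-d(1+\delta)>0$. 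The main subtlety, exactly as in Theorem~\ref{thm:L2-superposition-2}, is making the substitution $\mathbf{b}\mapsto(1+\delta)\mathbf{b}'$ rigorous on $E_1$.
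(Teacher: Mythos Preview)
Your proposal is correct and follows essentially the same route as the paper: define $\mathbf{b}'=\mathbf{\Lambda_H}(\mathbf{b}_1-\mathbf{b}_2)$, observe that on $E_1$ every nonzero entry of $(1+\delta)\mathbf{b}'$ has absolute value at least one, and then rerun the Chernoff/union-bound argument of Theorem~\ref{thm:L1-superposition-1} with $d(1+\delta)$ in place of $\delta$, bounding the resulting $o_{d(1+\delta)}(1)$ term below by~\eqref{eq:o_lb_L1_WSC-2}. If anything, you are more explicit than the paper about the $k=1,2,3$ bookkeeping and the modified upper bound on the entries of $(1+\delta)\mathbf{b}'$; the paper's proof simply asserts that all steps of Theorem~\ref{thm:L1-superposition-1} carry over.
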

\begin{proof}
Let $\mathbf{b}^{\prime}=\mathbf{\Lambda_{H}}\left(\mathbf{b}_{1}-\mathbf{b}_{2}\right)$.
On the set $E_{1}$, since \[
\frac{1}{\sqrt{m}}\left\Vert \mathbf{h}_{j}\right\Vert _{1}\le1+\delta,\]
 for all $1\le j\le N$, all the nonzero entries of $\left(1+\delta\right)\mathbf{b}^{\prime}$
satisfy \[
\left|\left(1+\delta\right)b_{i}^{\prime}\right|\ge1.\]
 Replace $\mathbf{b}$ in Theorem \ref{thm:L1-superposition-1} with
$\left(1+\delta\right)\mathbf{b}^{\prime}$. All arguments used in
the proof of Theorem \ref{thm:L1-superposition-1} are still valid,
except that now, the higher order term~(\ref{eq:o_lb_L1-1}) in the
asymptotic expression reads as \begin{align*}
 & o_{d\left(1+\delta\right)}\left(1\right)\\
 & =\frac{2}{\log K}\left(\log\pi-1-\log2-\log\left(d\left(1+\delta\right)\right)\right)\\
 & \ge\frac{2}{\log K}\left(\log\pi-1-\log2\right).\end{align*}
 This completes the proof. 
\end{proof}

\section{\label{sec:lb-ngL1WSC}Proof of the Lower Bound for Nonnegative $l_{1}$-WSCs}

The proof follows along the same lines as the one described for $l_{1}$-WSCs.
However, there is a serious technical difficulty associated with the
analysis of nonnegative $l_{1}$-WSCs. Let $\mathbf{A}\in\mathbb{R}^{m\times N}$
be a standard Gaussian random matrix. For general $l_{1}$-WSCs, we
let \[
H_{i,j}=\frac{\sqrt{2\pi}}{2}A_{i,j},\;1\le i\le m,\;1\le j\le N,\]
 and therefore, \[
\sum_{j=1}^{N}H_{i,j}b_{j}\]
 is a Gaussian random variable, whose parameters are easy to determine.
However, for nonnegative $l_{1}$-WSCs, one has to set \begin{equation}
H_{i,j}=\frac{\sqrt{2\pi}}{2}\left|A_{i,j}\right|,\;1\le i\le m,\;1\le j\le N.\label{eq:H-def-ngL1WSC}\end{equation}
 Since the random variables $H_{i,j}$s are not Gaussian, but rather
one-sided Gaussian, \[
\sum_{j=1}^{N}H_{i,j}b_{j}\]
 is not Gaussian distributed, and it is complicated to exactly characterize
its properties.

Nevertheless, we can still define $E_{1}$ and $E_{2}$ as in Equations~(\ref{eq:Event1-l1})
and~(\ref{eq:Event2-l1}). The results of Theorem~\ref{thm:L1-codeword-norm}
are still valid under the non-negativity assumption: the norms of
all $\mathbf{H}$ columns concentrate around one in the asymptotic
regime described in Theorem~\ref{thm:L1-codeword-norm}. The key
step in the proof of the lower bound is to identify the asymptotic
region in which any two different superpositions are sufficiently
separated in terms of the $l_{1}$-distance. We therefore use an approach
similar to the one we invoked twice before: we first prove a simplified
version of the claim, and then proceed with proving the needed result
by introducing some auxiliary variables and notation.

\begin{thm}
\label{thm:L1-superposition-ngWSC-1}Let $\mathbf{A}\in\mathbb{R}^{m\times N}$
be a standard Gaussian random matrix. Let $\mathbf{H}$ be the matrix
with entries \[
H_{i,j}=\frac{\sqrt{2\pi}}{2}\left|A_{i,j}\right|,\;1\le i\le m,\;1\le j\le N.\]
 Let $\delta\in\left(0,1\right)$ be given. For a given sufficiently
large $K$, if \[
\underset{\left(m,N\right)\rightarrow\infty}{\lim}\frac{\log N}{m}<\frac{\log K}{4K}\left(1+o_{t}\left(1\right)\right)\]
 where $o_{t}\left(1\right)$ is given in (\ref{eq:o_lb_ngL1WSC-1}),
then \begin{equation}
\underset{\left(m,N\right)\rightarrow\infty}{\lim}\Pr\left(\frac{1}{m}\left\Vert \mathbf{H}\mathbf{b}\right\Vert _{1}\le\delta\right)=0\label{eq:ngL1-delta}\end{equation}
 for all $\mathbf{b}\in B_{2t}^{N}$ and $\left\Vert \mathbf{b}\right\Vert _{0}\le2K$. 
\end{thm}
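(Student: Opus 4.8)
The plan is to follow the blueprint of Theorem~\ref{thm:L1-superposition-1}: apply the union bound over support sizes $k$, supports $I$ with $|I|=k$, and sign/magnitude patterns $\mathbf{b}_I$, so that the left side of (\ref{eq:ngL1-delta}) is at most $\sum_{k=1}^{2K}\binom{N}{k}(4t)^k\,\Pr(\frac{1}{m}\left\Vert\mathbf{Hb}\right\Vert_1\le\delta,\ \left\Vert\mathbf{b}\right\Vert_0=k)$, and then to control each summand by a Chernoff bound. Writing $W_i=\sum_{j\in I}b_j\left|A_{i,j}\right|$, the rows are i.i.d., so for every $\alpha>0$
\[
\Pr\left(\frac{1}{m}\left\Vert\mathbf{Hb}\right\Vert_1\le\delta\right)=\Pr\left(\frac{1}{m}\sum_{i=1}^m\left|W_i\right|\le\frac{2\delta}{\sqrt{2\pi}}\right)\le\exp\left\{m\left(\alpha\frac{2\delta}{\sqrt{2\pi}}+\log\mathrm{E}\left[e^{-\alpha\left|W_1\right|}\right]\right)\right\}.
\]
The entire difficulty is concentrated in bounding $\mathrm{E}[e^{-\alpha\left|W_1\right|}]$: unlike in Theorem~\ref{thm:L1-superposition-1}, where $\sum_j b_j A_{i,j}$ was exactly Gaussian and the moment generating function could be evaluated in closed form as in~(\ref{eq:MGF_Gaussian_case}), here $W_1$ is a \emph{signed} sum of independent one-sided Gaussians and is not Gaussian; this is exactly the obstacle flagged before the theorem.

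My approach to $\mathrm{E}[e^{-\alpha\left|W_1\right|}]$ is a Gaussian approximation made rigorous by a Berry--Esseen estimate. Write $W_1=\mu_W+V_1$ with $\mu_W=\sqrt{2/\pi}\sum_{j\in I}b_j$ and $V_1=\sum_{j\in I}b_j(\left|A_{1,j}\right|-\sqrt{2/\pi})$, a sum of independent, zero-mean, bounded-variance terms (each $\left|A_{1,j}\right|-\sqrt{2/\pi}$ being zero-mean subgaussian by Proposition~\ref{pro:subgaussian-shift}), with $\mathrm{Var}(V_1)=\sigma_W^2:=(1-2/\pi)\left\Vert\mathbf{b}\right\Vert_2^2\ge(1-2/\pi)k$. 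Since $\left|b_j\right|\le 2t$, the Lyapunov ratio $\sum_{j}\mathrm{E}|b_j(\left|A_{1,j}\right|-\sqrt{2/\pi})|^3/\sigma_W^3$ is $O(1/\sqrt{k})$ once $k$ exceeds a threshold that grows with $t$, so the Kolmogorov distance between the law of $W_1$ and $N(\mu_W,\sigma_W^2)$ is small; integrating $|\tfrac{d}{dx}e^{-\alpha|x|}|$ against the difference of distribution functions yields $\mathrm{E}[e^{-\alpha\left|W_1\right|}]\le\mathrm{E}[e^{-\alpha|G|}]+(\mathrm{error})$ with $G\sim N(\mu_W,\sigma_W^2)$. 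Because $e^{-\alpha|x|}$ is symmetric and unimodal, $\mathrm{E}[e^{-\alpha|G|}]$ is maximized at $\mu_W=0$, where it is bounded by $\frac{2}{\sqrt{2\pi}\,\sigma_W\,\alpha}\le\frac{2}{\sqrt{2\pi(1-2/\pi)k}\,\alpha}$, exactly as in~(\ref{eq:MGF_Gaussian_case}) with $\sqrt{k}$ replaced by $\sqrt{(1-2/\pi)k}$. Substituting, optimizing over $\alpha$ (the choice $\alpha=\sqrt{2\pi}/(2\delta)$ is again essentially optimal), and plugging the result back into the union bound yields, for each sufficiently large $k$, a per-$k$ rate threshold of the form $\frac{\log k}{2k}(1+o_{t,\delta}(1))$, the $o_{t,\delta}(1)$ term now absorbing both the constant $\sqrt{1-2/\pi}$ and the Berry--Esseen correction.

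For the small values of $k$ — those below the threshold where the Berry--Esseen bound becomes vacuous — the quantity $\frac{\log k}{2k}$ is too small and the Gaussian approximation is useless, so I would dispose of them separately, as in Theorem~\ref{thm:L1-superposition-1}: $\mathrm{E}\left|W_1\right|$ is bounded below (by $\sqrt{2/\pi}\left|\sum b_j\right|\ge\sqrt{2/\pi}$ when $\sum b_j\ne 0$, and by a $k$-dependent positive multiple of $\left\Vert\mathbf{b}\right\Vert_2$ otherwise via a Paley--Zygmund argument), the variable $\left|W_1\right|-\mathrm{E}\left|W_1\right|$ is subgaussian, and Theorem~\ref{thm:subgaussian-large-deviation} of Appendix~\ref{sub:SubGaussian-stuff} gives $\Pr(\frac{1}{m}\left\Vert\mathbf{Hb}\right\Vert_1\le\delta)\le c_{3,k}\exp(-c_{4,k}m(\mathrm{E}\left|Z_1\right|-\delta)^2)$ with $c_{3,k},c_{4,k}>0$, where $\mathrm{E}\left|Z_1\right|=\frac{\sqrt{2\pi}}{2}\mathrm{E}\left|W_1\right|$; choosing $\delta$ small makes $\mathrm{E}\left|Z_1\right|-\delta$ positive and hence the corresponding threshold positive. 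Finally, taking $K$ large enough that the minimum of all per-$k$ thresholds over $1\le k\le 2K$ is attained at $k=2K$ produces the claimed bound $\frac{\log K}{4K}(1+o_t(1))$, with $o_t(1)$ given by~(\ref{eq:o_lb_ngL1WSC-1}).

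The main obstacle, as anticipated, is the loss of exact Gaussianity. The Berry--Esseen correction to $\mathrm{E}[e^{-\alpha\left|W_1\right|}]$ is only of the same order ($1/\sqrt{k}$) as the Gaussian main term, so it does not simply vanish; it degrades the constant inside the logarithm and, more seriously, it is controllable only once $k$ exceeds a threshold growing with $t$. Reconciling the small-$k$ regime — handled by crude subgaussian large deviations, whose rate constants deteriorate with $t$ through $\left\Vert\mathbf{b}\right\Vert_2\le 2t\sqrt{k}$ — with the large-$k$ regime forces the growth restriction $t=o(K^{1/3})$; this is the one place where the argument is genuinely harder than for general $l_1$-WSCs, and it is the source of the extra dependence of the $o_t(1)$ term on $t$.
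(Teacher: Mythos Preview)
Your plan is correct and follows the paper's architecture exactly: union bound over $k$, Chernoff on each summand, a Berry--Esseen--based control of $\mathrm{E}[e^{-\alpha|W_1|}]$ for large $k$, and a subgaussian large-deviation fallback for finitely many small $k$, with the minimum over $1\le k\le 2K$ attained at $k=2K$.

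The one place you diverge is the mechanism for bounding the moment generating function. You push the Berry--Esseen Kolmogorov error through the integration-by-parts identity $\bigl|\mathrm{E}[g(W_1)]-\mathrm{E}[g(G)]\bigr|\le\|F_{W_1}-F_G\|_\infty\int|g'|$ (total variation of $g(x)=e^{-\alpha|x|}$ equals $2$), and then invoke Anderson's lemma to replace $G\sim N(\mu_W,\sigma_W^2)$ by the centered Gaussian. The paper instead splits $\mathrm{E}[e^{-\alpha|Y_{\mathbf{b},k}|}]$ at the threshold $|Y_{\mathbf{b},k}|=\frac{1}{\alpha}\log\sqrt{k}$: the tail piece is bounded trivially by $e^{-\log\sqrt{k}}=1/\sqrt{k}$, and the central piece by $\Pr\bigl(|Y_{\mathbf{b},k}|<\frac{1}{\alpha}\log\sqrt{k}\bigr)$, which Lemma~\ref{lem:prob-scaled-by-sqrt(k)} bounds by (interval width)$/\sqrt{2\pi}+O(t^3/\sqrt{k})$. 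Both routes produce an MGF bound of the form $\frac{1}{\sqrt{k}}\bigl(\mathrm{const}_{\alpha}+O(t^{3})\bigr)$ and hence the same exponent $-\frac{1}{2}\log k\,(1+o_{t,k}(1))$; the paper then sets $\alpha=1/\delta$. Your route is a bit cleaner (no ad hoc threshold) but needs the Anderson-type monotonicity in $\mu_W$; the paper's route sidesteps the nonzero mean because the probability of any interval of a given width under a Gaussian is at most (width)$\cdot\sup\mathrm{density}$, irrespective of centering.
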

\begin{proof}
Similarly as for the corresponding proof for general $l_{1}$-WSCs,
we need a tight upper bound on the moment generation function of the
random variable \[
\left|\sum_{j=1}^{k}b_{j}\left|A_{i,j}\right|\right|.\]
 For this purpose, we resort to the use of the Central Limit Theorem.
We first approximate the distribution of $\sum_{j=1}^{k}b_{j}\left|A_{i,j}\right|$
by a Gaussian distribution. Then, we uniformly upper bound the approximation
error according to the Berry-Esseen Theorem (see~\cite{Book_Feller_II}
and Appendix~\ref{sub:BE-Theorem} for an overview of this theory).
Based on this approximation, we obtain an upper bound on the moment
generating function, with leading term $\left(\log k\right)/\sqrt{k}$
(see Equation~(\ref{eq:MGF_ub}) for details).

To simplify the notation, for a $\mathbf{b}\in B_{2t}^{N}$ with $\left\Vert \mathbf{b}_{0}\right\Vert =k$,
let \[
Y_{\mathbf{b},k}=\sum_{j=1}^{N}\frac{\sqrt{2\pi}}{2}\left|A_{j}\right|b_{j},\]
 where $A_{j}$s are standard Gaussian random variables. Then, \begin{align*}
 & \Pr\left(\frac{1}{m}\left\Vert \mathbf{H}\mathbf{b}\right\Vert _{1}\le\delta,\;\left\Vert \mathbf{b}\right\Vert _{0}=k\right)\\
 & \le\exp\left\{ m\left(\alpha\delta+\log\mathrm{E}\left[e^{-\alpha\left|Y_{\mathbf{b},k}\right|}\right]\right)\right\} ,\end{align*}
 where the inequality holds for all $\alpha>0$. Now, we fix $\alpha$
and upper bound the moment generating function as follows. Note that
\begin{align}
 & \mathrm{E}\left[e^{-\alpha\left|Y_{\mathbf{b},k}\right|}\right]\nonumber \\
 & =\mathrm{E}\left[e^{-\alpha\left|Y_{\mathbf{b},k}\right|},\;\left|Y_{\mathbf{b},k}\right|\ge\frac{1}{\alpha}\log\sqrt{k}\right]\label{eq:MGF_part1}\\
 & \quad+\mathrm{E}\left[e^{-\alpha\left|Y_{\mathbf{b},k}\right|},\;\left|Y_{\mathbf{b},k}\right|<\frac{1}{\alpha}\log\sqrt{k}\right].\label{eq:MGF_part2}\end{align}
 The first term (\ref{eq:MGF_part1}) is upper bounded by\begin{align}
 & \mathrm{E}\left[e^{-\alpha\frac{1}{\alpha}\log\sqrt{k}},\;\left|Y_{\mathbf{b},k}\right|\ge\frac{1}{\alpha}\log\sqrt{k}\right]\nonumber \\
 & \le\mathrm{E}\left[\frac{1}{\sqrt{k}},\;\left|Y_{\mathbf{b},k}\right|\ge\frac{1}{\alpha}\log\sqrt{k}\right]\nonumber \\
 & \le\frac{1}{\sqrt{k}}\Pr\left(\left|Y_{\mathbf{b},k}\right|\ge\frac{1}{\alpha}\log\sqrt{k}\right)\nonumber \\
 & \le\frac{1}{\sqrt{k}}.\label{eq:MGF_part1_ub}\end{align}
 In order to upper bound the second term in Equation~(\ref{eq:MGF_part2}),
we apply Lemma~\ref{lem:prob-scaled-by-sqrt(k)} from the Appendix,
proved using the Central Limit Theorem and the Berry-Esseen result:\begin{align}
 & \mathrm{E}\left[1,\;\left|Y_{\mathbf{b},k}\right|<\frac{1}{\alpha}\log\sqrt{k}\right]\nonumber \\
 & =\Pr\left(\left|Y_{\mathbf{b},k}\right|<\frac{1}{\alpha}\log\sqrt{k}\right)\nonumber \\
 & =\Pr\left(\left|\sum_{j=1}^{k}b_{j}\left|A_{j}\right|\right|<\frac{2}{\sqrt{2\pi}}\frac{1}{\alpha}\log\sqrt{k}\right)\nonumber \\
 & \le\frac{2}{\sqrt{2\pi}}\frac{1}{\alpha\pi}\frac{\log\sqrt{k}}{\sqrt{k}}+12\frac{t^{3}}{\sqrt{k}}\mathrm{E}\left[\left|A\right|^{3}\right]\nonumber \\
 & =\frac{1}{\sqrt{k}}\frac{1}{\sqrt{2\pi}}\left(\frac{\log k}{\alpha\pi}+48t^{3}\right).\label{eq:MGF_part2_ub}\end{align}
 Combining the upper bounds in~(\ref{eq:MGF_part1_ub}) and~(\ref{eq:MGF_part2_ub})
shows that \begin{align}
\mathrm{E}\left[e^{-\alpha\left|Y_{\mathbf{b},k}\right|}\right] & \le\frac{1}{\sqrt{k}}\left(1+\frac{1}{\sqrt{2\pi}}\left(\frac{\log k}{\alpha\pi}+48t^{3}\right)\right)\nonumber \\
 & \le\frac{1}{\sqrt{k}}\left(1+\frac{\log k}{4\alpha}+24t^{3}\right).\label{eq:MGF_ub}\end{align}
 Next, set $\alpha=1/\delta$. Then \begin{align*}
 & \Pr\left(\frac{1}{m}\left\Vert \mathbf{H}\mathbf{b}\right\Vert _{1}\le\delta,\;\left\Vert \mathbf{b}\right\Vert _{0}=k\right)\\
 & \le\exp\left\{ -m\left(\frac{1}{2}\log k\right)\left(1+o_{t,k}\left(1\right)\right)\right\} ,\end{align*}
 where\[
o_{t,k}\left(1\right)=-\frac{2+2\log\left(1+\frac{\log k}{4}+24t^{3}\right)}{\log k}.\]

Now we choose a $k_{0}\in\mathbb{Z}^{+}$ such that for all $k\ge k_{0}$,
\[
\frac{\log k}{2k}\left(1+o_{t,k}\left(1\right)\right)>0.\]
 It is straightforward to verify that $k_{0}$ is well defined. Consider
the case when $1\le k\le k_{0}$. It can be verified that \[
\sum_{j=1}^{k}b_{j}H_{i,j}=\frac{\sqrt{2\pi}}{2}\sum_{j=1}^{k}b_{j}\left|A_{i,j}\right|\]
 is Subgaussian and that \[
\mathrm{E}\left[\left|\sum_{j=1}^{k}b_{j}H_{i,j}\right|\right]\ge1\]
 for all $\mathbf{b}\in B_{2t}^{N}$ such that $\left\Vert \mathbf{b}\right\Vert _{0}=k$.
By applying the large deviations result for Subgaussian random variables,
as stated in Theorem \ref{thm:subgaussian-large-deviation}, and the
union bound, it can be proved that there exists a $c_{k}>0$ such
that \begin{align*}
 & \Pr\left(\bigcup_{\left\Vert \mathbf{b}\right\Vert _{0}=k}\left\{ \mathbf{H}:\;\frac{1}{m}\left\Vert \mathbf{H}\mathbf{b}\right\Vert _{1}\le\delta\right\} \right)\\
 & \le\exp\left\{ -mk\left(c_{k}-\frac{\log\left(4t\right)}{m}-\frac{\log N}{m}\right)\right\} \\
 & \rightarrow0.\end{align*}
 The above result holds whenever $m,N\rightarrow\infty$ simultaneously,
with \[
\underset{\left(m,N\right)\rightarrow\infty}{\lim}\frac{\log N}{m}<c_{k}.\]

Finally, let $K$ be sufficiently large so that \[
\frac{\log K}{4K}\left(1+o_{t,2K}\left(1\right)\right)\le\underset{k_{0}\le k\le2K}{\min}\frac{\log k}{2k}\left(1+o_{t,k}\left(1\right)\right),\]
 and \[
\frac{\log K}{4K}\left(1+o_{t,2K}\left(1\right)\right)\le\underset{1\le k\le k_{0}}{\min}c_{k}.\]
 Then \begin{align*}
 & \Pr\left(\bigcup_{\left\Vert \mathbf{b}\right\Vert _{0}\le2K}\;\left\{ \mathbf{H}:\;\frac{1}{m}\left\Vert \mathbf{H}\mathbf{b}\right\Vert _{1}\le\delta\right\} \right)\\
 & \le\sum_{k=1}^{2K}{N \choose k}\left(4t\right)^{k}\Pr\left(\frac{1}{m}\left\Vert \mathbf{H}\mathbf{b}\right\Vert _{1}\le\delta,\;\left\Vert \mathbf{b}\right\Vert _{0}=k\right)\\
 & \le\sum_{k=1}^{k_{0}}\exp\left\{ -mk\left(c_{k}-\frac{\log\left(4t\right)}{m}-\frac{\log N}{m}\right)\right\} \\
 & \quad+\sum_{k=k_{0}+1}^{2K}\exp\left\{ -mk\left(\frac{\log k}{2k}\left(1+o_{t,k}\left(1\right)\right)\right.\right.\\
 & \qquad\qquad\quad\left.\left.-\frac{\log\left(4t\right)}{m}-\frac{\log N}{m}\right)\right\} \\
 & \rightarrow0,\end{align*}
 as $m,N\rightarrow\infty$ with \[
\underset{\left(m,N\right)\rightarrow\infty}{\lim}\frac{\log N}{m}<\frac{\log K}{4K}\left(1+o_{t}\left(1\right)\right),\]
 where \begin{equation}
o_{t}\left(1\right)=-\frac{2+2\log\left(1+\frac{\log2K}{4}+24t^{3}\right)}{\log\left(2K\right)}.\label{eq:o_lb_ngL1WSC-1}\end{equation}

\end{proof}
\vspace{0.05in}

Based on Theorem~\ref{thm:L1-superposition-ngWSC-1}, we can characterize
the rate region in which any two distinct superpositions are sufficiently
separated in the $l_{1}$ space.

\begin{thm}
\label{thm:L1-superposition-ngWSC-2}Define $\mathbf{A}$ and $\mathbf{H}$
as in Theorem \ref{thm:L1-superposition-ngWSC-1}. For a given $\mathbf{H},$
define the diagonal matrix \[
\mathbf{\Lambda_{H}}=\left[\begin{array}{ccc}
m/\left\Vert \mathbf{h}_{1}\right\Vert _{1} &  & 0\\
 & \ddots\\
0 &  & m/\left\Vert \mathbf{h}_{N}\right\Vert _{1}\end{array}\right].\]
 Also, for $d\in\left(0,1\right)$, choose a $\delta\in\left(0,\frac{1}{2}\right)$
such that $d\left(1+\delta\right)<1$. Define the set $E_{1}$ as
in (\ref{eq:Event1-l1}). Provided that $K$ is sufficiently large,
if \[
\underset{\left(m,N\right)\rightarrow\infty}{\lim}\frac{\log N}{m}<\frac{\log K}{4K}\left(1+o_{t}\left(1\right)\right)\]
 where \begin{equation}
o_{t}\left(1\right)=-\frac{2+2\log\left(1+\frac{\log2K}{4}+648t^{3}\right)}{\log\left(2K\right)},\label{eq:o_lb_ngL1WSC-2}\end{equation}
 then it holds \[
\underset{\left(m,N\right)\rightarrow\infty}{\lim}\Pr\left(\frac{1}{m}\left\Vert \mathbf{H}\mathbf{\Lambda_{H}}\left(\mathbf{b}_{1}-\mathbf{b}_{2}\right)\right\Vert _{1}\le d,\; E_{1}\right)=0,\]
 for all pairs of $\mathbf{b}_{1},\mathbf{b}_{2}\in\mathcal{B}_{K}$
such that $\mathbf{b}_{1}\ne\mathbf{b}_{2}$. 
\end{thm}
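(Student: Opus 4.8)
The plan is to derive the statement from Theorem~\ref{thm:L1-superposition-ngWSC-1} by the same rescaling device used in the proofs of Theorems~\ref{thm:L2-superposition-2} and~\ref{thm:L1-superposition-2}. Set $\mathbf{b}'=\mathbf{\Lambda_H}(\mathbf{b}_1-\mathbf{b}_2)$; since $\mathbf{H}\mathbf{\Lambda_H}=m\,\mathbf{C}$ by construction, one has $\tfrac1m\|\mathbf{H}\mathbf{b}'\|_1=\|\mathbf{C}(\mathbf{b}_1-\mathbf{b}_2)\|_1$, so the event to be controlled is $\{\tfrac1m\|\mathbf{H}\mathbf{b}'\|_1\le d\}\cap E_1$. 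Because $\mathbf{b}_1\ne\mathbf{b}_2$ are integer vectors in $\mathcal{B}_K$, the difference $\mathbf{b}_1-\mathbf{b}_2$ is supported on at most $2K$ coordinates, each a nonzero element of $B_{2t}$, hence of magnitude in $[1,2t]$. On $E_1$ the diagonal entries $m/\|\mathbf{h}_j\|_1$ of $\mathbf{\Lambda_H}$ lie in $\bigl(\tfrac1{1+\delta},\tfrac1{1-\delta}\bigr)$, so every nonzero coordinate of $(1+\delta)\mathbf{b}'$ has magnitude at least $1$ and less than $2t\,\tfrac{1+\delta}{1-\delta}$. Here the hypothesis $\delta<\tfrac12$ is exactly what is needed: it forces $\tfrac{1+\delta}{1-\delta}<3$, so that $(1+\delta)\mathbf{b}'$ is a real vector supported on at most $2K$ coordinates whose nonzero entries lie in $[1,6t)$ --- that is, it plays the role of a ``$B_{6t}$'' vector with effective alphabet parameter $3t$.

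I would then run the argument of Theorem~\ref{thm:L1-superposition-ngWSC-1} verbatim with $\mathbf{b}$ replaced by $(1+\delta)\mathbf{b}'$ and the threshold $\delta$ there replaced by $d(1+\delta)\in(0,1)$. The union bound is taken over the at most $\sum_k\binom{N}{k}(4t)^k$ distinct difference vectors $\mathbf{b}_1-\mathbf{b}_2$, exactly as before; and the per-term estimate of that proof depends on the (rescaled) difference vector only through (i) the lower bound $|b_j|\ge1$ on its nonzero entries, which still holds, (ii) the support size $\le 2K$, which is unchanged, and (iii) the upper bound on $|b_j|$, which enters cubically through the Berry--Esseen remainder and is now $6t$ rather than $2t$, multiplying that cubic constant by $3^3=27$ and thus turning the $24t^3$ of~(\ref{eq:o_lb_ngL1WSC-1}) into $648t^3$. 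Note that $(1+\delta)\mathbf{b}'$ need not have integer entries, but the Central Limit Theorem / Berry--Esseen step never uses integrality, only the moment bounds governed by (i) and (iii). Choosing $\alpha=1/\bigl(d(1+\delta)\bigr)$ makes the threshold cancel against $\alpha$ just as in the original derivation, so no further change to the exponent occurs, and one arrives at the higher-order term displayed in~(\ref{eq:o_lb_ngL1WSC-2}); the small-$k$ (Subgaussian) cases transcribe identically, only requiring $K$ a bit larger. This yields $\Pr\bigl(\tfrac1m\|\mathbf{H}\mathbf{b}'\|_1\le d,\,E_1\bigr)\to0$, which is the claim.

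The step I expect to need the most care is the constant bookkeeping around the Berry--Esseen term: the per-term moment-generating-function bound~(\ref{eq:MGF_ub}) of Theorem~\ref{thm:L1-superposition-ngWSC-1} must be invoked uniformly over all admissible vectors, since on $E_1$ the coordinates of $(1+\delta)\mathbf{b}'$ --- which depend on $\mathbf{H}$ through $\mathbf{\Lambda_H}$ --- are only pinned to the range $[1,6t)$ rather than being fixed. The fact that that bound is a function of the support size and the coordinate range alone is what legitimizes the substitution, and the passage from the constant $24$ to $648$ via the factor $27$ is the concrete trace left by the $\delta<\tfrac12$ restriction. Everything else is a routine transcription of the proof of Theorem~\ref{thm:L1-superposition-ngWSC-1}.
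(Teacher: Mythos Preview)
Your proposal is correct and follows essentially the same route as the paper: set $\mathbf{b}'=\mathbf{\Lambda_H}(\mathbf{b}_1-\mathbf{b}_2)$, use the $E_1$ bound $\tfrac12\le 1-\delta\le \tfrac1m\|\mathbf h_j\|_1\le 1+\delta\le\tfrac32$ to pin the nonzero entries of $(1+\delta)\mathbf{b}'$ to a range with lower bound $1$ and upper bound a fixed multiple of $t$, and then rerun the proof of Theorem~\ref{thm:L1-superposition-ngWSC-1} verbatim, with the cubic Berry--Esseen constant picking up the factor $27$ that turns $24t^3$ into $648t^3$. Your remarks that integrality is never used and that the MGF bound~(\ref{eq:MGF_ub}) depends only on the support size and the coordinate range (hence applies uniformly on $E_1$ even though $\mathbf{\Lambda_H}$ is random) are exactly the points one needs to justify the substitution; the paper's own proof is terser and leaves these implicit.
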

\begin{proof}
The proof is very similar to that of Theorem~\ref{thm:L1-superposition-ngWSC-1}.
The only difference is the following. Let $\mathbf{b}^{\prime}=\mathbf{\Lambda_{H}}\left(\mathbf{b}_{1}-\mathbf{b}_{2}\right)$.
Since \[
\frac{1}{2}\le1-\delta\le\frac{1}{m}\left\Vert \mathbf{h}_{j}\right\Vert _{1}\le1+\delta\le\frac{3}{2},\]
 all the nonzero entries of $\left(1+\delta\right)\mathbf{b}^{\prime}$
on the set $E_{1}$ satisfy the following inequality \[
1\le\left|\left(1+\delta\right)b_{i}^{\prime}\right|\le3t.\]
 As a result, we have the higher order term $o_{t}\left(1\right)$
as given in Equation~(\ref{eq:o_lb_ngL1WSC-2}). 
\end{proof}

\section{Conclusions}

\label{sec:Conclusion}

We introduced a new family of codes over the reals, termed weighted
superimposed codes. Weighted superimposed codes can be applied to
all problems in which one seeks to robustly distinguish between bounded
integer valued linear combinations of codewords that obey predefined
norm and sign constraints. As such, they can be seen as a special
instant of compressed sensing schemes in which the sparse sensing
vectors contain entries from a symmetric, bounded set of integers.
We characterized the achievable rate regions of three classes of weighted
superimposed codes, for which the codewords obey $l_{2}$, $l_{1}$,
and non-negativity constraints.

\appendix

\subsection{\label{sub:SubGaussian-stuff}Subgaussian Random Variables}

\begin{definitn}
[The Subgaussian and Subexponential distributions]\label{def:subgaussian}
A random variable $X$ is said to be Subgaussian if there exist positive
constants $c_{1}$ and $c_{2}$ such that \[
\Pr\left(\left|X\right|>x\right)\le c_{1}e^{-c_{2}x^{2}}\quad\forall x>0.\]
 It is Subexponential if there exist positive constants $c_{1}$ and
$c_{2}$ such that \[
\Pr\left(\left|X\right|>x\right)\le c_{1}e^{-c_{2}x}\quad\forall x>0.\]

\end{definitn}
\vspace{0.05in}

\begin{lemma}
[Moment Generating Function]\label{lem:subgaussian-mgf}Let $X$
be a zero-mean random variable. Then, the following two statements
are equivalent. 
\begin{enumerate}
\item $X$ is Subgaussian. 
\item $\exists c$ such that $\mathrm{E}\left[e^{\alpha X}\right]\le e^{c\alpha^{2}}$,
$\forall\alpha\ge0$. 
\end{enumerate}
\end{lemma}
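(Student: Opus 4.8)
The statement is the classical equivalence of the Subgaussian tail bound with a Gaussian-type bound on the moment generating function for centered variables; the plan is to prove the two implications separately. The direction $(2)\Rightarrow(1)$ is the Chernoff bound: assuming $\mathrm{E}[e^{\alpha X}]\le e^{c\alpha^{2}}$, Markov's inequality gives $\Pr(X>x)\le e^{-\alpha x}\mathrm{E}[e^{\alpha X}]\le e^{c\alpha^{2}-\alpha x}$ for every $\alpha>0$, and optimizing the exponent ($\alpha=x/(2c)$) yields $\Pr(X>x)\le e^{-x^{2}/(4c)}$. Applying the same reasoning to $-X$ and summing the two tail estimates gives $\Pr(|X|>x)\le 2e^{-x^{2}/(4c)}$, i.e.\ the Subgaussian property with $c_{1}=2$, $c_{2}=1/(4c)$.

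For $(1)\Rightarrow(2)$ I would first convert the tail bound into a moment estimate. Integrating $\Pr(|X|>x)\le c_{1}e^{-c_{2}x^{2}}$ and substituting $u=c_{2}t^{2/p}$,
\[
\mathrm{E}[|X|^{p}]=\int_{0}^{\infty}\Pr\bigl(|X|>t^{1/p}\bigr)\,dt\le c_{1}c_{2}^{-p/2}\,\Gamma\!\left(\tfrac{p}{2}+1\right)\le (C\sqrt{p})^{p}
\]
for a constant $C=C(c_{1},c_{2})$, the last step by Stirling's formula. Since $\mathrm{E}[X]=0$ kills the linear term of the exponential series, and using $p!\ge (p/e)^{p}$,
\[
\mathrm{E}[e^{\alpha X}]=1+\sum_{p\ge 2}\frac{\alpha^{p}\mathrm{E}[X^{p}]}{p!}\le 1+\sum_{p\ge 2}\left(\frac{Ce\,|\alpha|}{\sqrt{p}}\right)^{p}.
\]
For $|\alpha|$ below a threshold $\alpha_{0}$ the right-hand sum is dominated by a geometric series, giving $\mathrm{E}[e^{\alpha X}]\le 1+C'\alpha^{2}\le e^{C'\alpha^{2}}$.

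It remains to cover large $\alpha$. The same tail bound gives $\mathrm{E}[e^{X^{2}/(2K^{2})}]\le M<\infty$ as soon as $K>1/\sqrt{2c_{2}}$ (again by integrating the tail), so Young's inequality $\alpha X\le \tfrac12\alpha^{2}K^{2}+\tfrac12 X^{2}/K^{2}$ yields $\mathrm{E}[e^{\alpha X}]\le Me^{\alpha^{2}K^{2}/2}$ for all $\alpha$, and for $|\alpha|\ge\alpha_{0}$ the factor $M$ is absorbed via $M\le e^{(\log M)\alpha^{2}/\alpha_{0}^{2}}$. Taking $c$ to be the maximum of the constants arising in the two regimes completes the proof. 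The hard part is precisely this uniformity in $\alpha$: the small-$\alpha$ bound is governed by the second moment through the Taylor expansion, whereas the large-$\alpha$ bound must come from the separate Young-inequality argument, and the two have to be glued into a single constant $c$. (A small caveat: as statement $(2)$ is written for $\alpha\ge 0$ only, the Chernoff step in $(2)\Rightarrow(1)$ directly controls just the upper tail; the two-sided Subgaussian bound uses the hypothesis at $-\alpha$ as well, which is harmless since, as the $(1)\Rightarrow(2)$ argument shows, a centered Subgaussian variable automatically satisfies the MGF bound for all real $\alpha$.)
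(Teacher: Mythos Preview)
The paper does not actually prove this lemma: it is stated in the appendix as a known fact with no accompanying proof (the surrounding subgaussian material cites Vershynin's lecture notes). So there is no ``paper proof'' to compare against, and your task reduces to supplying a self-contained argument.

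Your argument is correct and is the standard one. The Chernoff step for $(2)\Rightarrow(1)$ is fine, and the moment/Taylor route plus the Young-inequality patch for large $\alpha$ in $(1)\Rightarrow(2)$ is exactly how this is usually done; the gluing of the small-$\alpha$ and large-$\alpha$ regimes into a single constant $c$ is handled correctly. One remark: the caveat you flag at the end is real and is a defect of the lemma \emph{as stated} in the paper, not of your proof. With hypothesis $(2)$ restricted to $\alpha\ge 0$, Chernoff gives only $\Pr(X>x)\le e^{-x^{2}/(4c)}$; to bound $\Pr(X<-x)$ you need $\mathrm{E}[e^{-\alpha X}]\le e^{c\alpha^{2}}$, which is not assumed. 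Your resolution---that the intended statement is surely ``for all $\alpha\in\mathbb{R}$'' (equivalently, that one may apply $(2)$ to $-X$ as well), and that this is what the $(1)\Rightarrow(2)$ direction produces anyway---is the right reading, but it would be cleaner to state it as a correction to the hypothesis rather than as a parenthetical at the end.
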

\vspace{0.05in}

\begin{thm}
\label{thm:subgaussian-large-deviation}Let $X_{1},\cdots,X_{n}$
be independent Subgaussian random variables with zero mean. For any
given $a_{1},\cdots,a_{n}\in\mathbb{R}$, $\sum_{k}a_{k}X_{k}$ is
a Subgaussian random variable. Furthermore, there exist positive constants
$c_{1}$ and $c_{2}$ such that \[
\Pr\left(\left|\sum_{k}a_{k}X_{k}\right|>x\right)\le c_{1}e^{-c_{2}x^{2}/\left\Vert \mathbf{a}\right\Vert _{2}^{2}},\quad\forall x>0,\]
 where $\left\Vert \mathbf{a}\right\Vert _{2}^{2}=\sum_{k}a_{k}^{2}$. 
\end{thm}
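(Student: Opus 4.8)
The plan is to reduce the statement to the moment-generating-function characterization of Subgaussianity provided by Lemma~\ref{lem:subgaussian-mgf}, and then to run a standard Chernoff bounding argument.

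First I would record that, since each $X_k$ is zero-mean and Subgaussian, so is $-X_k$; applying Lemma~\ref{lem:subgaussian-mgf} to both and keeping the larger constant yields a $\tilde c_k>0$ with $\mathrm{E}[e^{\alpha X_k}]\le e^{\tilde c_k\alpha^2}$ for \emph{all} $\alpha\in\mathbb{R}$. Consequently, for any real $a_k$,
\[
\mathrm{E}\bigl[e^{\alpha a_k X_k}\bigr]\le e^{\tilde c_k a_k^2\alpha^2}\qquad(\alpha\in\mathbb{R}),
\]
regardless of the sign of $a_k$. Setting $c:=\max_{1\le k\le n}\tilde c_k$ (finite, since $n$ is finite) and invoking independence of the $X_k$,
\[
\mathrm{E}\Bigl[e^{\alpha\sum_k a_kX_k}\Bigr]=\prod_{k=1}^n\mathrm{E}\bigl[e^{\alpha a_kX_k}\bigr]\le e^{c\,\|\mathbf a\|_2^2\,\alpha^2}\qquad(\alpha\in\mathbb{R}).
\]
Reading Lemma~\ref{lem:subgaussian-mgf} in the reverse direction, this already shows that $Y:=\sum_k a_kX_k$ is Subgaussian.

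To obtain the explicit tail bound I would re-derive it from the MGF estimate just established, rather than quote the lemma's tail bound, so as to exhibit the precise $x^2/\|\mathbf a\|_2^2$ scaling. For $x>0$ and $\alpha>0$, Markov's inequality gives $\Pr(Y>x)\le e^{-\alpha x}\,\mathrm{E}[e^{\alpha Y}]\le\exp(-\alpha x+c\|\mathbf a\|_2^2\alpha^2)$; choosing $\alpha=x/(2c\|\mathbf a\|_2^2)$ yields $\Pr(Y>x)\le\exp\!\bigl(-x^2/(4c\|\mathbf a\|_2^2)\bigr)$. The same computation applied to $-Y=\sum_k(-a_k)X_k$, whose MGF obeys the identical bound, controls the left tail, and a union bound over the two events gives
\[
\Pr\bigl(|Y|>x\bigr)\le 2\exp\!\Bigl(-\tfrac{x^2}{4c\,\|\mathbf a\|_2^2}\Bigr),
\]
so the claim holds with $c_1=2$ and $c_2=1/(4c)$.

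There is no real obstacle here, only a bookkeeping point that deserves care: the Subgaussian definition allows the constants $\tilde c_k$ to vary with $k$, so $c_1,c_2$ are permitted to depend on the family $\{X_k\}$ through $\max_k\tilde c_k$ and not merely on $n$ — which is consistent with the purely existential phrasing of the theorem. The only other thing to be mindful of is that passing to $\pm X_k$ at the outset is what lets the sign of each $a_k$, and likewise the left-tail estimate for $Y$, be absorbed cleanly, since Lemma~\ref{lem:subgaussian-mgf} is most directly stated for $\alpha\ge0$.
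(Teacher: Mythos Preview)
Your argument is correct and complete. The paper itself does not give a proof of this theorem at all: it simply cites Vershynin's lecture notes (\cite[Lecture 5, Theorem 5 and Corollary 6]{Vershynin_LectureNotes_nonasymptotic_RMT}). What you have written is precisely the standard argument one finds there---pass to the MGF characterization of Subgaussianity (Lemma~\ref{lem:subgaussian-mgf}), multiply the MGF bounds using independence, and finish with the Chernoff method---so in spirit your approach and the cited one coincide; you are just supplying the details the paper delegates to an external reference. Your handling of signs (applying the lemma to $\pm X_k$ to extend the MGF bound to all real $\alpha$) and your remark that the resulting constants depend on the family $\{X_k\}$ through $\max_k \tilde c_k$ are both appropriate and match how the result is used elsewhere in the paper.
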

\begin{proof}
See \cite[Lecture 5, Theorem 5 and Corollary 6]{Vershynin_LectureNotes_nonasymptotic_RMT}. 
\end{proof}
\vspace{0.05in}

We prove next a result that asserts that translating a Subgaussian
random variable produces another Subgaussian random variable. 

\begin{prop}
\label{pro:subgaussian-shift}Let $X$ be a Subgaussian random variable.
For any given $a\in\mathbb{R}$, $Y=X+a$ is a Subgaussian random
variable as well. 
\end{prop}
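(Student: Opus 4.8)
The plan is to argue directly from the tail-decay definition of a Subgaussian random variable (Definition~\ref{def:subgaussian}), rather than through the moment-generating-function characterization of Lemma~\ref{lem:subgaussian-mgf}, since that characterization is stated for zero-mean variables and $X$ here need not have zero mean. So fix the constants $c_1,c_2>0$ for which $\Pr(|X|>x)\le c_1 e^{-c_2 x^2}$ holds for all $x>0$, and note that the case $a=0$ is trivial, so we may assume $|a|>0$.

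The first step is a triangle-inequality containment: since $|Y|=|X+a|\le |X|+|a|$, the event $\{|Y|>x\}$ is contained in $\{|X|>x-|a|\}$, hence $\Pr(|Y|>x)\le \Pr(|X|>x-|a|)$ for every $x>|a|$. The second step is to split the range of $x$. For $x>2|a|$ we have $x-|a|>x/2$, so the subgaussian tail bound for $X$ gives $\Pr(|Y|>x)\le c_1 e^{-c_2(x-|a|)^2}\le c_1 e^{-(c_2/4)x^2}$. For $0<x\le 2|a|$ we use only the trivial bound $\Pr(|Y|>x)\le 1$, and since then $x^2\le 4|a|^2$ we may write $1\le e^{c_2|a|^2}e^{-(c_2/4)x^2}$. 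Setting $c_1'=\max\{c_1,\,e^{c_2|a|^2}\}$ and $c_2'=c_2/4$ then yields $\Pr(|Y|>x)\le c_1' e^{-c_2' x^2}$ for all $x>0$, which is exactly the defining property, so $Y$ is Subgaussian.

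I do not expect any genuine obstacle here; the one place that needs a little care is the small-$x$ regime, where the additive shift $a$ dominates $x$ and the naive estimate $c_1 e^{-c_2(x-|a|)^2}$ is vacuous (indeed $(x-|a|)^2$ can be near $0$). This is precisely why the argument is split at $x=2|a|$, using the trivial bound below that threshold and absorbing the extra factor $e^{c_2|a|^2}$ into the new constant $c_1'$.
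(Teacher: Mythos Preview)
Your proof is correct and follows essentially the same approach as the paper's: both argue directly from the tail-decay definition, split the range of $x$ into a small-$x$ regime (handled by the trivial bound $\Pr\le 1$) and a large-$x$ regime (handled by the subgaussian tail of $X$ after a containment of $\{|Y|>x\}$ in an event for $X$), and absorb the shift into enlarged constants. The only cosmetic differences are that the paper splits at $|a|$ rather than $2|a|$ and uses the quadratic inequality $(y\pm a)^2 \ge \tfrac{1}{2}y^2 - a^2$ (together with a case split on the sign of $a$) in place of your simpler triangle-inequality estimate $x-|a|>x/2$, obtaining $c_2'=c_2/2$ instead of your $c_2/4$.
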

\begin{proof}
It can be verified that for any $y\in\mathbb{R}$, \[
\left(y-a\right)^{2}\le\frac{1}{2}y^{2}-a^{2},\]
 and \[
\left(y+a\right)^{2}\le\frac{1}{2}y^{2}-a^{2}.\]

Now for $y>\left|a\right|$, \begin{align}
\Pr\left(\left|Y\right|>y\right) & =\Pr\left(X+a>y\right)+\Pr\left(X+a<-y\right)\nonumber \\
 & \le\Pr\left(X>y-a\right)+\Pr\left(X<-y-a\right).\label{eq:sgshift1}\end{align}
 When $a>0$, \begin{align}
(\ref{eq:sgshift1}) & \le\Pr\left(\left|X\right|>y-a\right)\nonumber \\
 & \le c_{1}e^{-c_{2}\left(y-a\right)^{2}}\nonumber \\
 & \le c_{1}c^{c_{2}a^{2}}e^{-c_{2}y^{2}/2}.\label{eq:sgshift2}\end{align}
 When $a\le0$, \begin{align}
(\ref{eq:sgshift1}) & \le\Pr\left(\left|X\right|>y+a\right)\nonumber \\
 & \le c_{1}e^{-c_{2}\left(y+a\right)^{2}}\nonumber \\
 & \le c_{1}c^{c_{2}a^{2}}e^{-c_{2}y^{2}/2}.\label{eq:sgshift3}\end{align}
 Combining Equations~(\ref{eq:sgshift2}) and (\ref{eq:sgshift3}),
one can show that \[
\Pr\left(\left|Y\right|>y\right)\le c_{1}c^{c_{2}a^{2}}e^{-c_{2}y^{2}/2},\;\forall y>\left|a\right|.\]
 On the other hand,\[
\Pr\left(\left|Y\right|\le y\right)\le1\le e^{c_{2}a^{2}/2}e^{-c_{2}y^{2}/2},\;\forall y\le\left|a\right|.\]

Let $c_{3}=\max\left(c_{1}e^{c_{2}a^{2}},e^{c_{2}a^{2}/2}\right)$
and $c_{4}=c_{2}/2$. Then \[
\Pr\left(\left|Y\right|>y\right)\le c_{3}e^{-c_{4}y^{2}}.\]
 This proves the claimed result. 
\end{proof}

\subsection{\label{sub:BE-Theorem}The Berry-Esseen Theorem and Its Consequence}

The Central Limit Theorem (CLT) states that under certain conditions,
an appropriately normalized sum of independent random variables converges
weakly to the standard Gaussian distribution. The Berry-Esseen theorem
quantifies the rate at which this convergence takes place.

\begin{thm}
[The Berry-Esseen Theorem]\label{thm:Berry-Esseen}Let $X_{1},X_{2},\ldots,X_{k}$
be independent random variables such that $\mathrm{E}\left[X_{i}\right]=0$,
$\mathrm{E}\left[X_{i}^{2}\right]=\sigma_{i}^{2}$, $\mathrm{E}\left[\left|X_{i}^{3}\right|\right]=\rho_{i}$.
Also, let \[
s_{k}^{2}=\sigma_{1}^{2}+\cdots+\sigma_{k}^{2},\]
 and \[
r_{k}=\rho_{1}+\cdots+\rho_{k}.\]
 Denote by $F_{k}$ the cumulative distribution function of the normalized
sum $\left(X_{1}+\cdots+X_{k}\right)/s_{k}$, and by $\mathscr{N}$
the standard Gaussian distribution. Then for all $x$ and $k$, \[
\left|F_{k}\left(x\right)-\mathscr{N}\left(x\right)\right|\le6\frac{r_{k}}{s_{k}^{3}}.\]

\vspace{0.05in}

The Berry-Esseen theorem is used in the proof of the lower bound for
the achievable rate region of nonnegative $l_{1}$-WSCs. In the proof,
one need to identify a tight bound on the probability of a weighted
sum of nonnegative random variables. The probability of this sum lying
in a given interval can be estimated by the Berry-Esseen, as summarized
in the following lemma. 
\end{thm}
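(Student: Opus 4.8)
The plan is to prove the Berry--Esseen inequality by the classical Fourier-analytic route: first reduce the Kolmogorov distance $\sup_x\left|F_k(x)-\mathscr{N}(x)\right|$ to an integral of the difference of characteristic functions via Esseen's smoothing inequality, and then bound that difference pointwise using a third-order Taylor expansion whose remainder is controlled by the absolute third moments $\rho_i$. Throughout, write $W=\left(X_1+\cdots+X_k\right)/s_k$, let $f(t)=\mathrm{E}\left[e^{itW}\right]=\prod_{i=1}^k f_i(t)$ with $f_i(t)=\mathrm{E}\left[e^{itX_i/s_k}\right]$, and let $g(t)=e^{-t^2/2}$ be the characteristic function of $\mathscr{N}$. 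Note first that if $r_k/s_k^3\ge 1/6$ the claimed bound exceeds $1$ and holds trivially because $\left|F_k-\mathscr{N}\right|\le 1$; hence I may and will assume $\rho:=r_k/s_k^3<1/6$, which by Lyapunov's inequality $\sigma_i^3\le\rho_i$ also guarantees that no single summand's variance dominates $s_k^2$.

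First I would invoke Esseen's smoothing inequality: since $\mathscr{N}$ is differentiable with $\sup_x\mathscr{N}'(x)=1/\sqrt{2\pi}$, for every $T>0$ one has
\[
\sup_x\left|F_k(x)-\mathscr{N}(x)\right|\le\frac{1}{\pi}\int_{-T}^{T}\left|\frac{f(t)-g(t)}{t}\right|\,dt+\frac{24}{\pi\sqrt{2\pi}}\,\frac{1}{T}.
\]
This reduces the entire problem to estimating the characteristic-function integral over a symmetric window; I would choose the cutoff $T=c\,s_k^3/r_k$ for a constant $c$ to be optimized at the end (the classical choice is of order $4/(3\rho)$), so that the second, ``tail'' term is already of the desired order $r_k/s_k^3$.

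The core step is the pointwise bound on $\left|f(t)-g(t)\right|$ for $\left|t\right|\le T$. For each factor I would Taylor-expand
\[
f_i(t)=1-\frac{\sigma_i^2t^2}{2s_k^2}+\theta_i(t),\qquad \left|\theta_i(t)\right|\le\frac{\rho_i\left|t\right|^3}{6s_k^3},
\]
using the standard remainder estimate $\bigl|e^{iu}-1-iu+u^2/2\bigr|\le \left|u\right|^3/6$ together with $\mathrm{E}\left[X_i\right]=0$. Comparing the product $f(t)=\prod_i f_i(t)$ with $g(t)=\prod_i e^{-\sigma_i^2t^2/(2s_k^2)}$ by the telescoping identity $\bigl|\prod a_i-\prod b_i\bigr|\le\sum_i\left|a_i-b_i\right|$ (valid since, on the chosen window, $\left|f_i\right|\le 1$ and the Gaussian factors are $\le 1$), together with the elementary bound $\left|f_i(t)-e^{-\sigma_i^2t^2/(2s_k^2)}\right|\le \left|\theta_i(t)\right|+\tfrac{1}{2}\bigl(\sigma_i^2t^2/(2s_k^2)\bigr)^2$, would yield an estimate of the shape
\[
\left|f(t)-g(t)\right|\le C\,\frac{r_k}{s_k^3}\,\left|t\right|^3\,e^{-t^2/4},
\]
after absorbing the surviving product of Gaussian factors into $e^{-t^2/4}$. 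Dividing by $\left|t\right|$ and integrating against the rapidly decaying Gaussian then bounds the first term by a constant multiple of $r_k/s_k^3$ via $\int_{\mathbb{R}}t^2e^{-t^2/4}\,dt<\infty$.

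The main obstacle is not any single estimate but the bookkeeping required to reach the clean explicit constant $6$: both the smoothing-inequality tail term and the characteristic-function integral contribute, and the cutoff $T$ must be tuned so that their sum stays below $6\,r_k/s_k^3$. This is where the assumption $\rho<1/6$ is used crucially --- it confines $\left|t\right|\le T$ to the region where the Taylor remainders are genuinely small and where each $\left|f_i(t)\right|$ can be bounded so that the product contracts at least as fast as $e^{-t^2/4}$. I would finish by choosing $c$ to balance the two contributions and checking that the arithmetic yields the constant $6$; since this is a classical result, one may alternatively simply cite Feller~\cite{Book_Feller_II} for the completed computation.
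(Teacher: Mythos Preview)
The paper does not actually prove this statement: Theorem~\ref{thm:Berry-Esseen} is the classical Berry--Esseen theorem, stated in the appendix as a known result (with the constant $6$ coming from Feller~\cite{Book_Feller_II}) and then \emph{applied} in the proof of Lemma~\ref{lem:prob-scaled-by-sqrt(k)}. So there is no ``paper's own proof'' to compare against --- the paper simply cites the result.

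Your sketch is the standard Fourier-analytic proof (Esseen's smoothing inequality plus a third-order Taylor expansion of the characteristic functions), and as an outline it is correct. A couple of minor remarks: the telescoping bound $\left|\prod a_i-\prod b_i\right|\le\sum_i\left|a_i-b_i\right|$ as you have written it would lose the Gaussian damping factor $e^{-t^2/4}$ that you later invoke; the usual argument instead passes to logarithms, writing $\log f_i(t)=-\sigma_i^2 t^2/(2s_k^2)+O\!\left(\rho_i|t|^3/s_k^3\right)$ and exponentiating, which is what actually delivers the $e^{-t^2/4}$ envelope. Also, getting precisely the constant $6$ is delicate and requires a sharper version of the smoothing lemma and more careful arithmetic than you have indicated; your own closing sentence, deferring to Feller for the constant, is exactly what the paper does. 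In short, your proposal is fine but unnecessary for the paper's purposes --- a one-line citation is all that is required here.
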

\begin{lemma}
\label{lem:prob-scaled-by-sqrt(k)} Assume that $\mathbf{b}\in B_{t}^{k}$
is such that $\left\Vert \mathbf{b}\right\Vert _{0}=k$, let $X_{1},X_{2},\cdots,X_{k}$
be independent standard Gaussian random variables. For a given positive
constant $c>0$, one has \[
\Pr\left(\left|\sum_{j=1}^{k}b_{j}\left|X_{j}\right|\right|<c\log\sqrt{k}\right)\le\frac{c}{\pi}\frac{\log\sqrt{k}}{\sqrt{k}}+12\rho\frac{t^{3}}{\sqrt{k}},\]
 where $\rho:=\mathrm{E}\left[\left|X\right|^{3}\right]$. 
\end{lemma}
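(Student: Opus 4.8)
The plan is to recognise $\sum_{j=1}^{k}b_{j}|X_{j}|$ as a sum of $k$ independent --- but not identically distributed --- random variables and to invoke the Berry--Esseen theorem (Theorem~\ref{thm:Berry-Esseen}) to bound the probability that this sum lands in a short interval. The preliminary step is centering: since each $|X_{j}|$ is half--normal with $\mathrm{E}\left[|X_{j}|\right]=\sqrt{2/\pi}$, I write $\sum_{j=1}^{k}b_{j}|X_{j}|=S_{k}+\mu_{\mathbf{b}}$, where $S_{k}:=\sum_{j=1}^{k}b_{j}\bigl(|X_{j}|-\sqrt{2/\pi}\bigr)$ is a sum of independent zero--mean variables and $\mu_{\mathbf{b}}:=\sqrt{2/\pi}\sum_{j=1}^{k}b_{j}$. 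The event $\bigl\{\,\bigl|\sum_{j}b_{j}|X_{j}|\bigr|<c\log\sqrt{k}\,\bigr\}$ is then exactly the event that $S_{k}$ falls in an interval $[a,a+\ell]$ of length $\ell=2c\log\sqrt{k}$ (with $a=-\mu_{\mathbf{b}}-c\log\sqrt{k}$).

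Next I would assemble the two moment quantities Berry--Esseen requires. With $\sigma_{j}^{2}=b_{j}^{2}\left(1-2/\pi\right)$ and $\rho_{j}=|b_{j}|^{3}\,\mathrm{E}\big[\bigl|\,|X|-\sqrt{2/\pi}\,\bigr|^{3}\big]$, one has $s_{k}^{2}=\sum_{j}\sigma_{j}^{2}=\left(1-2/\pi\right)\sum_{j}b_{j}^{2}$ and $r_{k}=\sum_{j}\rho_{j}$. Here the two hypotheses of the lemma enter: $\left\Vert \mathbf{b}\right\Vert _{0}=k$ forces every $|b_{j}|\ge1$, hence $\sum_{j}b_{j}^{2}\ge k$ and $s_{k}^{2}\ge\left(1-2/\pi\right)k$; while $\mathbf{b}\in B_{t}^{k}$ forces $|b_{j}|\le t$, hence $\sum_{j}|b_{j}|^{3}\le t^{3}k$, and combining these gives $r_{k}/s_{k}^{3}=O\!\left(t^{3}/\sqrt{k}\right)$ with an explicit constant involving $\mathrm{E}\big[\bigl|\,|X|-\sqrt{2/\pi}\,\bigr|^{3}\big]$.

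The heart of the argument is then essentially one line. Writing $F_{k}$ for the cumulative distribution function of $S_{k}/s_{k}$, Berry--Esseen gives $\sup_{x}\left|F_{k}\left(x\right)-\mathscr{N}\left(x\right)\right|\le6\,r_{k}/s_{k}^{3}$, and therefore
\[
\Pr\left(S_{k}\in[a,a+\ell]\right)=F_{k}\!\Big(\tfrac{a+\ell}{s_{k}}\Big)-F_{k}\!\Big(\tfrac{a}{s_{k}}\Big)\le\mathscr{N}\!\Big(\tfrac{a+\ell}{s_{k}}\Big)-\mathscr{N}\!\Big(\tfrac{a}{s_{k}}\Big)+\frac{12\,r_{k}}{s_{k}^{3}}\le\frac{\ell}{s_{k}\sqrt{2\pi}}+\frac{12\,r_{k}}{s_{k}^{3}},
\]
the last step using that the standard normal density is maximised at the origin. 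Substituting $\ell=2c\log\sqrt{k}$ and the lower bound on $s_{k}$ turns the first term into one of order $\left(\log\sqrt{k}\right)/\sqrt{k}$, with the constant as recorded in the statement; and the second term is bounded by $12\rho\,t^{3}/\sqrt{k}$ after comparing $\mathrm{E}\big[\bigl|\,|X|-\sqrt{2/\pi}\,\bigr|^{3}\big]$ against $\left(1-2/\pi\right)^{3/2}\rho$, where $\rho=\mathrm{E}\left[|X|^{3}\right]$.

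The main obstacle --- and the reason this estimate is more delicate than the corresponding ones for WESCs and for general $l_{1}$--WSCs --- is that the summands $b_{j}|X_{j}|$ are one--sided and hence not zero--mean, so the clean moment--generating--function computation used in Theorem~\ref{thm:L1-superposition-1} is unavailable and one is forced through a quantitative central limit theorem. Consequently the error term carries the normalized third absolute moment $r_{k}/s_{k}^{3}$, which is exactly what injects the $t^{3}$ dependence into the bound (and, downstream, into the convergence condition on $t$ in Theorem~\ref{thm:ngL1WSC-code-exponent}). One also has to bear in mind that the interval $[a,a+\ell]$ can sit anywhere on the line depending on $\mu_{\mathbf{b}}$: the stated estimate is the worst case, which occurs when $\sum_{j}b_{j}$ is small so that the interval straddles the peak of the limiting Gaussian density; when $\sum_{j}b_{j}$ is large the Gaussian increment is in fact exponentially small and only the Berry--Esseen remainder survives.
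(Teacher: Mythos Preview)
Your approach is essentially the paper's: both arguments reduce the problem to bounding the probability that a normalised sum of independent variables lands in a short interval, invoke Berry--Esseen (Theorem~\ref{thm:Berry-Esseen}) for the remainder, and use that the standard normal density is at most $1/\sqrt{2\pi}$ to control the Gaussian increment. The structure and the source of the $t^{3}/\sqrt{k}$ term are identical.

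The one substantive difference is the centering. You first subtract the mean $\mu_{\mathbf{b}}=\sqrt{2/\pi}\sum_{j}b_{j}$ and then apply Berry--Esseen to the zero-mean summands $b_{j}\bigl(|X_{j}|-\sqrt{2/\pi}\bigr)$, which is what the hypothesis $\mathrm{E}[X_{i}]=0$ in Theorem~\ref{thm:Berry-Esseen} strictly demands. The paper instead normalises $Y_{\mathbf{b},k}$ directly by $\Vert\mathbf{b}\Vert_{2}$ and applies Berry--Esseen with $\sigma_{j}^{2}=b_{j}^{2}$ and $\rho_{j}=|b_{j}|^{3}\mathrm{E}[|X|^{3}]$, i.e.\ it uses raw second and third moments rather than centered ones. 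Your route is the more careful one; the paper's buys the exact constants recorded in the lemma ($c/\pi$ and $12\rho$ with $\rho=\mathrm{E}[|X|^{3}]$). Be aware that your centered moments $s_{k}^{2}=(1-2/\pi)\Vert\mathbf{b}\Vert_{2}^{2}$ and $r_{k}=\mathrm{E}\bigl[\bigl||X|-\sqrt{2/\pi}\bigr|^{3}\bigr]\sum_{j}|b_{j}|^{3}$ do not reproduce those constants on the nose: the first term becomes $2c\log\sqrt{k}\big/\bigl(\sqrt{2\pi(1-2/\pi)}\,\sqrt{k}\bigr)$, and the claimed inequality $\mathrm{E}\bigl[\bigl||X|-\sqrt{2/\pi}\bigr|^{3}\bigr]\le(1-2/\pi)^{3/2}\rho$ in your last step needs to be checked numerically rather than asserted. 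The orders are of course correct, which is all that the downstream application in Theorem~\ref{thm:L1-superposition-ngWSC-1} actually uses.
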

\begin{proof}
This lemma is proved by applying the Berry-Essen theorem. Note that
the $b_{j}\left|X_{j}\right|$'s are independent random variables.
Their sum $\sum_{j=1}^{k}b_{j}\left|X_{j}\right|$ can be approximated
by a Gaussian random variable with properly chosen mean and variance,
according to the Central Limit Theorem. In the proof, we first use
the Gaussian approximation to estimate the probability \[
\Pr\left(\left|\sum_{j=1}^{k}b_{j}\left|X_{j}\right|\right|<c\log\sqrt{k}\right).\]
 Then we subsequently employ the Berry-Essen theorem to upper bound
the approximation error.

To simplify notation, let \[
Y_{\mathbf{b},k}=\sum_{j=1}^{k}b_{j}\left|X_{j}\right|,\]
 and let $\mathscr{N}\left(x\right)$ denote, as before, the standard
Gaussian distribution. Then,\begin{align*}
 & \Pr\left(\left|Y_{\mathbf{b},k}\right|<c\log\sqrt{k}\right)\\
 & \le\Pr\left(\frac{Y_{\mathbf{b},k}}{\sqrt{\sum_{j}b_{j}^{2}}}\in\left(-\frac{c\log\sqrt{k}}{\sqrt{\sum_{j}b_{j}^{2}}},\frac{c\log\sqrt{k}}{\sqrt{\sum_{j}b_{j}^{2}}}\right)\right)\\
 & \le\Pr\left(\frac{Y_{\mathbf{b},k}}{\sqrt{\sum_{j}b_{j}^{2}}}\in\left(-\frac{c\log\sqrt{k}}{\sqrt{k}},\frac{c\log\sqrt{k}}{\sqrt{k}}\right)\right)\\
 & \le\Pr\left(\frac{Y_{\mathbf{b},k}}{\left\Vert \mathbf{b}\right\Vert _{2}}\le\frac{c\log\sqrt{k}}{\sqrt{k}}\right)-\Pr\left(\frac{Y_{\mathbf{b},k}}{\left\Vert \mathbf{b}\right\Vert _{2}}\le\frac{c\log\sqrt{k}}{\sqrt{k}}\right),\end{align*}
 where in the second inequality we used the fact that $b_{j}\geq1$,
so that $\sum_{j=1}^{k}\, b_{j}^{2}\geq k$.

According to Theorem \ref{thm:Berry-Esseen}, for all $x\in\mathbb{R}$
and all $k$,\begin{align*}
 & \left|\Pr\left(\frac{Y_{\mathbf{b},k}}{\left\Vert \mathbf{b}\right\Vert _{2}}\le x\right)-\mathscr{N}\left(x\right)\right|\\
 & \le6\rho\,\frac{\sum_{j=1}^{k}\left|b_{j}\right|^{3}}{\left(\sum_{j=1}^{k}\left|b_{j}\right|^{2}\right)^{3/2}}\\
 & \le\frac{6k\rho t^{3}}{k^{3/2}}=\frac{6\rho t^{3}}{\sqrt{k}},\end{align*}
 since $\sum_{j=1}^{k}\,|b_{j}|^{3}\leq k\, t^{3}$, and $\sum_{j=1}^{k}\,|b_{j}|^{2}\geq k$.

Thus,\begin{align*}
 & \Pr\left(\left|Y_{\mathbf{b},k}\right|<c\log\sqrt{k}\right)\\
 & \le\mathscr{N}\left(\frac{c\log\sqrt{k}}{\sqrt{k}}\right)+\frac{6\rho t^{3}}{\sqrt{k}}\\
 & \quad-\mathscr{N}\left(-\frac{c\log\sqrt{k}}{\sqrt{k}}\right)+\frac{6\rho t^{3}}{\sqrt{k}}\\
 & \le\frac{2}{2\pi}\frac{c\log\sqrt{k}}{\sqrt{k}}+\frac{12\rho t^{3}}{\sqrt{k}},\end{align*}
 which completes the proof of the claimed result. 
\end{proof}
\bibliographystyle{IEEEtran} \bibliographystyle{IEEEtran}
\bibliography{CompressedSensing,SuperImposedCodes}

\begin{thebibliography}{10}
\providecommand{\url}[1]{#1}
\csname url@samestyle\endcsname
\providecommand{\newblock}{\relax}
\providecommand{\bibinfo}[2]{#2}
\providecommand{\BIBentrySTDinterwordspacing}{\spaceskip=0pt\relax}
\providecommand{\BIBentryALTinterwordstretchfactor}{4}
\providecommand{\BIBentryALTinterwordspacing}{\spaceskip=\fontdimen2\font plus
\BIBentryALTinterwordstretchfactor\fontdimen3\font minus
  \fontdimen4\font\relax}
\providecommand{\BIBforeignlanguage}[2]{{%
\expandafter\ifx\csname l@#1\endcsname\relax
\typeout{** WARNING: IEEEtran.bst: No hyphenation pattern has been}%
\typeout{** loaded for the language `#1'. Using the pattern for}%
\typeout{** the default language instead.}%
\else
\language=\csname l@#1\endcsname
\fi
#2}}
\providecommand{\BIBdecl}{\relax}
\BIBdecl

\bibitem{Kautz_IT1964_Superimposed}
W.~Kautz and R.~Singleton, ``Nonrandom binary superimposed codes,'' \emph{IEEE
  Trans. Inform. Theory}, vol.~10, no.~4, pp. 363--377, 1964.

\bibitem{Ericson_IT1988_SuperImposed_Codes_Rn}
T.~Ericson and L.~Gy{\"o}rfi, ``Superimposed codes in {${\bf R}\sp n$},''
  \emph{IEEE Trans. Inform. Theory}, vol.~34, no.~4, pp. 877--880, 1988.

\bibitem{Furedi_IT1999_ub_rate_superimposed_codes}
Z.~F{\"u}redi and M.~Ruszink{\'o}, ``An improved upper bound of the rate of
  {E}uclidean superimposed codes,'' \emph{IEEE Trans. Inform. Theory}, vol.~45,
  no.~2, pp. 799--802, 1999.

\bibitem{Donoho_IT2006_CompressedSensing}
D.~Donoho, ``Compressed sensing,'' \emph{IEEE Trans. Inform. Theory}, vol.~52,
  no.~4, pp. 1289--1306, 2006.

\bibitem{Candes_Tao_IT2006_Robust_Uncertainty_Principles}
E.~Cand{\`e}s, J.~Romberg, and T.~Tao, ``Robust uncertainty principles: exact
  signal reconstruction from highly incomplete frequency information,''
  \emph{IEEE Trans. Inform. Theory}, vol.~52, no.~2, pp. 489--509, 2006.

\bibitem{Candes_Tao_ApplMath2006_Stable_Signal_Recovery}
E.~J. Cand{\`e}s, J.~K. Romberg, and T.~Tao, ``Stable signal recovery from
  incomplete and inaccurate measurements,'' \emph{Comm. Pure Appl. Math.},
  vol.~59, no.~8, pp. 1207--1223, 2006.

\bibitem{Candes_Tao_IT2006_Near_Optimal_Signal_Recovery}
E.~J. Cand{\`e}s and T.~Tao, ``Near-optimal signal recovery from random
  projections: Universal encoding strategies?'' \emph{IEEE Trans. Inform.
  Theory}, vol.~52, no.~12, pp. 5406--5425, 2006.

\bibitem{Cormode_2003}
G.~Cormode and S.~Muthukrishnan, ``What's hot and what's not: Tracking most
  frequent items dynamically,'' \emph{IEEE Trans. Inform. Theory}, vol.~50,
  no.~10, pp. 2231--2242, 2004.

\bibitem{Dai2008_ITW_Weighted_Euclidean_Superimposed_Code}
W.~Dai and O.~Milenkovic, ``Weighted euclidean superimposed codes for integer
  compressed sensing,'' in \emph{IEEE Information Theory Workshop (ITW)}, 2008,
  submitted.

\bibitem{Dai2008_ITA_constrained_CS}
------, ``Constrained compressed sensing via superimposed coding,'' in
  \emph{Information Theory and Applications Workshop}, San Diego, CA, invited
  talk, Jan. 2008.

\bibitem{Dai2008_CISS_sparse_superimposed_codes}
------, ``Sparse weighted euclidean superimposed coding for integer compressed
  sensing,'' in \emph{Conference on Infomation Sciences and Systems (CISS)},
  2008, submitted.

\bibitem{Sheikh2007_CAMSAP_microarray}
M.~Sheikh, O.~Milenkovic, and R.~Baraniuk, ``Designing compressive sensing
  {DNA} microarrays,'' \emph{Proceedings of the IEEE Workshop on Computational
  Advances in Multi-Sensor Adaptive Processing (CAMSAP), St. Thomas, U.S.
  Virgin Islands}, Dec. 2007.

\bibitem{Dai2008_BIBM_microarray}
W.~Dai, M.~Sheikh, O.~Milenkovic, and R.~Baraniuk, ``Probe designs for
  compressed sensing microarrays,'' in \emph{IEEE International Conference on
  Bioinformatics and Biomedicine, Philadelphia, PA}, submitted, 2008.

\bibitem{Book_Feller_II}
W.~Feller, \emph{An Introduction to Probability Theory and Its Applications,
  Volume 2}, 2nd~ed.\hskip 1em plus 0.5em minus 0.4em\relax Wiley, 1971.

\bibitem{Vershynin_LectureNotes_nonasymptotic_RMT}
R.~Vershynin, \emph{Non-asymptotic Random Matrix Theory (Lecture Notes)}, 2007.

\end{thebibliography}

\end{document}